\numberwithin{equation}{section}
\newcommand{\beqa}{\begin{eqnarray}}
\newcommand{\eeqa}{\end{eqnarray}}
\newtheorem{theorem}{Theorem}[section]
\newtheorem{proposition}{Proposition}[section]
\newtheorem{corollary}{Corollary}[section]
\theoremstyle{remark}
\newtheorem{rem}{Remark}[section]}
\newtheorem{identity}{Identity}
\newcommand{\tr}{\operatorname{tr}}
\newcommand{\End}{\operatorname{End}}
\newcommand{\bra}[1]{\langle\,#1\,|}
\newcommand{\ket}[1]{|\,#1\,\rangle}
\newcommand{\moy}[1]{\langle\,#1\,\rangle}
\def\sul{\sum\limits}
\def\pl{\prod\limits}
\newcommand{\la}{\lambda}
\begin{document}

\begin{flushright}
LPENSL-TH-07/18
\end{flushright}

\bigskip

\bigskip

\begin{center}
\textbf{{\Large The open XXZ  spin chain in the SoV framework:}}

\textbf{{\Large scalar product of separate states}}

\vspace{25pt}

\begin{large}

{\bf N.~Kitanine}\footnote[1]{IMB UMR5584, CNRS, Univ. Bourgogne Franche-Comt\'e, F-21000 Dijon, France; Nikolai.Kitanine@u-bourgogne.fr},~~
{\bf J.~M.~Maillet}\footnote[2]{Univ Lyon, Ens de Lyon, Univ Claude Bernard, CNRS, 
Laboratoire de Physique, F-69342 Lyon, France;
 maillet@ens-lyon.fr},~~
{\bf G. Niccoli}\footnote[3]{Univ Lyon, Ens de Lyon, Univ Claude Bernard, CNRS, 
Laboratoire de Physique, F-69342 Lyon, France; giuliano.niccoli@ens-lyon.fr},

{\bf V.~Terras}\footnote[4]{LPTMS, CNRS, Univ. Paris-Sud, Universit\'e Paris-Saclay, 91405 Orsay, France;\\ veronique.terras@lptms.u-psud.fr}
\end{large}

\vspace{45pt}


\end{center}


\begin{abstract}
In our previous paper \cite{KitMNT17}  we have obtained,  for the XXX spin-1/2 Heisenberg open chain, new  determinant representations for the scalar products of separate states  in the quantum  separation of variables (SoV) framework. 
 In this article we perform a similar study in a more complicated case: the XXZ open spin-1/2 chain with the most general integrable boundary terms.  
To solve this model by means of SoV we use an algebraic Vertex-IRF gauge transformation reducing one of the boundary $K$-matrices to a diagonal form.  
As usual within the SoV approach, the scalar products of separate states are computed in terms of dressed Vandermonde determinants having an intricate dependency on the inhomogeneity parameters. We show that these determinants can be transformed into different ones in which the homogeneous limit can be taken straightforwardly. These representations generalize in a non-trivial manner to the trigonometric case the expressions found previously in the rational case. 
 We also show that generically all scalar products can be expressed in a form which is similar to --- although more cumbersome than ---  the well-known Slavnov determinant representation for the scalar products of the Bethe states of the periodic chain. Considering a special choice of the boundary parameters relevant in the thermodynamic limit to describe the half infinite chain with a general boundary, we particularize these representations to the case of one of the two states being an eigenstate. We obtain simplified formulas that should be of direct use to compute the form factors and correlation functions of this model.
\end{abstract}

\newpage



\newpage

\section{Introduction}
 The resolution of quantum lattice models with the most general boundary conditions preserving integrability properties has become along the years a subject of intense studies \cite{Gau71a,Bar79,Bar80,Bar80a,Sch85,AlcBBBQ87,Che84,Skl85,Skl88,Bar88,KulS91,MezNR90,MezN91,PasS90,BatMNR90,deVG93,deVG94,GhoZ94, JimKKKM95, JimKKMW95, KitKMNST07,KitKMNST08,ArtMN95,Doi03,Doi06,CraR12, Nep04, BasK07, Nep02, NepR03, MurNS06, Gal08, YanNZ06, FraNR07, CraRS10, FilK10,FilK11,Fil11,CaoYSW13b, XuHYCYS16, DerKM03b, FraSW08, FraGSW11, AmiFOR07, AmiFOW10, Nic12, FalN14, FalKN14, KitMN14, FanHSY96, CaoLSW03, YanZ07, ArnCDFR05,ArnCDFR06, RagS07, BelR09,BasB13,BasB17,BelC13,BelCR13,Bel15,BelP15,AvaBGP15, Zho96, Zho97,AsaS96, GuaWY97,Gua00, ShiW97}. This interest is due, in particular, to their potential relevance for the investigation of the non-equilibrium and transport properties of quantum integrable systems \cite{KinWW06,HofLFSS07,BloDZ08,TroCFMSEB12,SchHRWBBBDMRR12,EisFG15,RonSBHLMHBS13,Pro11}.  
 Among several different approaches to this problem, the separation of variables (SoV) method \cite{Skl85,Skl85a,Skl92,Skl95}  has proven to produce cutting edge answers to give the possibility to construct  the full set of eigenvalues and eigenstates of the Hamiltonians, while providing the first steps towards the computation of the form factors and correlation functions \cite{BabBS96,BabBS97,Smi98a,DerKM03,BytT06,vonGIPS06,vonGIPS09,vonGIPST07, vonGIPST08, NicT10,Nic10a,GroN12,GroMN12,GroMN14,Nic13,Nic13b,Nic13a,NicT15,NicT16,LevNT16,KitMNT16,KitMNT17,MaiNP17,MaiNP18}.

 We would like to mention that numerous  other methods were used in this context, including algebraic Bethe ansatz and its modifications, analytic Bethe ansatz, Baxter $T$-$Q$ equation and $q$-Onsager algebras. 
 In our recent articles \cite{MaiNP17,KitMNT17,KitMNT16}
  we extensively discussed their connection to our  approach and their possible relevance for the computation of form factors and correlation functions.

Looking for the exact description of the dynamics of quantum integrable lattice models, the determinant representations for the scalar products of states, including eigenstates  of the Hamiltonian, plays a prominent role, as it was clearly demonstrated  in the Algebraic Bethe ansatz framework, see e.g., \cite{KitMT99,IzeKMT99,KitMT00} using the determinant formula \cite{Sla89,KitMT99}. It was shown there that they are expected to provide, together with the necessary resolution of the quantum inverse scattering problem \cite{KitMT99,MaiT00}, the key ingredients to compute form factors of local operators \cite{KitMT99}. These results in their turn lead to efficient expressions for the correlation functions \cite{KitMT00,KitMST05a,KitMST05b} that can be computed either numerically from their exact formulas \cite{CauM05,CauHM05,PerSCHMWA06,PerSCHMWA07,Mai07}, leading to direct predictions for experimental measurements, or from refined analysis, in particular of the form factor series \cite{KitMST05a,KitMST05b}, to their asymptotic behavior \cite{KitKMST09b,KozMS11a,KozMS11b,KozT11,KitKMST11b,KitKMST11a,KitKMST12,KitKMT14} leading to an exact derivation of the CFT predictions \cite{KozM15}. 

In the SoV framework it was shown already in  \cite{GroMN12} that determinant representations for scalar products of separate states, which include all eigenstates, can be obtained rather straightforwardly. They are given in terms of Vandermonde determinants dressed by the separate wave functions of the considered states. Hence for eigenstates these scalar products directly involve the $Q$-functions solving the Baxter $T$-$Q$ relation. It should be noted however that to apply the SoV method one needs to construct a separate basis. In Sklyanin's approach it is identified with the eigenstate basis of a distinguished operator of the Yang-Baxter or reflection algebra that is diagonalizable with simple spectrum. This is made possible by considering integrable lattice models with generic representations in each lattice site, meaning the presence of inhomogeneity parameters in generic positions. 

However, to be fully successful, the program of computing scalar products, form factors and then correlation functions \cite{KitMST05a,KitMST05b,KitKMST09b,KozMS11a,KozMS11b,KozT11,KitKMST11b,KitKMST11a,KitKMST12,KitKMT14}, needs determinant formulas for which the homogeneous and then the large volume limit can be tackled explicitly as in \cite{KitKMST11a}. Unfortunately, it became clear from the results of  \cite{GroMN12}, that although the obtained dressed Vandermonde determinants were easy to derive and appear quite universally from the SoV method, their homogeneous limit was very involved. This is due to the fact that the inhomogeneity parameters are somehow spread non-locally in the whole determinant thus preventing a clear extraction of the homogeneous limit. This motivated us to dedicate our main efforts to a resolution of this problem that appeared to be a crucial point for the applicability of the SoV method itself. Indeed, while SoV has proven to be efficient quite widely for determining the full spectrum of integrable quantum systems \cite{Skl85,Skl85a,Skl92,Skl95,BabBS96,BabBS97,Smi98,DerKM03,BytT06,vonGIPS06,vonGIPS09,vonGIPST07,vonGIPST08,NicT10,Nic10a,GroN12,GroMN12,NieWF09,Nic13,Nic13b,Nic13a,NicT16,NicT15,LevNT16,KitMNT16,KitMNT17,MaiNP17,MaiNP18}, even in cases where other methods fail, its use for computing the dynamics, in particular in the infinite volume limit,  was strongly dependent of the resolution of this homogeneous limit question. 

It turns out that we have been able recently to find an answer to this question for the Heisenberg XXX spin-1/2 chain both in the closed \cite{KitMNT16} and open \cite{KitMNT17} cases.  It is based on a rather beautiful correspondence between Vandermonde, Izergin \cite{Ize87,Tsu98} and generalized Slavnov type \cite{Sla89,FodW12} determinants. This approach is similar to the one used for the study of the semi-classical limit of scalar products \cite{Kos12,Kos12a,KosM12}. Using such a rewriting the homogeneous limit can be taken quite easily. What is more, these alternative determinants, written in terms of completely different matrices, sometimes of different dimensions, can be obtained using purely algebraic identities. Hence, it gave hopes that similar rewritings can be derived for other models of interest, the XXZ spin-1/2 Heisenberg models with general boundaries being one of the first on such a wish list. The aim of the present article is to present the corresponding results for this model. As such, it can be considered as the continuation of \cite{KitMNT17}. It should be stressed however that going from rational to trigonometric case is not as trivial as one could think at first sight. The reduction of the global symmetry together with the change of the basis of functions involved  have direct consequences on the transformations one can use to solve the problem. Hence part of the methods developed in \cite{KitMNT17} have to be adapted and generalized. Nevertheless the results we obtain in the present article are quite similar to the rational case, although slightly more complex. In particular the determinant formulas relevant for describing an half-infinite chain with a general boundary can be tackled very nicely.
  
This article is organized as follows. In Section~\ref{sec-model} we recall the definition of the Heisenberg XXZ spin-1/2 open chain with the most general boundary conditions preserving integrability together with the associated reflection algebra and associated boundary $K$-matrices. In Section~\ref{sec-gauged} we use the vertex-IRF gauge transformation \cite{Bax73a,FadT79,FanHSY96} in its algebraic form~\cite{FelV96b} to put into correspondence this model having the most general $K$-matrices with a model of SOS type for which one of the $K$-matrices has a diagonal form. This connection to the generators of the SOS monodromy matrix allows us to compute for the first time explicitly the normalization of the elements of the left and right SoV basis, as described in Appendix~\ref{Normalization-Factor}. In Section~\ref{sec-diag} and in the first part of Section~\ref{sec-sp}, we recall and present in a uniformized manner results previously derived in \cite{Nic12,FalKN14,KitMN14}. More precisely, in Section~\ref{sec-diag} we recall the implementation of the separation of variables method for this model and characterize the complete spectrum, i.e., eigenvalues and eigenstates, of the transfer matrix. The result is also written in terms of a functional $T$-$Q$ equation with an inhomogeneous term, and we specify the constraints on the boundary parameters for which this inhomogeneous term vanishes. We also describe the set of separate states, namely states having a separate wave function in the SoV basis, that include all eigenstates. In Section~\ref{sec-sp} we first recall the scalar products of the separate states in terms of dressed Vandermonde determinants, as it is usual in integrable systems solved by SoV. Then we give the main result of this article: the rewriting of these scalar products in terms of new dressed  Vandermonde determinants with one modified column for which the homogeneous limit can be taken easily, and ultimately in terms of some generalized version of the Slavnov determinant \cite{Sla89,FodW12}. All technical details and proofs are gathered in a set of five appendices. 
\section{The open spin-1/2 XXZ quantum chain}
\label{sec-model}

The Hamiltonian of the open spin-1/2 quantum XXZ chain with the most general non-diagonal integrable boundary terms can be written in the following form:
\begin{align}
    H &=\sum_{n=1}^{N-1} \Big[ \sigma _n^{x} \sigma _{n+1}^{x}
    +\sigma_n^{y}\sigma _{n+1}^{y}
    +\cosh \eta \, \sigma _n^{z} \sigma _{n+1}^{z}\Big]
    \nonumber\\
    &\hspace{2cm}
    +\frac{\sinh \eta }{\sinh \varsigma _{-}} \Big[ \sigma _{1}^{z}\cosh \varsigma_{-}
    +2\kappa _{-} \big(\sigma _{1}^{x}\cosh \tau _{-}
    +i\sigma _{1}^{y}\sinh \tau_{-} \big)\Big] 
     \nonumber\\
    &\hspace{2cm}
    +\frac{\sinh \eta }{\sinh \varsigma _{+}} \Big[\sigma _{\mathsf{N}}^{z}\cosh \varsigma_{+}
    +2\kappa _{+} \big(\sigma _{N}^{x}\cosh \tau _{+}+i\sigma _{N}^{y}\sinh \tau _{+}\big)\Big]. 
     \label{H-XXZ-Non-D}
\end{align}
This is an operator acting on the quantum space of states  $\mathcal{H}=\otimes _{n=1}^{\mathsf{N}}\mathcal{H}_{n}$ of the chain, where $\mathcal{H}_n\simeq \mathbb{C}^{2}$ is the 
bidimensional local quantum spin space at site $n$, on which the operators $\sigma _{n}^{\alpha},\ \alpha\in\{x,y,z\}$, act as the corresponding Pauli matrices.
In \eqref{H-XXZ-Non-D}, $\Delta =\cosh \eta $ is the anisotropy parameter, and $\varsigma _{\pm}$, $\kappa _{\pm }$, $\tau _{\pm }$ parametrize the most general non-diagonal integrable boundary interactions.
It may sometimes be convenient to use different sets of boundary parameters $\alpha_\pm, \beta_\pm$ instead of $\varsigma_\pm,\kappa_\pm$, by using the following reparametrization:
\begin{equation}
   \sinh\alpha_\pm\, \cosh \beta_\pm =\frac{\sinh\varsigma_\pm}{2\kappa_\pm},
   \qquad
   \cosh \alpha_\pm\, \sinh\beta_\pm =\frac{\cosh\varsigma_\pm}{2\kappa_\pm}.
\label{reparam-bords}
\end{equation}

The open spin-1/2 XXZ chain can be studied in the framework of the representation theory of the reflection algebra, by considering monodromy matrices $\mathcal{U}(\lambda )\in\End(\mathbb{C}^2\otimes \mathcal{H})$ satisfying the following reflection equation, on $\mathbb{C}^2\otimes\mathbb{C}^2\otimes\mathcal{H}$:
\begin{equation}
R_{21}(\lambda -\mu )\,\mathcal{U}_{1}(\lambda )\,R_{12}(\lambda +\mu -\eta)\,\mathcal{U}_{2}(\mu )
=\mathcal{U}_{2}(\mu )\,R_{21}(\lambda +\mu -\eta )\,\mathcal{U}_{1}(\lambda )\,R_{12}(\lambda -\mu ).  \label{bYB}
\end{equation}
In this relation, the subscripts parameterize the subspaces of $\mathbb{C}^2\otimes\mathbb{C}^2$ on which the corresponding operator acts non-trivially.
The $R$-matrix,
\begin{equation}\label{R-6V}
R_{12}(\lambda )=\begin{pmatrix}
\sinh (\lambda +\eta ) & 0 & 0 & 0 \\ 
0 & \sinh \lambda & \sinh \eta & 0 \\ 
0 & \sinh \eta & \sinh \lambda & 0 \\ 
0 & 0 & 0 & \sinh (\lambda +\eta )
\end{pmatrix}
 \in \text{End}(\mathbb{C}^2\otimes \mathbb{C}^2),
\end{equation}
is the 6-vertex trigonometric solution of the Yang-Baxter equation, and $R_{21}(\lambda)=P_{12}\, R_{12}(\lambda)\, P_{12}$, where $P_{12}$ is the permutation operator on $\mathbb{C}^2\otimes \mathbb{C}^2$.
Note that, in the case \eqref{R-6V}, $R_{21}(\lambda)=R_{12}(\lambda)$.

Following Sklyanin~\cite{Skl88}, we define two classes of solutions $\mathcal{V}_-(\lambda)\equiv\mathcal{U}_{-}(\lambda )$ and $\mathcal{V}_{+}(\lambda )\equiv\mathcal{U}_{+}^{t_{0}}(-\lambda )$ of \eqref{bYB} by considering the operators
\begin{align}
   &\mathcal{U}_{-}(\lambda )
   = M(\lambda)\, K_-(\lambda)\, \hat{M}(\lambda)
   = \begin{pmatrix} \mathcal{A}_-(\lambda) & \mathcal{B}_-(\lambda) \\
                               \mathcal{C}_-(\lambda) & \mathcal{D}_-(\lambda) \end{pmatrix},
   \label{def-U-}\\
    &\mathcal{U}_{+}^{t_0}(\lambda )
   = M^{t_0}(\lambda)\, K_+^{t_0}(\lambda)\, \hat{M}^{t_0}(\lambda)
   = \begin{pmatrix} \mathcal{A}_+(\lambda) & \mathcal{C}_+(\lambda) \\
                               \mathcal{B}_+(\lambda) & \mathcal{D}_+(\lambda) \end{pmatrix}.
   \label{def-U+}
\end{align}
Both operators are acting on $\mathcal{H}_0\otimes \mathcal{H}$, where $\mathcal{H}_0=\mathbb{C}^2$ is called auxiliary space. They are defined in terms of
\begin{equation}
M(\lambda )=R_{0 N}(\lambda -\xi _N-\eta /2)\dots R_{01}(\lambda -\xi _{1}-\eta /2)
=
\begin{pmatrix}
A(\lambda ) & B(\lambda ) \\ 
C(\lambda ) & D(\lambda )%
\end{pmatrix} ,  \label{bulk-mon}
\end{equation}
which corresponds to the bulk monodromy matrix of a chain of length $N$ with some inhomogeneity parameters $\xi_{1},\ldots,\xi_{N}$, and of
\begin{equation}
\hat{M}(\lambda )=(-1)^N\,\sigma _{0}^{y}\,M^{t_{0}}(-\lambda
)\,\sigma _{0}^{y}.  \label{Mhat}
\end{equation}
The boundary matrices $K_\pm(\lambda)\in\End(\mathcal{H}_0)$ are here defined as
\begin{equation}
K_{-}(\lambda )=K(\lambda ;\varsigma _{-},\kappa _{-},\tau _{-}),
\qquad
K_{+}(\lambda )=K(\lambda +\eta ;\varsigma _{+},\kappa _{+},\tau _{+}),
\end{equation}
where
\begin{equation}
K(\lambda ;\varsigma ,\kappa ,\tau )=\frac{1}{\sinh \varsigma }\,
\begin{pmatrix}
\sinh (\lambda -\eta /2+\varsigma ) & \kappa e^{\tau }\sinh (2\lambda -\eta ) \\ 
\kappa e^{-\tau }\sinh (2\lambda -\eta ) & \sinh (\varsigma -\lambda +\eta /2)
\end{pmatrix}, 
 \label{mat-K}
\end{equation}
is the most general scalar solution \cite{deVG93,deVG94,GhoZ94} of the reflection equation \eqref{bYB} for general
values of the parameters $\varsigma ,$ $\kappa $ and $\tau $.

Sklyanin \cite{Skl88} has shown that the transfer matrices,
\begin{align}
\mathcal{T}(\lambda )
&=\tr_{0}\{K_{+}(\lambda )\,M(\lambda)\,K_{-}(\lambda )\, \hat{M}(\lambda )\}
   \nonumber\\
&=\tr_{0}\{K_{+}(\lambda )\, \mathcal{U}_{-}(\lambda )\}
  =\tr_{0}\{K_{-}(\lambda )\, \mathcal{U}_{+}(\lambda)\},
  \label{transfer}
\end{align}
form a one-parameter family of commuting operators on $\mathcal{H}$.
In the homogeneous limit in which $\xi _{m}=0$, $m=1,\ldots ,N$, the Hamiltonian \eqref{H-XXZ-Non-D} of the spin-1/2 open chain can be obtained as
\begin{equation}
H=\frac{2\, (\sinh \eta )^{1-2 N}}{\tr \{K_{+}(\eta /2)\}\,\tr\{K_{-}(\eta /2)\}}\,
\frac{d}{d\lambda }\mathcal{T}(\lambda )_{\,\vrule height13ptdepth1pt\>{\lambda =\eta /2}\!}
+\text{constant.}  \label{Ht}
\end{equation}

To conclude this section, let us recall some useful properties of the 6-vertex reflection algebra. The inversion relation for $\mathcal{U}_-(\lambda)$ can be written as
\begin{equation}\label{inv-U-}
   \mathcal{U}_-(\lambda+\eta/2)\, \mathcal{U}_-(-\lambda+\eta/2)
   =\frac{\det_q\mathcal{U}_-(\lambda)}{\sinh(2\lambda-2\eta)},
\end{equation}
in terms of the quantum determinant $\det_q\mathcal{U}_-(\lambda)$,
\begin{align}
  \frac{\det_q\mathcal{U}_-(\lambda)}{\sinh(2\lambda-2\eta)}
   &=\mathcal{A}_-(\eta/2\pm\lambda)\,\mathcal{A}_-(\eta/2\mp\lambda)
        +\mathcal{B}_-(\eta/2\pm\lambda)\,\mathcal{C}_-(\eta/2\mp\lambda)
        \nonumber\\
    &=\mathcal{D}_-(\eta/2\pm\lambda)\,\mathcal{D}_-(\eta/2\mp\lambda)
        +\mathcal{C}_-(\eta/2\pm\lambda)\,\mathcal{B}_-(\eta/2\mp\lambda),
   \label{det-U-}
\end{align}
which is a central element of the reflection algebra: $[\det_q\mathcal{U}_-(\lambda),\mathcal{U}_-(\lambda)]=0$.
It is obtained as the product
\begin{equation}\label{det-prod}
   \mathrm{det}_q \,\mathcal{U}_-(\lambda)=\mathrm{\det}_q M(\lambda)\, \mathrm{det}_q M(-\lambda)\, \mathrm{det}_q K_-(\lambda),
\end{equation}
of the bulk quantum determinant
\begin{equation}\label{det-M}
  \mathrm{det}_q M(\lambda)=a(\lambda+\eta/2)\, d(\lambda-\eta/2),
\end{equation}
where
\begin{equation}\label{a-d}
   a(\lambda)=\prod_{n=1}^N\sinh(\lambda-\xi_n+\eta/2),
   \qquad
   d(\lambda)=\prod_{n=1}^N\sinh(\lambda-\xi_n-\eta/2),
\end{equation}
and of the quantum determinant of the scalar boundary matrix $K_-(\lambda)$.
Similar results can be obtained for $\mathcal{U}_+(\lambda)$ using the fact that $\mathcal{V}_+(\lambda)\equiv \mathcal{U}_+^{t_0}(-\lambda)$ satisfies the same algebra as $\mathcal{U}_-(\lambda)$.
The quantum determinant of the scalar boundary matrices $K_\mp(\lambda)$ can be expressed as
\begin{align}
  \frac{ \det_q K_\mp(\lambda)}{\sinh(2\lambda\mp2\eta)}
  &= \mp\frac{\big(\sinh^2\lambda-\sinh^2\alpha_\mp\big)\big(\sinh^2\lambda+\cosh^2\beta_\mp\big)}{\sinh^2\alpha_\mp\,\cosh^2\beta_\mp}.
  \label{det-K-}
\end{align}
%
%

\section{Gauge transformation of the model}
\label{sec-gauged}

It is possible to solve the model, i.e. to diagonalize the boundary transfer matrices \eqref{transfer}, by means of the quantum version of the Separation of Variable (SoV) approach \cite{Skl85,Skl85a,Skl92,Skl95}. This has been done in \cite{FalKN14}. The idea is, as in the XXX case \cite{KitMNT17}, to gauge transform the model into an effective one in which at least one of the boundary matrices becomes triangular \cite{Nic12}. The XXZ case is however much more complicated than the XXX case, since the involved gauged transformation is a generalized (or dynamical) one. In this section, we reformulate the generalized gauge transformation used in \cite{FalKN14} in a more usual way\footnote{The generalized gauge transformation presented here is equivalent to the one of \cite{FalKN14}. However, the notations and the objects that are considered may sometimes be slightly different.}, by using the trigonometric version of the Vertex-IRF transformation \cite{Bax73a} in its algebraic form \cite{FelV96b}.

\subsection{Vertex-IRF transformation}

The Vertex-IRF transformation relates the 6-vertex $R$-matrix \eqref{R-6V} to the $R$-matrix of the trigonometric solid-on-solid (SOS) model:
\begin{equation}\label{Vertex-IRF}
    R_{12}(\la-\mu)\, S_1(\la|\beta) \, S_2(\mu|\beta+\sigma_1^z)
    =S_2(\mu|\beta) \, S_1(\la|\beta+\sigma_2^z) \, R^{\mathrm{SOS}}_{12}(\la-\mu|\beta),
\end{equation}
where the trigonometric SOS (or dynamical) $R$-matrix reads:
\begin{equation}\label{R-SOS}
     R^{\mathrm{SOS}}(\lambda|\beta)
     =\begin{pmatrix}
       \sinh(\la+\eta) &0&0&0\\
       0& \frac{\sinh(\eta(\beta+1))}{\sinh (\eta\beta)}\, \sinh \la & 
                        \frac{\sinh(\la+\eta\beta)}{\sinh (\eta\beta)}\,\sinh \eta & 0\\
       0& \frac{\sinh(\eta\beta-\la)}{\sinh (\eta\beta)}\, \sinh \eta & 
                        \frac{\sinh(\eta(\beta-1))}{\sinh (\eta\beta)}\,\sinh \la &0\\
       0&0&0&\sinh(\la+\eta)
        \end{pmatrix}.
\end{equation}
In this context, the parameter $\beta$ is usually called dynamical parameter.
The corresponding Vertex-IRF transformation matrix can be written as
\begin{equation}\label{mat-S}
     S(\lambda|\beta)
     =\begin{pmatrix} e^{\la-\eta(\beta+\alpha)} & e^{\la+\eta(\beta-\alpha)}  \\
                                1&1\end{pmatrix},
\end{equation}
where the parameter $\alpha$ corresponds to an arbitrary shift of the spectral parameter.
Note that the relation \eqref{Vertex-IRF} can equivalently be written as
\begin{equation}\label{Vertex-IRF2}
       R_{12}(\la-\mu)\, S_2(-\mu|\beta) \, S_1(-\la|\beta+\sigma_2^z)
       =S_1(-\la|\beta) \, S_2(-\mu|\beta+\sigma_1^z) \, R^{\mathrm{SOS}}_{21}(\la-\mu|\beta).
\end{equation}
%

\subsection{Gauge transformed monodromy matrices}

The Vertex-IRF transformation can easily be extended to a transformation between bulk monodromy matrices:
\begin{multline}\label{V-IRF-bulk1}
      M (\la)\,S_{1\dots N}  (\{\xi\}|\beta)\, S_0(-\la+\eta/ 2|\beta+\mathbf{S}^z)
      \\
     = S_0(-\la+\eta/ 2|\beta)\,S_{1\dots N}(\{\xi\}|\beta+\sigma^z_0)\,M^{\mathrm{SOS}}(\la|\beta),
\end{multline}
where $M(\lambda)\in\End(\mathcal{H}_0\otimes\mathcal{H})$ is the bulk monodromy matrix \eqref{bulk-mon}, $\mathbf{S}^z=\sum_{j=1}^N\sigma_j^z$, $S_0(\lambda|\beta)$ denotes the Vertex-IRF transformation matrix \eqref{mat-S} acting on the auxiliary space $\mathcal{H}_0$, whereas $S_{1\ldots N}(\{\xi\}|\beta)$ is the following product of local gauge matrices \eqref{mat-S} acting on the tensor product $\mathcal{H}=\otimes_{n=1}^N\mathcal{H}_n$ of the $N$ local quantum spaces:
\begin{align}
      S_{1\dots N}(\{\xi\}|\beta)
      &= S_N(-\xi_n|\beta)\, S_{N-1}(-\xi_{N-1}|\beta+\sigma_N^z)\ldots S_1(-\xi_1|\beta+\sigma_2^z +\ldots+\sigma_N^z)
      \nonumber\\
      &= \pl_{n=N\rightarrow 1}S_n\Bigg(-\xi_n \, \Big| \, \beta+  \sum_{j=n+1}^N\sigma_j^z\Bigg).
      \label{Sq}
\end{align}
The resulting gauged transformed bulk monodromy matrix $M^\mathrm{SOS}(\lambda|\beta)$ is  defined as
\begin{align}
M^{\mathrm{SOS}}(\la|\beta)
     &=\pl_{n=N\rightarrow 1}R^{\mathrm{SOS}}_{n0}\Bigg(\la-\xi_n-\frac \eta 2\, \Big|\, \beta+\sum_{j=n+1}^N\sigma_j^z\Bigg)
     \nonumber\\
     &= \begin{pmatrix}
          A^{\mathrm{SOS}}(\la|\beta)&B^{\mathrm{SOS}}(\la|\beta)\\
          C^{\mathrm{SOS}}(\la|\beta)&D^{\mathrm{SOS}}(\la|\beta)
          \end{pmatrix}.
    \label{M-SOS}
\end{align}
In these expressions, we have used the following notation concerning an ordered product of non-commuting operators:
\begin{equation}
    \pl_{n=N\rightarrow 1}X_n\equiv X_N\,X_{N-1}\dots X_1.
\end{equation}
In a similar way, we obtain the following transformation for the matrix $\hat{M}(\la)$ \eqref{Mhat}:
\begin{multline}
    \hat{M}(\la)\,S_0 (\la-\eta/ 2|\beta)\,S_{1\dots N}(\{\xi\}|\beta+\sigma^z_0)
    \\
    = S_{1\dots N} (\{\xi\}|\beta )\,S_0 (\la-\eta / 2|\beta+\mathbf{S}^z)\,\hat{M}^{\mathrm{SOS}}(\la|\beta),
\end{multline}
where
\begin{align}
   \hat{M}^{\mathrm{SOS}}(\la|\beta)
   &=\pl_{n=1\rightarrow N}
       R^{\mathrm{SOS}}_{0 n}\Bigg(\la+\xi_n-\frac \eta 2\, \Big|\, \beta+\sul_{j=n+1}^N\sigma_j^z\Bigg)
       \nonumber\\
    &=\begin{pmatrix}\hat{A}^{\mathrm{SOS}}(\la|\beta)&\hat{B}^{\mathrm{SOS}}(\la|\beta)\\
            \hat{C}^{\mathrm{SOS}}(\la|\beta)&\hat{D}^{\mathrm{SOS}}(\la|\beta)\end{pmatrix}.
    \label{Mhat-SOS}
\end{align}

Let us introduce the following gauged transformed versions of the boundary monodromy matrix $\mathcal{U}_-(\lambda)$:
\begin{align}
        \widetilde{\mathcal{U}}_-(\la|\beta)
        &=S_0^{-1}(-\la+\eta/2  | \beta )\ \mathcal{U}_-(\la)\ S_0(\la-\eta/2|\beta)
        \nonumber\\
        &=\begin{pmatrix}
         \widetilde{\mathcal{A}}_-(\lambda|\beta) & \widetilde{\mathcal{B}}_-(\lambda|\beta) \\
        \widetilde{\mathcal{C}}_-(\lambda|\beta) & \widetilde{\mathcal{D}}_-(\lambda|\beta) \end{pmatrix},
        \label{gauged-U}
\end{align}
and
\begin{align}
    \mathcal{U}^{\mathrm{SOS}}_-(\la|\beta)
         &= S^{-1}_{1 \dots N}(\{\xi\}|\beta+\sigma^z_0)\,\widetilde{\mathcal{U}}_-(\la|\beta)\,
              S_{1\dots N}(\{\xi\}|\beta+\sigma^z_0)
              \nonumber\\
         &=\begin{pmatrix}
             \mathcal{A}^{\mathrm{SOS}}_-(\la|\beta)&\mathcal{B}^{\mathrm{SOS}}_-(\la|\beta)\\
             \mathcal{C}^{\mathrm{SOS}}_-(\la|\beta)&\mathcal{D}^{\mathrm{SOS}}_-(\la|\beta)
             \end{pmatrix}.
        \label{U-SOS}
\end{align}
It is easy to see that both \eqref{gauged-U} and \eqref{U-SOS} satisfy the following dynamical reflection equation:
\begin{multline}\label{dyn_refl}
 R^{\mathrm{SOS}}_{21}(\la-\mu|\beta)\,
 \mathcal{U}_1(\la|\beta+\sigma^z_2)\,R^{\mathrm{SOS}}_{12}(\la+\mu-\eta|\beta)\,
 \mathcal{U}_2(\mu|\beta+\sigma^z_1)
 \\
 = \mathcal{U}_2(\mu|\beta+\sigma^z_1)\,R^{\mathrm{SOS}}_{21}(\la+\mu-\eta|\beta)\,
    \mathcal{U}_1(\la|\beta+\sigma^z_2)\, R^{\mathrm{SOS}}_{12}(\la-\mu|\beta).
\end{multline}
A few useful commutation relations issued from \eqref{dyn_refl} are specified in Appendix~\ref{app-prop-gauged}.
Moreover, the elements of the matrix \eqref{gauged-U} can easily be expressed in terms of the elements of the matrix \eqref{def-U-} (see \eqref{Atilde}-\eqref{Btilde}).
Instead, the matrix \eqref{U-SOS} can be expressed in terms of the SOS bulk monodromy matrices \eqref{M-SOS} and \eqref{Mhat-SOS} as
\begin{equation}
     \mathcal{U}^{\mathrm{SOS}}_-(\la|\beta)
     =	M^{\mathrm{SOS}}(\la|\beta)\,\mathcal{K}_-^{\mathrm{SOS}}(\la|\beta+\mathbf{S}^z)\,
        \hat{M}^{\mathrm{SOS}}(\la|\beta),
\label{boundary-bulk}
\end{equation}
where 
\begin{align}
     \mathcal{K}_-^{\mathrm{SOS}}(\la|\beta)
     &=S^{-1}_0(-\la+\eta /2|\beta)\,K_-(\la)\,S_0(\la-\eta/ 2|\beta)
     \nonumber\\
     &=\begin{pmatrix}\mathsf{a}_-(\la|\beta)&\mathsf{b}_-(\la|\beta)\\
           \mathsf{c}_-(\la|\beta)&\mathsf{d}_-(\la|\beta)\end{pmatrix}.
     \label{K-SOS}
\end{align}
In particular, the expression for $\mathsf{b}_-(\la|\beta)$ and $\mathsf{c}_-(\la|\beta)$ in \eqref{K-SOS}  is given by
\begin{equation}\label{b-lambda}
  \mathsf{b}_-(\la|\beta)
  = \mathsf{c}_-(\la|-\beta)=e^{\lambda-\eta/2}\sinh(2\lambda-\eta)\, \mathsf{b}_-(\beta),
\end{equation}
where $\mathsf{b}_-(\beta)$ depends on $\beta$, $\alpha$, and on the boundary parameters as
\begin{align}
 \mathsf{b}_-(\beta)
  &=\frac{e^{\eta\beta}}{2\sinh(\eta\beta)\,\sinh\varsigma_-}\Big[2\kappa_-\sinh(\eta(\beta-\alpha)-\tau_-)-e^{\varsigma_-}\Big]
   \nonumber \\
  &= \frac{\kappa_- \, e^{\eta\beta}  }{\sinh(\eta\beta)\, \sinh\varsigma_-}\Big[\sinh(\eta(\beta-\alpha)-\tau_-)-\sinh(\alpha_-+\beta_-)\Big].\label{b-}
\end{align}

The inversion relation for the matrix $\mathcal{U}^\mathrm{SOS}(\lambda|\beta)$ follows directly from the inversion relation \eqref{inv-U-} for $\mathcal{U}_-(\lambda)$:
\begin{equation}
   \mathcal{U}^{\mathrm{SOS}}_-(\lambda+\eta/2|\beta)\
   \mathcal{U}^{\mathrm{SOS}}_-(-\lambda+\eta/2|\beta)
   =\frac{\det_q\mathcal{U}_-(\lambda)}{\sinh(2\lambda-2\eta)},
\end{equation}
where $\det_q\mathcal{U}_-(\lambda)$ is the quantum determinant \eqref{det-U-}-\eqref{det-prod}.
Hence we also have
\begin{align}
  \frac{\det_q\mathcal{U}_-(\lambda)}{\sinh(2\lambda-2\eta)}
   &=\mathcal{A}^\mathrm{SOS}_-(\eta/2+\epsilon\lambda|\beta)\,\mathcal{A}^\mathrm{SOS}_-(\eta/2-\epsilon\lambda|\beta)
    \nonumber\\
    &\hspace{4cm}
        +\mathcal{B}^\mathrm{SOS}_-(\eta/2+\epsilon\lambda|\beta)\,\mathcal{C}^\mathrm{SOS}_-(\eta/2-\epsilon\lambda|\beta)
        \nonumber\\
    &=\mathcal{D}^\mathrm{SOS}_-(\eta/2+\epsilon\lambda|\beta)\,\mathcal{D}^\mathrm{SOS}_-(\eta/2-\epsilon\lambda|\beta)
      \nonumber\\
    &\hspace{4cm}
        +\mathcal{C}^\mathrm{SOS}_-(\eta/2+\epsilon\lambda|\beta)\,\mathcal{B}^\mathrm{SOS}_-(\eta/2-\epsilon\lambda|\beta).
   \label{det-U-SOS}
\end{align}
%

\subsection{Transfer matrix and gauge for $K_+(\la)$}

It remains to express the transfer matrix \eqref{transfer} $\mathcal{T}(\la)=\tr_0 \{ K_+(\la)\,\mathcal{U}_-(\la)\}$ in terms of the elements of the gauged monodromy matrix \eqref{gauged-U} or \eqref{U-SOS}.
A natural way to do this would be to apply the gauge transformation  \eqref{gauged-U} to $\mathcal{U}_-(\la)$ inside the trace, which would result into an expression for $\mathcal{T}(\la)$ in terms of $\widetilde{\mathcal{A}}_-(\la|\beta)$, $\widetilde{\mathcal{B}}_-(\la|\beta)$,
$\widetilde{\mathcal{C}}_-(\la|\beta)$ and $\widetilde{\mathcal{D}}_-(\la|\beta)$.
Such a representation would however not be so convenient, since the natural commutation relations issued from \eqref{dyn_refl} are  established between $\widetilde{\mathcal{A}}_-(\la|\beta-1)$ and $\widetilde{\mathcal{D}}_-(\la|\beta+1)$ (and not between $\widetilde{\mathcal{A}}_-(\la|\beta)$ and $\widetilde{\mathcal{D}}_-(\la|\beta)$, see \eqref{comm-AB}-\eqref{comm-BD}). The same is true for the expression of $\widetilde{\mathcal{D}}_-(\la|\beta+1)$ in terms of $\widetilde{\mathcal{A}}_-(\la|\beta-1)$ and $\widetilde{\mathcal{A}}_-(-\la|\beta-1)$ (see \eqref{parity-A}-\eqref{parity-D}).
It is therefore better to introduce a slightly modified gauge transformation which, when applied to  $\mathcal{U}_-(\la)$, produces $\mathcal{A}_-(\la|\beta-1)$ and $\mathcal{D}_-(\la|\beta+1)$.
To this aim, one can for instance consider the following modified gauged boundary monodromy matrix
\begin{align}
\widehat{\mathcal{U}}_-(\la|\beta)
= S_0^{-1}(\eta/2-\la|\beta,\alpha+1)\ \mathcal{U}_-(\la)\ S_0 (\la-\eta/ 2|\beta,\alpha-1 ),
\end{align}
in which we have specified the explicit dependency of \eqref{mat-S} and of its inverse on the parameter $\alpha$.
Indeed, by considering these different shifts of $\alpha$, we obtain that the diagonal entries of  $\widehat{\mathcal{U}}_-(\la|\beta)$ are\footnote{We do not specify here the expression of the other entries of $\widehat{\mathcal{U}}_-(\la|\beta)$ since we shall not use them.}
\begin{equation}
 \widehat{\mathcal{U}}_-(\la|\beta)
 =\begin{pmatrix}e^\eta\,\frac{\sinh(\eta(\beta-1))}{\sinh(\eta\beta)\,}\widetilde{\mathcal{A}}_-(\la|\beta-1)&\star\\
\star&e^\eta\,\frac{\sinh(\eta(\beta+1))}{\sinh(\eta\beta)}\,\widetilde{\mathcal{D}}_-(\la|\beta+1)\end{pmatrix}.
\end{equation}
 Then the transfer matrix is given as
 \begin{equation}
      \mathcal{T}(\la)=\tr_0\left\{\widehat{\mathcal{K}}_+(\la|\beta) \ \widehat{\mathcal{U}}_-(\la|\beta)\right\},
 \end{equation}
in which
\begin{align}
 \widehat{\mathcal{K}}_+(\la|\beta)
   &=S_0^{-1}(\la- \eta / 2|\beta,\alpha-1)\ K_+(\la)\ S_0(\eta /2-\la|\beta,\alpha+1)
   \nonumber\\
   &=\begin{pmatrix}\mathsf{a}_+(\la|\beta)&\mathsf{b}_+(\la|\beta)\\
       \mathsf{c}_+(\la|\beta)&\mathsf{d}_+(\la|\beta)\end{pmatrix}.
       \label{K+gauged}
\end{align}

The parameters $\alpha$ and $\beta$ can then be chosen adequately so as to simplify the expression of $\mathcal{T}(\la)$. In particular, we can impose the gauged boundary matrix \eqref{K+gauged} to be diagonal\footnote{In fact, for the use of the SoV approach, it is enough to choose $\widehat{\mathcal{K}}_+(\lambda|\beta)$ to be lower triangular, i.e. to impose $\mathsf{b}_+(\la|\beta)=0$ (see \cite{FalKN14}). The choice \eqref{cond-2}-\eqref{cond-2bis} is here just for convenience.}, i.e. choose $\mathsf{b}_+(\la|\beta)=\mathsf{c}_+(\la|\beta)=0$.
The condition $\mathsf{b}_+(\la|\beta)=0$ is equivalent to
\begin{equation}\label{cond-1}
   2\kappa_+\sinh\big(\eta(\beta-\alpha)-\tau_+\big)-e^{-\varsigma_+}=0,
\end{equation}
which can alternatively be rewritten in terms of the boundary parameters  $\alpha_+$ and $\beta_+$ as
\begin{equation}\label{cond-1bis}
    \sinh\big(\eta(\beta-\alpha)-\tau_+\big)=\sinh(\beta_+-\alpha_+),
\end{equation}
or equivalently as
\begin{equation}\label{cond-1ter}
   \exists \, \epsilon_+\in\{1,-1\},\qquad
   \eta (\beta-\alpha)=\tau_++\epsilon_+(\alpha_+-\beta_+)+\frac{1+\epsilon_+}{2} i\pi \mod 2i\pi.
\end{equation}
The condition $\mathsf{c}_+(\la|\beta)=0$ is equivalent to
\begin{equation}\label{cond-2}
   2\kappa_+\sinh\big(\eta(\beta+\alpha)+\tau_+\big)+e^{-\varsigma_+}=0,
\end{equation}
which can alternatively be rewritten in terms of the boundary parameters  $\alpha_+$ and $\beta_+$ as
\begin{equation}\label{cond-2bis}
    \sinh\big(\eta(\beta+\alpha)+\tau_+\big)=\sinh(\alpha_+-\beta_+),
\end{equation}
or equivalently as
\begin{equation}\label{cond-2ter}
   \exists \, \epsilon'_+\in\{1,-1\},\qquad
   \eta(\beta+\alpha)=-\tau_++\epsilon'_+(\alpha_+-\beta_+)+\frac{1-\epsilon'_+}{2} i\pi \mod 2i\pi.
\end{equation}
A choice for the gauged parameters $\alpha$ and $\beta$ compatible with these two conditions is therefore given by
\begin{align}
 &\eta\alpha= -\tau_+ +\frac{\epsilon'_+-\epsilon_+}{2}(\alpha_+-\beta_+)-\frac{\epsilon_++\epsilon'_+}{4}i\pi \mod i\pi , \\
 &\eta\beta=\frac{\epsilon_++\epsilon'_+}{2}(\alpha_+-\beta_+)+\frac{2+\epsilon_+-\epsilon'_+}{4}i\pi \mod i\pi,
\end{align}
for $\epsilon_+,\epsilon'_+\in\{1,-1\}$.

If these two conditions are fulfilled (which we shall suppose from now on), the transfer matrix is simply given by
\begin{multline}\label{T-gauged-U}
 \mathcal{T}(\la)= \frac{e^{\eta}}{\sinh(\eta\beta)}
     \Big\{ \mathsf{a}_+(\la|\beta) \, \sinh(\eta(\beta-1))\,\widetilde{\mathcal{A}}_-(\la|\beta-1)\\
    +\mathsf{d}_+(\la|\beta) \, \sinh(\eta(\beta+1))\,\widetilde{\mathcal{D}}_-(\la|\beta+1)\Big\},
\end{multline}
where
 \begin{align}
 \mathsf{a}_+(\la|\beta)
 &= \mathsf{d}_+(\la|-\beta)
     \nonumber\\
 &=\frac{e^{-\lambda-\frac{\eta}{2}}}{2\sinh(\eta \beta)\, \sinh\varsigma_+}
   \Big\{  e^{\varsigma_+}\sinh(\eta\beta)-e^{-\varsigma_+}\sinh(2\lambda+\eta+\eta\beta)
    \nonumber \\
 &\hspace{4.5cm} -2\kappa_+ \sinh(\eta\alpha+\tau_+)\,\sinh(2\lambda+\eta)\Big\}.
\end{align}
Note that, by using \eqref{cond-1ter} for a given choice of $\epsilon_+$, theses coefficients $\mathsf{a}_+(\la|\beta)$ and $\mathsf{d}_+(\la|\beta)$ can be rewritten 
as\footnote{These rewritings hold even if \eqref{cond-2bis} is not satisfied.}
\begin{align}
   &\mathsf{a}_+(\la|\beta)\equiv \mathsf{a}_+(\la)
       =\epsilon_+\,e^{-\lambda-\frac{\eta}{2}}\,\frac{\sinh(\lambda+\frac{\eta}{2}+\epsilon_+\alpha_+)\,\cosh(\lambda+\frac{\eta}{2}-\epsilon_+\beta_+)}{\sinh\alpha_+\,\cosh\beta_+},
       \label{a+}\\
   &\mathsf{d}_+(\la|\beta)\equiv \mathsf{d}_+(\la)
       =-\epsilon_+\,e^{-\lambda-\frac{\eta}{2}}\,
       \frac{\sinh(\lambda+\frac{\eta}{2}-\epsilon_+\alpha_+)\,\cosh(\lambda+\frac{\eta}{2}+\epsilon_+\beta_+)}{\sinh\alpha_+\,\cosh\beta_+} .
       \label{d+}
\end{align}
It may be useful to express the transfer matrix in the following form:
\begin{gather}
    \mathcal{T}(\la)=\mathcal{F}(\la|\beta)+\mathcal{F}(\la|-\beta),\\
\text{with} \quad
    \mathcal{F}(\la|\beta)= e^{\eta}\,\frac{\sinh(\eta(\beta-1))}{\sinh(\eta\beta)}\,\mathsf{a}_+(\la|\beta)\,\widetilde{\mathcal{A}}_-(\la|\beta-1).
\end{gather}

One can alternatively express the transfer matrix \eqref{T-gauged-U} in terms of the elements of SOS boundary monodromy matrix \eqref{U-SOS}, i.e. as
\begin{equation} \label{gauge-transfer}
  \mathcal{T}(\lambda)= S_{1\ldots N}(\{\xi\}|\beta)\ \mathcal{T}^\mathrm{SOS}(\lambda|\beta)\ S_{1\ldots N}^{-1}(\{\xi\}|\beta),
\end{equation}
in terms of the following SOS transfer matrix:
\begin{multline}\label{transfer-SOS}
   \mathcal{T}^{\mathrm{SOS}}(\la|\beta)=\frac{e^{\eta}}{\sinh(\eta\beta)}
     \Big\{ \mathsf{a}_+(\la|\beta) \, \sinh(\eta(\beta-1))\, \mathcal{A}_-^{\mathrm{SOS}}(\la|\beta-1)
     \\
     +\mathsf{d}_+(\la|\beta) \, \sinh(\eta(\beta+1))\,\mathcal{D}_-^{\mathrm{SOS}}(\la|\beta+1)\Big\}.
\end{multline}
Hence, we have reduced the problem of diagonalizing the transfer matrix \eqref{transfer} to the study of the eigenstates of the SOS transfer matrix \eqref{transfer-SOS}.

\section{Diagonalisation of the transfer matrix by SoV}
\label{sec-diag}

It is easy to see that, by construction, the transfer matrix $\mathcal{T}(\lambda)$ is a polynomial of degree $N+2$ in $\sinh^2\lambda$ (or equivalently in $\cosh(2\lambda)$), with leading asymptotic behavior 
\begin{equation}\label{T-asympt}
   \mathcal{T}(\lambda)\underset{\lambda\to\pm\infty}{\sim}
   \frac{\kappa_+\kappa_-\, \cosh(\tau_+-\tau_-)}{2^{2N+1}\, \sinh\varsigma_+\sinh\varsigma_-}\,
   e^{\pm2(N+2)\lambda}.
\end{equation}
Its value at $\frac{\eta}{2}$ (respectively at $\frac{\eta}{2}+i\frac{\pi}{2}$) can easily be computed from the fact that $\mathcal{U}_-(\frac{\eta}{2})=(-1)^N\det_qM(0)$ (respectively that $\mathcal{U}_-(\frac{\eta}{2}+i\frac{\pi}{2})=i\coth\varsigma_-\det_qM(i\frac{\pi}{2})\,\sigma^z$):
\begin{align}
   &\mathcal{T}(\eta/2)=2\,(-1)^N\cosh\eta\,{\det}_q M(0), \label{T-value1}\\
   &\mathcal{T}(\eta/2+i\pi/2)=-2\,\cosh\eta\,\coth\varsigma_+\coth\varsigma_-\,{\det}_qM(i\pi/2).\label{T-value2}
\end{align}
Finally, its spectrum and eigenstates are directly related to those of the SOS transfer matrix $\mathcal{T}^\mathrm{SOS}(\lambda)$ through the gauge transformation \eqref{gauge-transfer}. The latter being expressed in a simple form \eqref{transfer-SOS} in terms of $\mathcal{A}_-^{\mathrm{SOS}}(\la|\beta-1)$ and $\mathcal{D}_-^{\mathrm{SOS}}(\la|\beta+1)$, we can construct \cite{Nic12,FalKN14} a basis of the space of states (the SoV basis) which separates the variables for the spectral problem associated to $\mathcal{T}^\mathrm{SOS}(\lambda)$ at particular values (related to the inhomogeneity parameters of the model) of the spectral parameter $\lambda$.

\subsection{SoV basis of the space of states}

The construction of the basis which separates the spectral problem for $\mathcal{T}^\mathrm{SOS}(\lambda)$ relies, as usual, on the use of the (shifted) inhomogeneity parameters of the model. The latter have to be generic or, at least, to satisfy the following non-intersecting conditions:
\begin{equation}\label{cond-inh}
    \xi_j^{(h_j)}\pm\xi_k^{(h_k)}\notin i\pi\mathbb{Z},
    \quad \forall j,k\in\{1,\ldots,N\} \text{ with } j\not=k,\ \forall h_j,h_k\in\{0,1\}.
\end{equation}
Here we have used the notation:
\begin{equation}\label{xi-shift}
   \xi_n^{(h)}=\xi_n+\eta/2-h\eta, \qquad 1\le n\le N, \quad h\in\{0,1\}.
\end{equation}

Let $\bra{0}$ be the dual reference state  with all spins up and $\ket{\underline{0}}$ be the reference state with all spins down.
For each $N$-tuple $\mathbf{h}\equiv(h_1,\ldots,h_N)\in\{0,1\}^N$, we define the following states:
 \begin{align}
  &\ket{\mathbf{h},\beta+1}
   =\pl_{j=1}^N 
   \left(\frac{ \mathcal{D}_-^{\mathrm{SOS}}( \xi_j+\eta/2|\beta+1)}{k_j\,\mathsf{A}_-(\eta/2-\xi_j)}\right)^{h_j}\ket{\underline{0}},
   \label{SOVstate-R}
 \\
   &\bra{\beta-1,\mathbf{h}}
   =\bra{0}\pl_{j=1}^N 
   \left(\frac{ \mathcal{A}_-^{\mathrm{SOS}}(\eta/2-\xi_j|\beta-1)}{\mathsf{A}_-(\eta/2-\xi_j)}\right)^{1-h_j}.
    \label{SOVstate-L}
  \end{align}  
In \eqref{SOVstate-R}-\eqref{SOVstate-L}, the normalization coefficients are chosen in the form
\begin{equation}\label{A-}
    k_j=\frac{\sinh(2\xi_j+\eta)}{\sinh(2\xi_j-\eta)},
    \qquad
   \mathsf{A}_-(\lambda)=
   \mathsf{g}_-(\lambda)\,
   a(\lambda)\, d(-\lambda),
\end{equation}
in terms of a function $\mathsf{g}_-(\lambda)$ satisfying the relation
\begin{equation}\label{g-}
   \mathsf{g}_-(\lambda+\eta/2)\, \mathsf{g}_-(-\lambda+\eta/2)=\frac{\det_q K_-(\lambda)}{\sinh(2\lambda-2\eta)}.
\end{equation}
%

It is easy to see that these states  are right and left pseudo-eigenstates of the operator  $\mathcal{B}_-^{\mathrm{SOS}}(\la|\beta)$, i.e.
\begin{align}
   &\mathcal{B}_-^{\mathrm{SOS}}(\la|\beta-1) \,\ket{\mathbf{h},\beta-1}
   = (-1)^N a_{\mathbf{h}}(\la)\,a_{\mathbf{h}}(-\la)\,
         \nonumber\\
   &\hspace{3.5cm}\times\mathsf{b}_-(\la|\beta-N-1)\,
   \frac{\sinh(\eta(\beta-N-1))}{\sinh(\eta(\beta-1))} \,
   \ket{\mathbf{h},\beta+1},
   \label{act-BR}
   \\
   &\bra{\beta+1,\mathbf{h}}\, \mathcal{B}_-^{\mathrm{SOS}}(\la|\beta+1)
    = (-1)^N a_{\mathbf{h}}(\la)\,a_{\mathbf{h}}(-\la) \,
      \nonumber\\
   &\hspace{3.5cm}\times\mathsf{b}_-(\la|\beta+N+1)\,
       \frac{\sinh(\eta\beta)}{\sinh(\eta (\beta+N))}  \, \bra{\beta-1,\mathbf{h}},
       \label{act-BL}
\end{align}
where
\begin{equation}\label{a-h}
    a_\mathbf{h}(\lambda)=\prod_{n=1}^N\sinh(\lambda-\xi_n-\eta/2+h_n\eta).
\end{equation}
As usual, one can determine the action on \eqref{SOVstate-L} and \eqref{SOVstate-R} of the other operators of the SOS boundary algebra by polynomial interpolation (see Appendix~\ref{app-act-SOS}).

From these actions, we can easily derive the orthogonality property:
\begin{equation}\label{orthog-states}
  \moy{\beta-1,\mathbf{h}\, |\, \mathbf{k},\beta+1}=\delta_{\mathbf{h},\mathbf{k}}\,
  N(\{\xi\},\beta)\,
  \frac{e^{2\sum_{j=1}^Nh_j\xi_j}}{\widehat{V}(\xi_1^{(h_1)},\ldots,\xi_N^{(h_N)})}.
\end{equation}
Here and in the following, we define, for any $N$-tuple of variables $(x_1,\ldots,x_N)$, the quantity $\widehat{V}(x_1,\ldots,x_N)$ as
\begin{equation}\label{VDM}
   \widehat{V}(x_1,\ldots,x_N)=\det_{1\le i,j\le N}\left[\sinh^{2(j-1)}x_i\right]=\prod_{j<k}(\sinh^2x_k-\sinh^2 x_j).
\end{equation}
The normalization constant in \eqref{orthog-states}, as computed in Appendix \ref{Normalization-Factor}, is given by the following expression:
\begin{align}
   N(\{\xi\},\beta)
   &=\widehat{V}(\xi_1^{(0)},\ldots,\xi_N^{(0)})\, \bra{0}\pl_{j=1}^N 
   \frac{ \mathcal{A}_-^{\mathrm{SOS}}(\eta/2-\xi_j|\beta-1)}{\mathsf{A}_-(\eta/2-\xi_j)}\,
   \ket{\underline{0}}
   \nonumber\\
   &= (-1)^N\, \widehat{V}(\xi_1,\ldots,\xi_N)\,
    \frac{ \widehat{V}(\xi_1^{(0)},\ldots,\xi_N^{(0)}) }{ \widehat{V}(\xi_1^{(1)},\ldots,\xi_N^{(1)}) } 
   \nonumber\\
   &\qquad\times
   \prod_{j=1}^N\left[ \frac{ \mathsf{b}_-(\frac{\eta}{2}-\xi_j|\beta+1+N-2j)}{\mathsf{g}_-(\frac{\eta}{2}-\xi_j)}\,\frac{\sinh(\eta(\beta+1+N-2j))}{\sinh(\eta(\beta+N-j))}\right].
   \label{norm-states}
\end{align}
Note that, for $\beta\notin\mathbb{Z}$ and generic inhomogeneity parameters $\xi_j$, the condition for this normalization constant \eqref{norm-states} to be non-zero is
\begin{equation}\label{cond-3}
   \forall j\in\{1,\ldots , N\},\quad\sinh(\eta(\beta+1-\alpha+N-2j)-\tau_-)\not= \sinh(\alpha_-+\beta_-),
\end{equation}
which, taking into account the condition \eqref{cond-1ter} for a given choice of $\epsilon_+$, is equivalent to the following condition on the boundary parameters:
\begin{multline}\label{cond-3bis}
   \forall j\in\{1,\ldots N\},\ \forall\epsilon\in\{1,-1\},\\
   \tau_+-\tau_-+\eta(N-2j+1)
   \not= \epsilon(\alpha_-+\beta_-)-\epsilon_+(\alpha_+-\beta_+)-\frac{\epsilon_++\epsilon}{2}i\pi \mod 2i\pi.
\end{multline}
Hence, from now on, we shall suppose that the condition \eqref{cond-3bis} is satisfied, which means that, for generic values of the inhomogeneity parameters \eqref{cond-inh}, the states \eqref{SOVstate-R} (respectively \eqref{SOVstate-L}) form a basis of $\mathcal{H}$ (respectively of $\mathcal{H}^\star$).
In that case, we have the following resolution of the identity:
\begin{equation}\label{decomp-id}
   \mathbf{1}=\frac{1}{N(\{\xi\},\beta)}\sum_{\mathbf{h}\in\{0,1\}^N} \!\!\!\!
   e^{-2\sum_{j=1}^Nh_j\xi_j} \, \widehat{V}(\xi_1^{(h_1)},\ldots,\xi_N^{(h_N)})\
   \ket{\mathbf{h},\beta+1}\bra{\beta-1,\mathbf{h}}.
\end{equation}

\subsection{The transfer matrix spectrum and eigenstates}

From the actions \eqref{act-AL}, \eqref{act-DR} and the parity properties \eqref{parity-A}, \eqref{parity-D}, 
it is easy to compute the action of $\mathcal{T}^\mathrm{SOS}(\xi_n^{(h_n)})=\mathcal{T}^\mathrm{SOS}(-\xi_n^{(h_n)})$ on the states $\ket{\mathbf{h},\beta+1}$ and $\bra{\beta-1,\mathbf{h}}$.
Hence we see that the basis \eqref{SOVstate-R} of $\mathcal{H}$ (respectively the basis \eqref{SOVstate-L} of $\mathcal{H}^\star$) separates the variables for the spectral problem associated to $\mathcal{T}^\mathrm{SOS}(\lambda)$ at these points $\pm\xi_n^{(h_n)}$, $n\in\{1,\ldots,N\}$.
This fact, together with the aforementioned  algebraic properties of the transfer matrix, leads to the following characterization for the spectrum and eigenstates of $\mathcal{T}(\lambda)$:

\begin{theorem}\label{th-SoV-spectrum}
Let us suppose that the inhomogeneity parameters are generic \eqref{cond-inh} and that the condition \eqref{cond-3} is  satisfied.
Then the spectrum $\Sigma_\mathcal{T}$ of the transfer matrix $\mathcal{T}(\lambda)$ is simple and consists in the set of functions $\tau(\lambda)$ which satisfy the following properties:
\begin{enumerate}
   \item[(i)] $\tau(\lambda)$ is  a polynomial of degree $N+2$ in $\sinh^2\lambda$ (or equivalently in $\cosh(2\lambda)$),
   \item[(ii)] its leading asymptotic behavior when $\lambda\to\pm\infty$ is
\begin{equation}\label{T-asympt1}
   \tau(\lambda)\underset{\lambda\to\pm\infty}{\sim}
   \frac{\kappa_+\kappa_-\, \cosh(\tau_+-\tau_-)}{2^{2N+1}\, \sinh\varsigma_+\sinh\varsigma_-}\,
   e^{\pm2(N+2)\lambda} ,
\end{equation}
   \item[(iii)] its values at $\eta/2$ and at $\eta/2+i\pi/2$ are respectively given by
\begin{align}
   &\tau(\eta/2)=2\,(-1)^N\cosh\eta\,{\det}_q M(0), \label{t-value1}\\
   &\tau(\eta/2+i\pi/2)=-2\,\cosh\eta\,\coth\varsigma_+\coth\varsigma_-\,{\det}_qM(i\pi/2), \label{t-value2}
\end{align}
\item[(iv)] it satisfies the conditions
\begin{equation}\label{cond-SoV-spectrum}
   \tau(\xi_n+\eta/2)\,\tau(\xi_n-\eta/2)
   =-\frac{\det_q K_+(\xi_n)\, \det_q \mathcal{U}_-(\xi_n)}{\sinh(2\xi_n+\eta)\, \sinh(2\xi_n-\eta)},
   \quad
   \forall n\in\{1,\ldots,N\}.
\end{equation}
\end{enumerate}
The one-dimensional right and left $\mathcal{T}(\lambda)$-eigenstates associated with the eigenvalue $\tau(\lambda)$ are respectively generated by the following vectors
\begin{align}
   &\ket{\Psi_t}
   =\!\! \sum_{\mathbf{h}\in\{0,1\}^N}\prod_{n=1}^N\! Q_{\tau}(\xi_n^{(h_n)})\ e^{-\sum_jh_j\xi_j}\,\widehat{V}(\xi_1^{(h_1)},\ldots,\xi_N^{(h_N)})\ S_{1\ldots N}(\{\xi\}|\beta) \,\ket{\mathbf{h},\beta+1},
   \label{eigen-R}\\
   &\bra{\Psi_t}=\!\!\sum_{\mathbf{h}\in\{0,1\}^N}\prod_{n=1}^N
   \left[ \left(\frac{\sinh(2\xi_n-2\eta)}{\sinh(2\xi_n+2\eta)}\, 
   \frac{\mathbf{A}(\xi_n+\frac{\eta}{2})}
         {\mathbf{A}(-\xi_n+\frac{\eta}{2})}
   \right)^{\! h_n} Q_{\tau}(\xi_n^{(h_n)})\right]
   \nonumber\\
   &\hspace{3cm}\times
   e^{-\sum_jh_j\xi_j}\,\widehat{V}(\xi_1^{(h_1)},\ldots,\xi_N^{(h_N)})\ 
   \bra{\beta-1,\mathbf{h}}\,  S_{1\ldots N}^{-1}(\{\xi\}|\beta),
   \label{eigen-L}
\end{align}
expressed on the gauged transformed basis \eqref{SOVstate-R} (respectively \eqref{SOVstate-L}).
In these expressions, $Q_{\tau}$ denotes a function on the discrete set of values $\xi_n^{(h_n)}$, $n\in\{1,\ldots,N\}$, $h_n\in\{0,1\}$, which satisfies
\begin{equation}\label{eq-Q-dis}
    \frac{Q_{\tau}(\xi_n^{(1)})}{Q_{\tau}(\xi_n^{(0)})}
    =\frac{t(\xi_n^{(0)})}{\mathbf{A}(\xi_n^{(0)})}
    =\frac{\mathbf{A}(-\xi_n^{(1)})}{t(\xi_n^{(1)})},
\end{equation}
and $\mathbf{A}(\lambda)$ is defined in terms of \eqref{d+} and \eqref{A-} as
\begin{equation}\label{def-A}
  \mathbf{A}(\lambda)
  =e^{-\lambda+\frac{\eta}{2}}\,
       \frac{\sinh(2\lambda+\eta)}{\sinh(2\lambda)}\,\mathsf{d}_+(-\lambda)\,\mathsf{A}_-(\lambda).
\end{equation}
\end{theorem}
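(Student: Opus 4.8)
\textbf{Proof strategy for Theorem~\ref{th-SoV-spectrum}.}
The plan is to exploit the fact, established just before the statement, that the SoV basis \eqref{SOVstate-R}–\eqref{SOVstate-L} separates the spectral problem for $\mathcal{T}^{\mathrm{SOS}}(\lambda)$ at the points $\pm\xi_n^{(h_n)}$, together with the algebraic constraints on $\mathcal{T}(\lambda)$ collected in Section~\ref{sec-diag}. First I would verify the ``only if'' direction: any eigenvalue $\tau(\lambda)$ inherits properties (i)–(iii) directly from the corresponding properties \eqref{T-asympt}, \eqref{T-value1}, \eqref{T-value2} of the operator $\mathcal{T}(\lambda)$, since these are operatorial identities. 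Property (iv) is the quantum determinant relation: evaluating the transfer matrix at $\lambda=\xi_n+\eta/2$ and $\lambda=\xi_n-\eta/2=-\xi_n^{(1)}+\eta/2$ and using the inversion relation \eqref{inv-U-} (equivalently \eqref{det-U-SOS}), combined with \eqref{det-prod}, one obtains $\mathcal{T}(\xi_n+\eta/2)\,\mathcal{T}(\xi_n-\eta/2)$ as a scalar; this is a standard computation in SoV and yields \eqref{cond-SoV-spectrum}.

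For the ``if'' direction — the heart of the theorem — I would argue that the conditions (i)–(iv) determine $\tau(\lambda)$ up to finitely many choices and that each admissible $\tau$ is realized. Because $\tau(\lambda)$ is a polynomial of degree $N+2$ in $\sinh^2\lambda$, it has $N+3$ coefficients; the asymptotics (ii) fixes the leading one, the two values (iii) impose two more linear constraints, and the $N$ quadratic relations (iv) pin down the remaining $N$ degrees of freedom up to a discrete (generically $2^N$) ambiguity. I would then show directly that the vector $\ket{\Psi_t}$ in \eqref{eigen-R}, built from any solution $Q_\tau$ of the discrete system \eqref{eq-Q-dis}, is a genuine eigenvector. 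Concretely: act with $\mathcal{T}^{\mathrm{SOS}}(\lambda)$ on $S_{1\ldots N}^{-1}\ket{\Psi_t}$; since $\mathcal{T}^{\mathrm{SOS}}(\lambda)$ and $\mathcal{T}^{\mathrm{SOS}}(\xi_n^{(h_n)})$ agree at $N+3$ points (the $2N$ points $\pm\xi_n^{(h_n)}$ together with $\eta/2$, $\eta/2+i\pi/2$, and the known leading coefficient — using the parities $\mathcal{T}^{\mathrm{SOS}}(\lambda)=\mathcal{T}^{\mathrm{SOS}}(-\lambda)$ and the polynomial degree), it suffices to check the eigenvalue equation at $\lambda=\xi_n^{(h_n)}$. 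At such a point the separated action \eqref{act-AL}, \eqref{act-DR} reduces $\mathcal{T}^{\mathrm{SOS}}(\xi_n^{(h_n)})$ on $\ket{\mathbf{h},\beta+1}$ to a two-term recursion in the single index $h_n$, which is exactly solved by the product $\prod_n Q_\tau(\xi_n^{(h_n)})$ precisely when \eqref{eq-Q-dis} holds — here the definition \eqref{def-A} of $\mathbf{A}(\lambda)$ in terms of $\mathsf{d}_+(-\lambda)$ and $\mathsf{A}_-(\lambda)$ is what makes the coefficients in \eqref{eq-Q-dis} match the ones produced by \eqref{act-AL}–\eqref{act-DR}. The left eigenvector \eqref{eigen-L} is handled symmetrically, the extra factor $\bigl(\tfrac{\sinh(2\xi_n-2\eta)}{\sinh(2\xi_n+2\eta)}\tfrac{\mathbf{A}(\xi_n+\eta/2)}{\mathbf{A}(-\xi_n+\eta/2)}\bigr)^{h_n}$ accounting for the difference between the left and right normalizations in \eqref{SOVstate-R}–\eqref{SOVstate-L}.

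Simplicity of the spectrum then follows from the SoV construction itself: an eigenvector is forced, by the separated action at the points $\xi_n^{(h_n)}$, to have coordinates on the basis \eqref{SOVstate-R} that factorize as $\prod_n Q_\tau(\xi_n^{(h_n)})$, and $Q_\tau$ restricted to the $2N$ discrete points is uniquely determined (up to overall scale) by $\tau$ through \eqref{eq-Q-dis}; hence each $\tau\in\Sigma_{\mathcal{T}}$ has a one-dimensional eigenspace. A counting argument — that the number of $\tau$ satisfying (i)–(iv) equals $\dim\mathcal{H}=2^N$ — together with the non-degeneracy guaranteed by \eqref{cond-3} closes the proof that the listed $\tau$'s are exactly the spectrum. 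I expect the main obstacle to be the bookkeeping in this last step: one must check that $\ket{\Psi_t}$ is nonzero (equivalently that the discrete system \eqref{eq-Q-dis} always admits a nontrivial solution, which is where the condition \eqref{cond-3} — or rather its boundary-parameter form \eqref{cond-3bis} — enters to rule out vanishing of the relevant $\mathbf{A}$-values or normalization factors), and that the $2^N$ discrete choices of $Q_\tau$ produce $2^N$ distinct polynomials $\tau$, so that one genuinely exhausts the spectrum rather than overcounting or undercounting it. The gauge covariance \eqref{gauge-transfer} is used throughout to transport statements between $\mathcal{T}^{\mathrm{SOS}}$ and $\mathcal{T}$, and one should also check that the final eigenvectors \eqref{eigen-R}–\eqref{eigen-L}, written back in the original (vertex) basis via $S_{1\ldots N}(\{\xi\}|\beta)$, are independent of the residual freedom in the gauge parameters $\alpha,\beta$.
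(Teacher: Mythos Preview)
Your proposal is correct and follows essentially the same approach as the paper: the theorem is presented there as a recollection from \cite{Nic12,FalKN14,KitMN14}, and the paper's only argument is the paragraph immediately preceding the statement, which points to exactly the ingredients you use --- the separated action of $\mathcal{T}^{\mathrm{SOS}}(\pm\xi_n^{(h_n)})$ on the SoV basis via \eqref{act-AL}, \eqref{act-DR}, \eqref{parity-A}, \eqref{parity-D}, together with the algebraic constraints \eqref{T-asympt}, \eqref{T-value1}, \eqref{T-value2}. Your sketch of the two-term recursion solved by $\prod_n Q_\tau(\xi_n^{(h_n)})$, the interpolation argument in degree $N+2$, and the simplicity/counting step is precisely the standard SoV argument the paper is invoking.
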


Note that the explicit expression of the function $\mathbf{A}(\lambda)$ \eqref{def-A} appearing in \eqref{eigen-R}-\eqref{eq-Q-dis} depends on the particular choices
that we make on one hand for $\epsilon_+\in\{1,-1\}$ in fixing the gauge parameters $\beta-\alpha$ in \eqref{cond-1ter}, and on the other hand for the function $\mathsf{g}_-(\lambda)$ involved in the normalization of the states \eqref{SOVstate-R}-\eqref{SOVstate-L}. In particular, since there is a large freedom in fixing $\mathsf{g}_-(\lambda)$ (it has only to satisfy the relation \eqref{g-}), the resulting function $\mathbf{A}(\lambda)$ may be any function satisfying the relation
\begin{equation}\label{eq-A}
   \mathbf{A}(\lambda+\eta/2)\, \mathbf{A}(-\lambda+\eta/2)
   =-\frac{\det_q K_+(\lambda)\, \det_q \mathcal{U}_-(\lambda)}{\sinh(2\lambda+\eta)\, \sinh(2\lambda-\eta)}.
\end{equation}
In the following, we shall focus on the particular solutions of \eqref{eq-A} which are given by the expressions
\begin{equation}
  \mathbf{A}_{\boldsymbol{\varepsilon}}(\lambda)
  = (-1)^N\, \frac{\sinh(2\lambda+\eta)}{\sinh(2\lambda)}\, \mathbf{a}_{\boldsymbol{\varepsilon}}(\lambda)\, a(\lambda)\, d(-\lambda),
   \label{expr-A}
\end{equation}
with
\begin{multline}\label{a-eps}
  \mathbf{a}_{\boldsymbol{\varepsilon}}(\lambda)
  = \frac{\sinh(\lambda-\frac{\eta}{2}+\epsilon_{\alpha_+}\alpha_+)\,\cosh(\lambda-\frac{\eta}{2}-\epsilon_{\beta_+}\beta_+)}{\sinh(\epsilon_{\alpha_+}\alpha_+)\,\cosh(\epsilon_{\beta_+}\beta_+)}
  \\
  \times
   \frac{\sinh(\lambda-\frac{\eta}{2}+\epsilon_{\alpha_-}\alpha_-)\,\cosh(\lambda-\frac{\eta}{2}+\epsilon_{\beta_-}\beta_-)}{\sinh(\epsilon_{\alpha_-}\alpha_-)\,\cosh(\epsilon_{\beta_-}\beta_-)}
\end{multline}
for any choice of $\boldsymbol{\varepsilon}\equiv(\epsilon_{\alpha_+}, \epsilon_{\alpha_-}, \epsilon_{\beta_+}, \epsilon_{\beta_-})\in\{-1,1\}^4$ such that $\epsilon_{\alpha_+} \epsilon_{\alpha_-}\epsilon_{\beta_+} \epsilon_{\beta_-}=1$.
Such solutions correspond, for a fixed choice of $\epsilon_+$ in \eqref{cond-1ter}, to a choice of $\mathsf{g}_-(\lambda)$ in \eqref{A-}-\eqref{g-} such that
\begin{multline}\label{g-eps}
   \mathsf{g}_-(\lambda+\eta/2)=\epsilon_+\,\epsilon_{\alpha_+}\,(-1)^N\,
   \frac{\sinh(\lambda+\epsilon_{\alpha_-}\alpha_-)\,\cosh(\lambda+\epsilon_{\beta_-}\beta_-)}{\sinh(\epsilon_{\alpha_-}\alpha_-)\,\cosh(\epsilon_{\beta_-}\beta_-)}
   \\
   \times
   \frac{\sinh(\lambda+\epsilon_{\alpha_+}\alpha_+)\,\cosh(\lambda-\epsilon_{\beta_+}\beta_+)}{\sinh(\lambda+\epsilon_{+}\alpha_+)\,\cosh(\lambda-\epsilon_{+}\beta_+)}.
\end{multline}
From now on, we shall also denote by
\begin{equation}\label{states-eps}
   \ket{\mathbf{h},\beta+1}_{\boldsymbol{\varepsilon}} \quad \text{and}\quad {}_{\boldsymbol{\varepsilon}}\bra{\beta-1,\mathbf{h}}
\end{equation}
the states \eqref{SOVstate-R} and \eqref{SOVstate-L} with normalization \eqref{g-eps} given by such a particular choice of $\boldsymbol{\varepsilon}$.

\subsection{Transfer matrix spectrum by $T$-$Q$ functional equation}

We now recall the results of \cite{KitMN14} concerning the rewriting of the above SoV discrete characterization of the transfer matrix spectrum and eigenstates in terms of particular classes of solutions of a functional equation of Baxter's type.

We denote by $\Sigma_{Q}^M$ the set of $Q(\lambda)$ polynomials in $\cosh(2\lambda)$ of degree $M$ of the form
\begin{equation}\label{Q-form1}
    Q(\lambda)=\prod_{j=1}^M\frac{\cosh(2\lambda)-\cosh(2\lambda_j)}{2}
                      =\prod_{j=1}^M\big(\sinh^2\lambda-\sinh^2\lambda_j\big) ,
\end{equation}
with
\begin{equation}
   \cosh(2\lambda_j)\not=\cosh(2\xi_n^{(h)}),
    \quad
    \forall\, (j,n,h)\in\{1,\ldots,M\}\times\{1,\ldots,N\}\times\{0,1\}.
\end{equation}
Moreover, we consider the following function:
\begin{align}
   \mathbf{F}_{\boldsymbol{\varepsilon}}(\lambda)
   &=  \mathfrak{f}^{(N)}_{\boldsymbol{\varepsilon}}\, a(\lambda)\, a(-\lambda)\, d(\lambda)\, d(-\lambda)\, \big(\cosh^2(2\lambda)-\cosh^2\eta\big)\nonumber\\
   &= \mathfrak{f}^{(N)}_{\boldsymbol{\varepsilon}}\, \big(\cosh^2(2\lambda)-\cosh^2\eta\big)
         \prod_{n=1}^N\prod_{h=0}^1\frac{\cosh(2\lambda)-\cosh(2\xi_n^{(h)})}{2},
\end{align}
where
\begin{multline}
  \mathfrak{f}^{(r)}_{\boldsymbol{\varepsilon}}
  \equiv \mathfrak{f}^{(r)}_{\boldsymbol{\varepsilon}}(\tau_+,\tau_-,\alpha_+,\alpha_-,\beta_+,\beta_-)
  =
  \frac{2\kappa_+\kappa_-}  {\sinh\varsigma_+\,\sinh\varsigma_-}\,
  \big[\cosh(\tau_+-\tau_-)\\
  -\epsilon_{\alpha_+}\epsilon_{\alpha_-}\cosh(\epsilon_{\alpha_+}\alpha_++\epsilon_{\alpha_-}\alpha_--\epsilon_{\beta_+}\beta_++\epsilon_{\beta_-}\beta_-+(N-1-2r)\eta)\big].
\end{multline}

\begin{theorem}\label{th-inhom}
Let us suppose that the inhomogeneity parameters are generic \eqref{cond-inh} and that the condition \eqref{cond-3bis} is satisfied. 
Suppose moreover that, for a given choice of $\boldsymbol{\varepsilon}\equiv( \epsilon_{\alpha_+}, \epsilon_{\alpha_-}, \epsilon_{\beta_+}, \epsilon_{\beta_-})\in\{-1,1\}^4$ such that $\epsilon_{\alpha_+} \epsilon_{\alpha_-}\epsilon_{\beta_+} \epsilon_{\beta_-}=1$,
\begin{equation}\label{cond-inhom-N}
  \forall r\in\{0,...,N-1\}, \qquad \mathfrak{f}^{(r)}_{\boldsymbol{\varepsilon}}(\tau_+,\tau_-,\alpha_+,\alpha_-,\beta_+,\beta_-)\not=0.
\end{equation}
Then, a function $\tau(\lambda)$ is an eigenvalue of the transfer matrix $\mathcal{T}(\lambda)$ (i.e. $\tau(\lambda)\in\Sigma_\mathcal{T}$) if and only if it is an entire function of $\lambda$ such that there exists a unique $Q(\lambda)\in\Sigma_{Q}^N$ satisfying
\begin{equation}\label{inhom-TQ}
   \tau(\lambda)\, Q(\lambda)=\mathbf{A}_{\boldsymbol{\varepsilon}}(\lambda)\, Q(\lambda-\eta)
   +\mathbf{A}_{\boldsymbol{\varepsilon}}(-\lambda)\, Q(\lambda+\eta)
   +\mathbf{F}_{\boldsymbol{\varepsilon}}(\lambda).
\end{equation}
\end{theorem}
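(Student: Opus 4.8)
The plan is to establish the equivalence between the discrete SoV characterization of Theorem~\ref{th-SoV-spectrum} and the inhomogeneous functional $T$-$Q$ equation \eqref{inhom-TQ}, in the spirit of \cite{KitMN14}, by exploiting the interpolation structure of the objects involved. First I would note that by Theorem~\ref{th-SoV-spectrum}, $\tau(\lambda)\in\Sigma_\mathcal{T}$ is a polynomial of degree $N+2$ in $\sinh^2\lambda$ fixed by its values at the $N$ pairs $\pm\xi_n^{(h_n)}$ through \eqref{cond-SoV-spectrum}, together with the two extra conditions \eqref{t-value1}-\eqref{t-value2} and the leading asymptotics \eqref{T-asympt1}; in total this is exactly $N+3$ data, matching the number of coefficients of such a polynomial. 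The key observation is that the right-hand side of \eqref{inhom-TQ}, namely $\mathbf{A}_{\boldsymbol{\varepsilon}}(\lambda)\,Q(\lambda-\eta)+\mathbf{A}_{\boldsymbol{\varepsilon}}(-\lambda)\,Q(\lambda+\eta)+\mathbf{F}_{\boldsymbol{\varepsilon}}(\lambda)$, when divided by $Q(\lambda)$, is \emph{a priori} a ratio of entire functions, and the requirement that $\tau(\lambda)$ be entire forces the residues of this ratio at the zeros $\lambda_j$ of $Q$ to vanish; this is precisely the set of Bethe-type equations that selects the admissible $Q$.

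For the forward direction ($\tau\in\Sigma_\mathcal{T}\Rightarrow$ existence and uniqueness of $Q$), I would proceed as follows. Starting from a solution $\tau(\lambda)$ of the discrete system, define $Q_\tau$ on the discrete set $\{\xi_n^{(h)}\}$ by \eqref{eq-Q-dis}; one checks using \eqref{cond-SoV-spectrum} and the relation \eqref{eq-A} satisfied by $\mathbf{A}_{\boldsymbol{\varepsilon}}$ that the two expressions for the ratio $Q_\tau(\xi_n^{(1)})/Q_\tau(\xi_n^{(0)})$ in \eqref{eq-Q-dis} are consistent. Then I would show that the functional equation \eqref{inhom-TQ}, evaluated at the $2N$ points $\xi_n^{(h)}$, is automatically satisfied by any polynomial $Q\in\Sigma_Q^N$ interpolating these discrete values: indeed at $\lambda=\xi_n^{(0)}=\xi_n+\eta/2$ one has $d(\lambda)=0$ so $\mathbf{A}_{\boldsymbol{\varepsilon}}(-\lambda)$ and $\mathbf{F}_{\boldsymbol{\varepsilon}}(\lambda)$ contain a zero from the appropriate $a$/$d$ factors, reducing \eqref{inhom-TQ} to the first relation in \eqref{eq-Q-dis}, and similarly at $\xi_n^{(1)}$ to the second. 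The degree and asymptotics of both sides then have to be matched: the right-hand side of \eqref{inhom-TQ} is, by construction of $\mathbf{A}_{\boldsymbol{\varepsilon}}$ \eqref{expr-A}-\eqref{a-eps} and $\mathbf{F}_{\boldsymbol{\varepsilon}}$, a polynomial of degree $N+2$ in $\sinh^2\lambda$ once divided by $Q$; here the condition \eqref{cond-inhom-N} on the $\mathfrak{f}^{(r)}_{\boldsymbol{\varepsilon}}$ enters crucially to guarantee that the leading term of $\mathbf{F}_{\boldsymbol{\varepsilon}}$ does not conspire to lower the degree, so that matching \eqref{T-asympt1} pins down $Q$ uniquely (its $N$ roots and overall normalization being determined). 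Conversely, given $(\tau,Q)$ solving \eqref{inhom-TQ} with $\tau$ entire, evaluating at $\pm\xi_n^{(h)}$ recovers \eqref{cond-SoV-spectrum}, and evaluating at $\eta/2$, $\eta/2+i\pi/2$ together with the asymptotic analysis recovers \eqref{t-value1}-\eqref{t-value2} and \eqref{T-asympt1}, so $\tau\in\Sigma_\mathcal{T}$ by Theorem~\ref{th-SoV-spectrum}.

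The main obstacle I anticipate is the uniqueness of $Q$ and the precise role of the non-degeneracy condition \eqref{cond-inhom-N}. Matching the discrete data at the $2N$ points $\xi_n^{(h)}$ only constrains $Q$ up to the freedom of adding a polynomial vanishing at all these points, i.e. a multiple of $a(\lambda)a(-\lambda)d(\lambda)d(-\lambda)$ — but such a polynomial has degree $2N$ in $\sinh^2\lambda$, far exceeding $N$, so within $\Sigma_Q^N$ uniqueness should follow from a careful dimension count. The delicate point is rather that the inhomogeneous term $\mathbf{F}_{\boldsymbol{\varepsilon}}(\lambda)$ has degree $N+2$ in $\sinh^2\lambda$ while $\mathbf{A}_{\boldsymbol{\varepsilon}}(\lambda)Q(\lambda-\eta)$ has degree $N+1+N=2N+1$; the cancellation of the top $N-1$ orders between the two "$\mathbf{A}$" terms (which occurs because $\mathbf{A}_{\boldsymbol{\varepsilon}}(\lambda)+\mathbf{A}_{\boldsymbol{\varepsilon}}(-\lambda)$ has lower degree than each summand, a parity effect) must be tracked order by order, and the surviving coefficients at orders $N+2,\dots$ are governed exactly by the $\mathfrak{f}^{(r)}_{\boldsymbol{\varepsilon}}$. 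Showing that \eqref{cond-inhom-N} is both necessary and sufficient for a consistent, unique polynomial solution $Q$ of degree exactly $N$ is where the real work lies; I would handle it by expanding all functions in the variable $z=\cosh(2\lambda)$ (or $e^{2\lambda}$) and analyzing the resulting linear-algebraic system for the coefficients of $Q$, invoking \eqref{cond-inhom-N} to ensure the relevant submatrix is invertible.
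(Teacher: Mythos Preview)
The paper does not actually prove Theorem~\ref{th-inhom}: it is stated as a recalled result from \cite{KitMN14} (see the opening sentence of Section~4.3). Your outline follows the standard strategy of that reference and is correct in spirit: the equivalence between the discrete SoV characterization of Theorem~\ref{th-SoV-spectrum} and the functional equation \eqref{inhom-TQ} is obtained by observing that at the $2N$ points $\xi_n^{(h)}$ the inhomogeneous term $\mathbf{F}_{\boldsymbol{\varepsilon}}$ and one of the two $\mathbf{A}_{\boldsymbol{\varepsilon}}$-terms vanish, so that \eqref{inhom-TQ} collapses to the discrete relations \eqref{eq-Q-dis}, while the remaining polynomial and asymptotic constraints match the extra data \eqref{T-asympt1}--\eqref{t-value2}.

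One point in your write-up should be sharpened. You say that condition \eqref{cond-inhom-N} ensures ``the leading term of $\mathbf{F}_{\boldsymbol{\varepsilon}}$ does not conspire to lower the degree''; this is not quite right, since $\mathbf{F}_{\boldsymbol{\varepsilon}}$ involves only $\mathfrak{f}^{(N)}_{\boldsymbol{\varepsilon}}$, about which \eqref{cond-inhom-N} says nothing. As you correctly identify in your final paragraph, the coefficients $\mathfrak{f}^{(r)}_{\boldsymbol{\varepsilon}}$ for $r<N$ arise instead as the leading asymptotic coefficients of the \emph{homogeneous} combination $\mathbf{A}_{\boldsymbol{\varepsilon}}(\lambda)\,Q(\lambda-\eta)+\mathbf{A}_{\boldsymbol{\varepsilon}}(-\lambda)\,Q(\lambda+\eta)$ when $Q$ has degree $r$ in $\cosh(2\lambda)$. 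Their non-vanishing is precisely what forbids a lower-degree polynomial solution of the associated homogeneous $T$-$Q$ equation (compare Theorem~\ref{th-hom-partiel}), and equivalently what makes the triangular linear system for the coefficients of $Q$ in the variable $\cosh(2\lambda)$ have non-zero diagonal entries --- hence existence and uniqueness of $Q\in\Sigma_Q^N$. Your final paragraph captures this mechanism; the earlier sentence should be amended accordingly.
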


An interesting particular case of the above theorem corresponds to the situation in which the function $\mathbf{F}_{\boldsymbol{\varepsilon}}(\lambda)$ cancels, so that we obtain a complete description of the transfer matrix spectrum in terms of solutions of a usual (homogeneous) T-Q functional equation. Denoting with $\Sigma_{\boldsymbol{\varepsilon},\mathcal{T}}^M$ the set of entire functions of $\lambda$ which can be expressed in the form 
\begin{equation}\label{ratio-Q}
   \frac{\mathbf{A}_{\boldsymbol{\varepsilon}}(\lambda)\, Q(\lambda-\eta)+\mathbf{A}_{\boldsymbol{\varepsilon}}(-\lambda)\, Q(\lambda+\eta)}{Q(\lambda)}
\end{equation}
in terms of some polynomial $Q(\lambda)\in\Sigma_{Q}^M$ for a given choice of $\boldsymbol{\varepsilon}\equiv(\epsilon_{\alpha_+}, \epsilon_{\alpha_-}, \epsilon_{\beta_+}, \epsilon_{\beta_-})\in\{-1,1\}^4$, we obtain in that case the following corollary:

\begin{corollary}
Let us suppose that the inhomogeneity parameters are generic \eqref{cond-inh} and that the conditions \eqref{cond-3bis} and \eqref{cond-inhom-N} are satisfied. Suppose moreover that
\begin{equation}\label{cond-hom-N}
   \mathfrak{f}^{(N)}_{\boldsymbol{\varepsilon}}(\tau_+,\tau_-,\alpha_+,\alpha_-,\beta_+,\beta_-)=0.
\end{equation}
Then, $\tau(\lambda)\in\Sigma_\mathcal{T}$ if and only if  it is an entire function of $\lambda$ such that there exists a unique $Q(\lambda)\in\Sigma_{Q}^N$ satisfying
%
\begin{equation}\label{hom-TQ}
   \tau(\lambda)\, Q(\lambda)=\mathbf{A}_{\boldsymbol{\varepsilon}}(\lambda)\, Q(\lambda-\eta)
   +\mathbf{A}_{\boldsymbol{\varepsilon}}(-\lambda)\, Q(\lambda+\eta).
\end{equation}
In other words, $\Sigma_{\boldsymbol{\varepsilon},\mathcal{T}}^N=\Sigma_\mathcal{T}$.
\end{corollary}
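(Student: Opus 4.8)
The plan is to deduce the Corollary from Theorem~\ref{th-inhom} by showing that, under the additional hypothesis \eqref{cond-hom-N}, the inhomogeneous term $\mathbf{F}_{\boldsymbol{\varepsilon}}(\lambda)$ in the $T$-$Q$ equation \eqref{inhom-TQ} identically vanishes. Indeed, $\mathbf{F}_{\boldsymbol{\varepsilon}}(\lambda)$ factors as $\mathfrak{f}^{(N)}_{\boldsymbol{\varepsilon}}$ times a fixed polynomial in $\cosh(2\lambda)$ that does not depend on $\boldsymbol{\varepsilon}$; hence \eqref{cond-hom-N} is exactly the statement $\mathbf{F}_{\boldsymbol{\varepsilon}}\equiv 0$. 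Once this is observed, the inhomogeneous equation \eqref{inhom-TQ} collapses to the homogeneous one \eqref{hom-TQ}, and the two characterizations of $\Sigma_\mathcal{T}$ coincide.

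\textbf{Key steps.} First I would invoke Theorem~\ref{th-inhom}: since the conditions \eqref{cond-3bis} and \eqref{cond-inhom-N} are assumed, a function $\tau(\lambda)$ lies in $\Sigma_\mathcal{T}$ if and only if it is entire and there is a unique $Q\in\Sigma_Q^N$ with $\tau(\lambda)Q(\lambda)=\mathbf{A}_{\boldsymbol{\varepsilon}}(\lambda)Q(\lambda-\eta)+\mathbf{A}_{\boldsymbol{\varepsilon}}(-\lambda)Q(\lambda+\eta)+\mathbf{F}_{\boldsymbol{\varepsilon}}(\lambda)$. Second, I substitute \eqref{cond-hom-N} into the defining expression of $\mathbf{F}_{\boldsymbol{\varepsilon}}(\lambda)$: the prefactor $\mathfrak{f}^{(N)}_{\boldsymbol{\varepsilon}}$ multiplies $\big(\cosh^2(2\lambda)-\cosh^2\eta\big)\prod_{n=1}^N\prod_{h=0}^1\frac{\cosh(2\lambda)-\cosh(2\xi_n^{(h)})}{2}$, so $\mathfrak{f}^{(N)}_{\boldsymbol{\varepsilon}}=0$ forces $\mathbf{F}_{\boldsymbol{\varepsilon}}\equiv 0$. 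Third, with the inhomogeneous term gone, \eqref{inhom-TQ} becomes \eqref{hom-TQ}, and therefore $\tau(\lambda)\in\Sigma_\mathcal{T}$ if and only if $\tau$ is entire and expressible in the form \eqref{ratio-Q} for a unique $Q\in\Sigma_Q^N$; that is, $\Sigma_\mathcal{T}\subseteq\Sigma^N_{\boldsymbol{\varepsilon},\mathcal{T}}$. Fourth, for the reverse inclusion, I note that any $\tau$ of the form \eqref{ratio-Q} with $Q\in\Sigma_Q^N$ is entire by definition of $\Sigma^N_{\boldsymbol{\varepsilon},\mathcal{T}}$ and satisfies \eqref{hom-TQ}, hence \eqref{inhom-TQ} with $\mathbf{F}_{\boldsymbol{\varepsilon}}\equiv0$, so Theorem~\ref{th-inhom} places it in $\Sigma_\mathcal{T}$; uniqueness of the associated $Q$ is inherited from the theorem. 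Combining the two inclusions gives $\Sigma^N_{\boldsymbol{\varepsilon},\mathcal{T}}=\Sigma_\mathcal{T}$.

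\textbf{Main obstacle.} The essentially only nontrivial point is the one-line verification that $\mathbf{F}_{\boldsymbol{\varepsilon}}(\lambda)$ vanishes identically precisely when $\mathfrak{f}^{(N)}_{\boldsymbol{\varepsilon}}=0$; everything else is a purely logical unpacking of Theorem~\ref{th-inhom}. One should be mildly careful to check that the hypothesis \eqref{cond-inhom-N} (non-vanishing of $\mathfrak{f}^{(r)}_{\boldsymbol{\varepsilon}}$ for $r\in\{0,\dots,N-1\}$) is compatible with \eqref{cond-hom-N} (vanishing at $r=N$) — i.e. that these conditions constrain distinct combinations of the boundary parameters and can be simultaneously imposed — but since $\mathfrak{f}^{(r)}_{\boldsymbol{\varepsilon}}$ depends on $r$ only through the shift $(N-1-2r)\eta$ inside a $\cosh$, the values at $r=0,\dots,N-1$ and at $r=N$ are generically independent, so no contradiction arises. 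Hence the proof reduces to citing Theorem~\ref{th-inhom} and observing $\mathbf{F}_{\boldsymbol{\varepsilon}}\equiv 0$.
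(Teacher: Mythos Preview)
Your proposal is correct and follows exactly the approach implicit in the paper: the Corollary is presented there as the special case of Theorem~\ref{th-inhom} in which the inhomogeneous term $\mathbf{F}_{\boldsymbol{\varepsilon}}(\lambda)$ vanishes, and since $\mathbf{F}_{\boldsymbol{\varepsilon}}$ is $\mathfrak{f}^{(N)}_{\boldsymbol{\varepsilon}}$ times a fixed polynomial, condition \eqref{cond-hom-N} is precisely what kills it. Your argument simply makes this explicit.
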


There are other situations in which the spectrum of the transfer matrix is partially given by solutions of the form \eqref{Q-form1} of a homogeneous T-Q equation, as described by the following theorem:

\begin{theorem}\label{th-hom-partiel}
Let us suppose that the inhomogeneity parameters are generic \eqref{cond-inh} and that the condition \eqref{cond-3bis} is satisfied. 
Suppose moreover that, for a given choice of $\boldsymbol{\varepsilon}\equiv( \epsilon_{\alpha_+}, \epsilon_{\alpha_-}, \epsilon_{\beta_+}, \epsilon_{\beta_-})\in\{-1,1\}^4$ such that $\epsilon_{\alpha_+} \epsilon_{\alpha_-}\epsilon_{\beta_+} \epsilon_{\beta_-}=1$, there exists $M\in\{0,\ldots,N-1\}$ such that
\begin{equation}\label{cond-hom-M}
   \mathfrak{f}^{(M)}_{\boldsymbol{\varepsilon}}(\tau_+,\tau_-,\alpha_+,\alpha_-,\beta_+,\beta_-)=0.
\end{equation}
Then  $\Sigma_{\boldsymbol{\varepsilon},\mathcal{T}}^M\subset\Sigma_\mathcal{T}$ and, for any $\tau(\lambda)\in\Sigma_{\boldsymbol{\varepsilon},\mathcal{T}}^M$, there exists one and only one $Q(\lambda)\in\Sigma_{\boldsymbol{\varepsilon},Q}^M$ such that $\tau(\lambda)$ and $Q(\lambda)$ satisfy the functional equation \eqref{hom-TQ}, whereas for each $\tau(\lambda)\in\Sigma_\mathcal{T}\setminus\Sigma_{\boldsymbol{\varepsilon},\mathcal{T}}^M$, there exists one and only one $Q(\lambda)\in\Sigma_{\boldsymbol{\varepsilon},Q}^N$ such that $\tau(\lambda)$ and $Q(\lambda)$ satisfy the functional equation \eqref{inhom-TQ}.
\end{theorem}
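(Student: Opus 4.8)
The plan is to reduce everything to the discrete SoV characterization of Theorem~\ref{th-SoV-spectrum}, using the same dictionary between the discrete $Q$-equation~\eqref{eq-Q-dis} and the functional $T$-$Q$ equations~\eqref{hom-TQ}--\eqref{inhom-TQ} as in Theorem~\ref{th-inhom} (this is the route of \cite{KitMN14}); the subtlety is that, since $\mathfrak{f}^{(M)}_{\boldsymbol{\varepsilon}}=0$ with $M<N$, hypothesis~\eqref{cond-inhom-N} fails, so one must simultaneously allow a ``small'' $Q$ of degree $M$ solving the homogeneous equation and a ``large'' $Q$ of degree $N$ solving the inhomogeneous one, and prove that every eigenvalue falls into exactly one case. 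First I would record the analytic facts about $\mathbf{A}_{\boldsymbol{\varepsilon}}$ used throughout: from~\eqref{expr-A}--\eqref{a-eps}, $\mathbf{A}_{\boldsymbol{\varepsilon}}(\lambda)$ vanishes at each $\xi_n^{(1)}$ (zero of $a(\lambda)$) and $\mathbf{A}_{\boldsymbol{\varepsilon}}(-\lambda)$ at each $\xi_n^{(0)}$ (zero of $d(\lambda)$), while $\mathbf{A}_{\boldsymbol{\varepsilon}}(\xi_n^{(0)})$ and $\mathbf{A}_{\boldsymbol{\varepsilon}}(-\xi_n^{(1)})$ do not vanish; the prefactor $\sinh(2\lambda+\eta)/\sinh(2\lambda)$ forces $\mathbf{A}_{\boldsymbol{\varepsilon}}(-\eta/2)=\mathbf{A}_{\boldsymbol{\varepsilon}}(-\eta/2+i\pi/2)=0$, and one checks $\mathbf{a}_{\boldsymbol{\varepsilon}}(\eta/2)=1$ and $\mathbf{a}_{\boldsymbol{\varepsilon}}(\eta/2+i\pi/2)=-\coth\varsigma_+\coth\varsigma_-$ using~\eqref{reparam-bords}; finally $\mathbf{A}_{\boldsymbol{\varepsilon}}$ is $i\pi$-periodic, obeys the quadratic identity~\eqref{eq-A}, and has leading behaviour $\mathbf{A}_{\boldsymbol{\varepsilon}}(\lambda)\sim c^{\pm}_{\boldsymbol{\varepsilon}}\,e^{\pm2(N+2)\lambda}$ as $\lambda\to\pm\infty$, with $c^{\pm}_{\boldsymbol{\varepsilon}}$ read off from~\eqref{a-eps} and~\eqref{reparam-bords}. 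Condition~\eqref{cond-3bis} (equivalently~\eqref{cond-3}) guarantees that the SoV states form a basis, so that Theorem~\ref{th-SoV-spectrum} applies.

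For the inclusion $\Sigma^M_{\boldsymbol{\varepsilon},\mathcal{T}}\subset\Sigma_\mathcal{T}$ I would take $\tau\in\Sigma^M_{\boldsymbol{\varepsilon},\mathcal{T}}$, i.e.\ an entire function of the form~\eqref{ratio-Q} for a degree-$M$ polynomial $Q$ of the form~\eqref{Q-form1}, and verify the four conditions of Theorem~\ref{th-SoV-spectrum}. Property~(iv) follows by evaluating~\eqref{hom-TQ} at $\lambda=\xi_n^{(0)}$ and $\lambda=\xi_n^{(1)}$: the vanishings above kill one term in each case, giving $\tau(\xi_n^{(0)})=\mathbf{A}_{\boldsymbol{\varepsilon}}(\xi_n^{(0)})\,Q(\xi_n^{(1)})/Q(\xi_n^{(0)})$ and $\tau(\xi_n^{(1)})=\mathbf{A}_{\boldsymbol{\varepsilon}}(-\xi_n^{(1)})\,Q(\xi_n^{(0)})/Q(\xi_n^{(1)})$ (with $Q(\xi_n^{(h)})\neq0$ by~\eqref{Q-form1}), whose product is exactly~\eqref{cond-SoV-spectrum} by~\eqref{eq-A}. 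For~(i), the parity and $i\pi$-periodicity of $\mathbf{A}_{\boldsymbol{\varepsilon}}$ and of $Q$ make $\tau$ even and $i\pi$-periodic, and being entire with growth $\mathrm{O}\big(e^{2(N+2)|\mathrm{Re}\,\lambda|}\big)$ it is a polynomial in $\sinh^2\lambda$ of degree $\leq N+2$. Property~(ii) is the crux: dividing~\eqref{hom-TQ} by $Q(\lambda)$ and letting $\lambda\to\pm\infty$ gives the top coefficient of $\tau$ as $c^{+}_{\boldsymbol{\varepsilon}}e^{\mp2M\eta}+c^{-}_{\boldsymbol{\varepsilon}}e^{\pm2M\eta}$, and inserting the explicit $c^{\pm}_{\boldsymbol{\varepsilon}}$ and using~\eqref{reparam-bords} this equals $\frac{\epsilon_{\alpha_+}\epsilon_{\alpha_-}\kappa_+\kappa_-}{2^{2N+1}\sinh\varsigma_+\sinh\varsigma_-}\cosh\!\big(\epsilon_{\alpha_+}\alpha_++\epsilon_{\alpha_-}\alpha_--\epsilon_{\beta_+}\beta_++\epsilon_{\beta_-}\beta_-+(N-1-2M)\eta\big)$, which coincides with the value required by~\eqref{T-asympt1} precisely when $\mathfrak{f}^{(M)}_{\boldsymbol{\varepsilon}}=0$; being then non-zero, this forces the degree of $\tau$ in $\sinh^2\lambda$ to be exactly $N+2$. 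Property~(iii) follows by evaluating~\eqref{hom-TQ} at $\lambda=\eta/2$ and $\lambda=\eta/2+i\pi/2$, where $\mathbf{A}_{\boldsymbol{\varepsilon}}(-\lambda)$ vanishes: for $Q$ in the admissible class $\Sigma^M_{\boldsymbol{\varepsilon},Q}$ (its roots avoiding $\pm\eta/2$ and $\pm\eta/2+i\pi/2$) one gets $\tau(\eta/2)=\mathbf{A}_{\boldsymbol{\varepsilon}}(\eta/2)$ and $\tau(\eta/2+i\pi/2)=\mathbf{A}_{\boldsymbol{\varepsilon}}(\eta/2+i\pi/2)$, which reproduce~\eqref{t-value1}--\eqref{t-value2} via $\mathbf{a}_{\boldsymbol{\varepsilon}}(\eta/2)=1$, $\mathbf{a}_{\boldsymbol{\varepsilon}}(\eta/2+i\pi/2)=-\coth\varsigma_+\coth\varsigma_-$, the factorisation~\eqref{det-M} and~\eqref{reparam-bords}. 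Hence $\tau\in\Sigma_\mathcal{T}$. Uniqueness of the associated $Q\in\Sigma^M_{\boldsymbol{\varepsilon},Q}$ follows from a quantum Wronskian argument: two such $Q_1,Q_2$ yield $\Delta(\lambda)=Q_1(\lambda)Q_2(\lambda-\eta)-Q_1(\lambda-\eta)Q_2(\lambda)$ obeying $\mathbf{A}_{\boldsymbol{\varepsilon}}(\lambda)\,\Delta(\lambda)=\mathbf{A}_{\boldsymbol{\varepsilon}}(-\lambda)\,\Delta(\lambda+\eta)$, and a degree/zero count using $M<N$ forces $\Delta\equiv0$, so that $Q_1/Q_2$ is $\eta$- and $i\pi$-periodic, hence constant, hence $1$ by the normalisation in~\eqref{Q-form1}.

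For the converse, given $\tau\in\Sigma_\mathcal{T}\setminus\Sigma^M_{\boldsymbol{\varepsilon},\mathcal{T}}$ I would use Theorem~\ref{th-SoV-spectrum} to build the discrete function $Q_\tau$ on $\{\xi_n^{(h)}\}$ satisfying~\eqref{eq-Q-dis} and reconstruct from it a degree-$N$ polynomial $Q$ of the form~\eqref{Q-form1} — the usual SoV reconstruction, which always exists, with the required non-degeneracy supplied by~\eqref{cond-3bis}. Substituting $Q$ into~\eqref{inhom-TQ}, the discrepancy $\mathbf{F}(\lambda):=\tau(\lambda)Q(\lambda)-\mathbf{A}_{\boldsymbol{\varepsilon}}(\lambda)Q(\lambda-\eta)-\mathbf{A}_{\boldsymbol{\varepsilon}}(-\lambda)Q(\lambda+\eta)$ is — by the same evaluations as above, now with $\tau$ satisfying (i)--(iv) — an entire, even, $i\pi$-periodic polynomial in $\sinh^2\lambda$ of degree $\leq2N+2$ which vanishes at the $2N+2$ values $\{\sinh^2\xi_n^{(h)}\}_{n,h}$ together with $\sinh^2(\eta/2)$ and $\sinh^2(\eta/2+i\pi/2)$; hence it is proportional to $a(\lambda)a(-\lambda)d(\lambda)d(-\lambda)\big(\cosh^2(2\lambda)-\cosh^2\eta\big)$, and comparing top coefficients (via the asymptotics of $\tau$, $Q$ and $\mathbf{A}_{\boldsymbol{\varepsilon}}$) fixes the constant to $\mathfrak{f}^{(N)}_{\boldsymbol{\varepsilon}}$, i.e.\ $\mathbf{F}=\mathbf{F}_{\boldsymbol{\varepsilon}}$, so $\tau$ and $Q$ satisfy~\eqref{inhom-TQ}. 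Since $\tau\notin\Sigma^M_{\boldsymbol{\varepsilon},\mathcal{T}}$, no polynomial of degree $\leq M$ can solve~\eqref{hom-TQ} with this $\tau$, so the reconstructed $Q$ genuinely has degree $N$; uniqueness of $Q\in\Sigma^N_{\boldsymbol{\varepsilon},Q}$ again follows from simplicity of the spectrum together with the quantum Wronskian.

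The main obstacle is the asymptotic and degree bookkeeping of Step~2(ii) and Step~3: one must show that $\mathfrak{f}^{(M)}_{\boldsymbol{\varepsilon}}=0$ is exactly the condition under which a degree-$M$ solution of~\eqref{hom-TQ} has the leading behaviour~\eqref{T-asympt1} of a genuine transfer-matrix eigenvalue (rather than a spurious function of the wrong degree), and, conversely, that the Baxter-type polynomial reconstructed from the SoV data of an eigenvalue outside $\Sigma^M_{\boldsymbol{\varepsilon},\mathcal{T}}$ has degree exactly $N$; this requires keeping careful track of which factors of $\mathbf{A}_{\boldsymbol{\varepsilon}}$ vanish at the points $\xi_n^{(h)}$, $\pm\eta/2$ and $\pm\eta/2+i\pi/2$, of the admissibility of the roots of $Q$, and of the precise normalisations coming from~\eqref{reparam-bords} and~\eqref{g-eps}. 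An alternative route, deforming from the situation of Theorem~\ref{th-inhom} (all $\mathfrak{f}^{(r)}_{\boldsymbol{\varepsilon}}\neq0$) towards the locus $\mathfrak{f}^{(M)}_{\boldsymbol{\varepsilon}}=0$, seems harder, since one would then have to control how the degree-$N$ description degenerates to the degree-$M$ one in the limit.
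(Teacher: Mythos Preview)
The paper does not actually prove Theorem~\ref{th-hom-partiel}: it is stated without proof in Section~4.3, which begins by explicitly ``recall[ing] the results of \cite{KitMN14}'' and presents Theorems~\ref{th-inhom} and~\ref{th-hom-partiel} as imported statements. So there is no in-paper argument to compare against; your proposal is effectively a reconstruction of the proof from \cite{KitMN14}, which you correctly flag as the intended route.

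Your outline is sound and follows the standard pattern: verify that any $\tau$ coming from a degree-$M$ homogeneous $T$-$Q$ solution satisfies the four SoV conditions of Theorem~\ref{th-SoV-spectrum}, identifying $\mathfrak{f}^{(M)}_{\boldsymbol{\varepsilon}}=0$ as precisely the asymptotic matching condition~(ii); conversely, for an eigenvalue not of this form, reconstruct the degree-$N$ polynomial $Q$ from the discrete SoV data~\eqref{eq-Q-dis} and show that the defect $\mathbf{F}$ is forced to equal $\mathbf{F}_{\boldsymbol{\varepsilon}}$ by interpolation. A couple of small points to tighten. First, in checking~(iii) you impose that the roots of $Q$ avoid $\pm\eta/2$ and $\pm\eta/2+i\pi/2$, but the paper's class $\Sigma_Q^M$ only excludes the inhomogeneity values $\xi_n^{(h)}$; the correct argument is that $\tau$ is assumed entire, so if $Q(\eta/2)=0$ one passes to the limit in~\eqref{ratio-Q} rather than invoking an extra admissibility hypothesis. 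Second, in the converse you assert that the reconstructed $Q$ ``genuinely has degree $N$'' because $\tau\notin\Sigma^M_{\boldsymbol{\varepsilon},\mathcal{T}}$; strictly, the SoV reconstruction always produces a polynomial of degree exactly $N$ (it is determined by $2N$ interpolation values, fixed up to an overall constant by the normalisation in~\eqref{Q-form1}), so this degree is automatic and the exclusion of $\Sigma^M_{\boldsymbol{\varepsilon},\mathcal{T}}$ is only needed to ensure the dichotomy is clean, not to force the degree. Finally, note that the paper's notation $\Sigma_{\boldsymbol{\varepsilon},Q}^M$ in the statement is an undefined variant of $\Sigma_Q^M$; you are right to read it as the latter.
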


\subsection{Separate states and eigenstates}

For any polynomial $Q(\lambda)$ of the form \eqref{Q-form1} and for a given choice of $\boldsymbol{\varepsilon}$, let us consider the states
\begin{multline}\label{separate-R}
 \ket{Q}_{\boldsymbol{\varepsilon}}=\frac 1{N(\{\xi\},\beta)}\sum_{\mathbf{h}\in\{0,1\}^N}\prod_{n=1}^N\! Q(\xi_n^{(h_n)})\ e^{-\sum_jh_j\xi_j}\\ 
 \times\widehat{V}(\xi_1^{(h_1)},\ldots,\xi_N^{(h_N)})\ S_{1\ldots N}(\{\xi\}|\beta) \,\ket{\mathbf{h},\beta+1}_{\boldsymbol{\varepsilon}},
\end{multline}
and
\begin{multline}
{}_{\boldsymbol{\varepsilon}}\bra{Q}=\frac 1{N(\{\xi\},\beta)}\sum_{\mathbf{h}\in\{0,1\}^N}\prod_{n=1}^N
   \left[ ( \mathsf{u}_n\, \mathsf{v}_{n,\boldsymbol{\varepsilon}} )^{ h_n} \, Q(\xi_n^{(h_n)})\right] \,
   e^{-\sum_jh_j\xi_j}  \\
   \times\;
   \widehat{V}(\xi_1^{(h_1)},\ldots,\xi_N^{(h_N)})
   {}_{\boldsymbol{\varepsilon}}\bra{\beta-1,\mathbf{h}}\,  S_{1\ldots N}^{-1}(\{\xi\}|\beta),
   \label{separate-L}
\end{multline}
Here we have defined
\begin{align}
   \mathsf{u}_n 
   &= \frac{\sinh(2\xi_n-\eta)}{\sinh(2\xi_n+\eta)}\frac{a(\xi_n+\eta/2)\, d(-\xi_n-\eta/2)}{a(-\xi_n+\eta/2)\, d(\xi_n-\eta/2)} \nonumber\\
   &=-\prod_{j\not= n}\frac{\sinh(\xi_n-\xi_j+\eta)\, \sinh(\xi_n+\xi_j+\eta)}{\sinh(\xi_n+\xi_j-\eta)\,\sinh(\xi_n-\xi_j-\eta)},
   \label{fn}
\end{align}
and
\begin{equation}
  \mathsf{v}_{n,\boldsymbol{\varepsilon}}
  = \frac{\mathbf{a}_{\boldsymbol{\varepsilon}}(\xi_n+\frac{\eta}{2})}{\mathbf{a}_{\boldsymbol{\varepsilon}}(-\xi_n+\frac{\eta}{2})}
  = \frac{\mathbf{a}_{\boldsymbol{\varepsilon}}(\xi_n+\frac{\eta}{2})}{\mathbf{a}_{-\boldsymbol{\varepsilon}}(\xi_n+\frac{\eta}{2})},
\end{equation}
so that 
\begin{equation}
   \frac{\sinh(2\xi_n-2\eta)}{\sinh(2\xi_n+2\eta)}\, 
   \frac{\mathbf{A}_{\boldsymbol{\varepsilon}}(\xi_n+\frac{\eta}{2})}
         {\mathbf{A}_{\boldsymbol{\varepsilon}}(-\xi_n+\frac{\eta}{2})}
   = \mathsf{u}_n\, \mathsf{v}_{n,\boldsymbol{\varepsilon}}.
\end{equation}
States of the form \eqref{separate-R} and \eqref{separate-L} are called {\em separate states}. If moreover $Q(\lambda)$ satisfies the T-Q equation \eqref{hom-TQ} or \eqref{inhom-TQ} with some entire function $\tau(\lambda)$, they are eigenstates of the transfer matrix $\mathcal{T}(\lambda)$.

As in the XXX case \cite{KitMNT17}, it is simple to prove the following identity
\begin{equation}
   \prod_{n=1}^N(-\mathsf{u}_n)^{h_n}\ \widehat{V}(\xi_1^{(h_1)},\ldots,\xi_N^{(h_N)})
   = \frac{\widehat{V}(\xi_1^{(0)},\ldots,\xi_N^{(0)})}{\widehat{V}(\xi_1^{(1)},\ldots,\xi_N^{(1)})}\;
   \widehat{V}(\xi_1^{(1-h_1)},\ldots,\xi_N^{(1-h_N)}),
\end{equation}
so that the state \eqref{separate-L} can equivalently be rewritten as
\begin{multline}
{}_{\boldsymbol{\varepsilon}}\bra{Q}
   =\frac 1{N(\{\xi\},\beta)}\,\,\frac{\widehat{V}(\xi_1^{(0)},\ldots,\xi_N^{(0)})}{\widehat{V}(\xi_1^{(1)},\ldots,\xi_N^{(1)})}
   \sum_{\mathbf{h}\in\{0,1\}^N}\prod_{n=1}^N
   \left[ ( - \mathsf{v}_{n,\boldsymbol{\varepsilon}} )^{ h_n} \, Q(\xi_n^{(h_n)})\right] \\
   \times
   e^{-\sum_jh_j\xi_j}\,\widehat{V}(\xi_1^{(1-h_1)},\ldots,\xi_N^{(1-h_N)})
   \
   {}_{\boldsymbol{\varepsilon}}\bra{\beta-1,\mathbf{h}}\,  S_{1\ldots N}^{-1}(\{\xi\}|\beta).
   \label{separate-L-bis}
\end{multline}

As usual, it is possible to rewrite the separate states in a Bethe-type form, using that
\begin{equation}
  \prod_{n=1}^N Q(\xi_n^{(h_n)})= \prod_{j=1}^M  a_{\mathbf{h}}(\la_j)\,a_{\mathbf{h}}(-\la_j),
\end{equation}
and the formulas \eqref{act-BR}, \eqref{act-BL} for the action of the operators $\mathcal{B}_-^\mathrm{SOS}(\lambda|\beta-1)$ and $\mathcal{B}_-^\mathrm{SOS}(\lambda|\beta+1)$ on the states \eqref{SOVstate-R} and \eqref{SOVstate-L} respectively. We obtain that
\begin{multline}\label{Bethe-state-R}
    \ket{Q}_{\boldsymbol{\varepsilon}}= S_{1\ldots N}(\{\xi\}|\beta)\,
    \widehat{\mathcal{B}}^R_-(\lambda_1|\beta-1)\, \widehat{\mathcal{B}}^R_-(\lambda_2|\beta-3)
    \ldots\\
    \ldots
    \widehat{\mathcal{B}}^R_-(\lambda_M|\beta+1-2M)\,
    \ket{\Omega_{\beta+1-2M}}_{\boldsymbol{\varepsilon}},
\end{multline}
and
\begin{multline}\label{Bethe-state-L}
  {}_{\boldsymbol{\varepsilon}}\bra{Q}
  ={}_{\boldsymbol{\varepsilon}}\bra{\Omega_{\beta-1+2M}}\,
  \widehat{\mathcal{B}}^L_-(\lambda_M|\beta-1+2M)\ldots\\
  \ldots \widehat{\mathcal{B}}^L_-(\lambda_2|\beta+2)\, 
  \widehat{\mathcal{B}}^L_-(\lambda_1|\beta+1)\,
  S_{1\ldots N}^{-1}(\{\xi\}|\beta).
\end{multline}
Here we have defined the renormalized operators $\widehat{\mathcal{B}}^R_-(\lambda |\beta)$ and $\widehat{\mathcal{B}}^L_-(\lambda |\beta)$ in terms of $\mathcal{B}_-^{\mathrm{SOS}}(\la|\beta)$ as
\begin{align}
   &\widehat{\mathcal{B}}^R_-(\lambda |\beta)
   =\frac{(-1)^N}{\mathsf{b}_-(\la|\beta-N)}\,
   \frac{\sinh(\eta\beta)}{\sinh(\eta(\beta-N))}\, \mathcal{B}_-^{\mathrm{SOS}}(\la|\beta),
   \label{B-R}\\
    &\widehat{\mathcal{B}}^L_-(\lambda |\beta)
   =\frac{(-1)^N}{\mathsf{b}_-(\la|\beta+N)}\,
   \frac{\sinh(\eta(\beta+N-1))}{\sinh(\eta(\beta-1))}\, \mathcal{B}_-^{\mathrm{SOS}}(\la|\beta),
   \label{B-L}
\end{align}
and the right and left reference states as
\begin{equation}\label{ref-state-R}
  \ket{\Omega_{\beta+1}}_{\boldsymbol{\varepsilon}}
  = \frac 1{N(\{\xi\},\beta)}\sum_{\mathbf{h}\in\{0,1\}^N} e^{-\sum_jh_j\xi_j}\,\widehat{V}(\xi_1^{(h_1)},\ldots,\xi_N^{(h_N)}) \,\ket{\mathbf{h},\beta+1}_{\boldsymbol{\varepsilon}},
\end{equation}
and
\begin{align}
  {}_{\boldsymbol{\varepsilon}}\bra{\Omega_{\beta-1}}
  &=\frac 1{N(\{\xi\},\beta)}\sum_{\mathbf{h}\in\{0,1\}^N}\prod_{n=1}^N 
  ( \mathsf{u}_n\, \mathsf{v}_{n,\boldsymbol{\varepsilon}} )^{ h_n} \,
   e^{-\sum_jh_j\xi_j}\,\widehat{V}(\xi_1^{(h_1)},\ldots,\xi_N^{(h_N)})\
   {}_{\boldsymbol{\varepsilon}}\bra{\beta-1,\mathbf{h}}
   \\
   &=\frac 1{N(\{\xi\},\beta)}\,\,\frac{\widehat{V}(\xi_1^{(0)},\ldots,\xi_N^{(0)})}{\widehat{V}(\xi_1^{(1)},\ldots,\xi_N^{(1)})}
   \sum_{\mathbf{h}\in\{0,1\}^N}\prod_{n=1}^N ( - \mathsf{v}_{n,\boldsymbol{\varepsilon}} )^{ h_n} 
   \nonumber\\
   &\hspace{3cm}\times
   e^{-\sum_jh_j\xi_j}\,\widehat{V}(\xi_1^{(1-h_1)},\ldots,\xi_N^{(1-h_N)})\
   {}_{\boldsymbol{\varepsilon}}\bra{\beta-1,\mathbf{h}}.
   \label{ref-state-L}
\end{align}

Note that we can define the separate states \eqref{separate-R} and \eqref{separate-L}, as well as the reference states \eqref{ref-state-R} and \eqref{ref-state-L} for different choices of $\boldsymbol{\varepsilon}\in\{-1,1\}^4$.
In particular, we can use the relation between the SoV basis for $\boldsymbol{\varepsilon}$ and $-\boldsymbol{\varepsilon}$,
\begin{align}
   &\ket{\mathbf{h},\beta+1}_{-\boldsymbol{\varepsilon}}
   =\prod_{n=1}^N \mathsf{v}_{n,\boldsymbol{\varepsilon}}^{-h_n}\, 
   \ket{\mathbf{h},\beta+1}_{\boldsymbol{\varepsilon}}
   \\
   &{}_{-\boldsymbol{\varepsilon}}\bra{\beta-1,\mathbf{h}}
   =\prod_{n=1}^N \mathsf{v}_{n,\boldsymbol{\varepsilon}}^{h_n-1}\
   {}_{\boldsymbol{\varepsilon}}\bra{\beta-1,\mathbf{h}},
\end{align}
to re-express the reference states  \eqref{ref-state-R} and \eqref{ref-state-L} associated with $-\boldsymbol{\varepsilon}$ in terms of the SoV basis for $\boldsymbol{\varepsilon}$ as
\begin{equation}\label{ref-state-R-}
  \ket{\Omega_{\beta+1}}_{-\boldsymbol{\varepsilon}}
  = \sum_{\mathbf{h}\in\{0,1\}^N} \prod_{n=1}^N \mathsf{v}_{n,\boldsymbol{\varepsilon}}^{-h_n}\,
  e^{-\sum_jh_j\xi_j}\,\widehat{V}(\xi_1^{(h_1)},\ldots,\xi_N^{(h_N)}) \,\ket{\mathbf{h},\beta+1}_{\boldsymbol{\varepsilon}},
\end{equation}
and
\begin{align}
  {}_{-\boldsymbol{\varepsilon}}\bra{\Omega_{\beta-1}}
  &=\prod_{n=1}^N \mathsf{v}_{n,\boldsymbol{\varepsilon}}^{-1}
  \sum_{\mathbf{h}\in\{0,1\}^N}\prod_{n=1}^N \mathsf{u}_n^{ h_n} \,
   e^{-\sum_jh_j\xi_j}\,\widehat{V}(\xi_1^{(h_1)},\ldots,\xi_N^{(h_N)})\
   {}_{\boldsymbol{\varepsilon}}\bra{\beta-1,\mathbf{h}}
   \\
   &=\frac{\widehat{V}(\xi_1^{(0)},\ldots,\xi_N^{(0)})}{\widehat{V}(\xi_1^{(1)},\ldots,\xi_N^{(1)})}\,
   \prod_{n=1}^N \mathsf{v}_{n,\boldsymbol{\varepsilon}}^{-1}
   \sum_{\mathbf{h}\in\{0,1\}^N}\prod_{n=1}^N ( - 1 )^{ h_n} 
   \nonumber\\
   &\hspace{3cm}\times
   e^{-\sum_jh_j\xi_j}\,\widehat{V}(\xi_1^{(1-h_1)},\ldots,\xi_N^{(1-h_N)})\
   {}_{\boldsymbol{\varepsilon}}\bra{\beta-1,\mathbf{h}}.
   \label{ref-state-L-}
\end{align}

\begin{proposition}
Let us suppose that the hypothesis of Theorem~\ref{th-SoV-spectrum} are satisfied.
For $\tau(\lambda)\in\Sigma_\mathcal{T}$ and for a given choice of $\boldsymbol{\varepsilon}$,
we denote by
\begin{equation}\label{Q-form}
    Q_{\tau,\boldsymbol{\varepsilon}}(\lambda)
    =\prod_{j=1}^{q_{\boldsymbol{\varepsilon}}}\frac{\cosh(2\lambda)-\cosh(2\lambda_{\boldsymbol{\varepsilon},j})}{2} 
    \qquad
    (q_{\boldsymbol{\varepsilon}}\le N)
\end{equation}
%
the unique solution of the T-Q equation \eqref{hom-TQ} if the condition
\begin{equation}
\mathfrak{f}^{(q_{\boldsymbol{\varepsilon}})}_{\boldsymbol{\varepsilon}}(\tau_+,\tau_-,\alpha_+,\alpha_-,\beta_+,\beta_-)=0
\end{equation}
%
is satisfied with $\tau(\lambda)\in\Sigma_{\boldsymbol{\varepsilon},\mathcal{T}}^{q_{\boldsymbol{\varepsilon}}}$, or the unique solution of \eqref{inhom-TQ} with $q_{\boldsymbol{\varepsilon}}=N$ otherwise.

Then,  the one-dimensional right eigenspace of the transfer matrix $\mathcal{T}(\lambda)$ associated with the eigenvalue $\tau(\lambda)$ is generated by any of the separate states $\ket{Q_{\tau,\boldsymbol{\varepsilon}}}_{\boldsymbol{\varepsilon}}$ for any choice of $\boldsymbol{\varepsilon}\equiv( \epsilon_{\alpha_+}, \epsilon_{\alpha_-}, \epsilon_{\beta_+}, \epsilon_{\beta_-})\in\{-1,1\}^4$ such that $\epsilon_{\alpha_+} \epsilon_{\alpha_-}\epsilon_{\beta_+} \epsilon_{\beta_-}=1$.
For two such choices of $\boldsymbol{\varepsilon}$, the corresponding states are proportional: 
\begin{equation}
   \ket{Q_{\tau,\boldsymbol{\varepsilon'}}}_{\boldsymbol{\varepsilon'}}
   =\prod_{n=1}^N\frac{Q_{\tau,\boldsymbol{\varepsilon'}}(\xi_n+\frac{\eta}{2})}{Q_{\tau,\boldsymbol{\varepsilon}}(\xi_n+\frac{\eta}{2})}\
   \ket{Q_{\tau,\boldsymbol{\varepsilon}}}_{\boldsymbol{\varepsilon}}
   =\frac{\prod_{j=1}^{q_{\boldsymbol{\varepsilon'}}} d(\lambda_{\boldsymbol{\varepsilon'},j})\, d(-\lambda_{\boldsymbol{\varepsilon'},j})}
            {\prod_{j=1}^{q_{\boldsymbol{\varepsilon}}} d(\lambda_{\boldsymbol{\varepsilon},j})\, d(-\lambda_{\boldsymbol{\varepsilon},j})}\
   \ket{Q_{\tau,\boldsymbol{\varepsilon}}}_{\boldsymbol{\varepsilon}}.
\end{equation}

Similarly,  the one-dimensional left eigenspace of the transfer matrix $\mathcal{T}(\lambda)$ associated with the eigenvalue $\tau(\lambda)$ is generated by any of the separate states ${}_{\boldsymbol{\varepsilon}}\bra{Q_{\tau,\boldsymbol{\varepsilon}}}$ for any choice of $\boldsymbol{\varepsilon}\equiv( \epsilon_{\alpha_+}, \epsilon_{\alpha_-}, \epsilon_{\beta_+}, \epsilon_{\beta_-})\in\{-1,1\}^4$ such that $\epsilon_{\alpha_+} \epsilon_{\alpha_-}\epsilon_{\beta_+} \epsilon_{\beta_-}=1$.
For two such choices of $\boldsymbol{\varepsilon}$, the corresponding states are proportional: 
\begin{align}
  {}_{\boldsymbol{\varepsilon'}}\bra{Q_{\tau,\boldsymbol{\varepsilon'}}}
  &= \prod_{n=1}^N\frac{Q_{\tau,\boldsymbol{\varepsilon'}}(\xi_n+\frac{\eta}{2})}{Q_{\tau,\boldsymbol{\varepsilon}}(\xi_n+\frac{\eta}{2})}
  \frac{\mathbf{a}_{\boldsymbol{\epsilon'}}(\xi_n+\frac{\eta}{2})}{\mathbf{a}_{\boldsymbol{\epsilon}}(\xi_n+\frac{\eta}{2})}\
  {}_{\boldsymbol{\varepsilon}}\bra{Q_{\tau,\boldsymbol{\varepsilon}}}
     \nonumber\\
  &=\frac{\prod_{j=1}^{q_{\boldsymbol{\varepsilon'}}} d(\lambda_{\boldsymbol{\varepsilon'},j})\, d(-\lambda_{\boldsymbol{\varepsilon'},j})}
            {\prod_{j=1}^{q_{\boldsymbol{\varepsilon}}} d(\lambda_{\boldsymbol{\varepsilon},j})\, d(-\lambda_{\boldsymbol{\varepsilon},j})}\,
            \prod_{n=1}^N \frac{\mathbf{a}_{\boldsymbol{\epsilon'}}(\xi_n+\frac{\eta}{2})}{\mathbf{a}_{\boldsymbol{\epsilon}}(\xi_n+\frac{\eta}{2})} \
             {}_{\boldsymbol{\varepsilon}}\bra{Q_{\tau,\boldsymbol{\varepsilon}}}.
\end{align}
\end{proposition}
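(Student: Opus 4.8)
\emph{The plan.} I would proceed in three steps: establish that each separate state is a non‑zero eigenvector of $\mathcal{T}(\lambda)$ and hence generates the corresponding one‑dimensional eigenspace; then compute the proportionality constant relating the right states for two admissible choices of $\boldsymbol{\varepsilon}$ by comparing their expansions on a common SoV basis; and finally do the same for the left states and rewrite the constants. For the first step: by construction $Q_{\tau,\boldsymbol{\varepsilon}}$ (see \eqref{Q-form}) has the form \eqref{Q-form1} and solves, together with $\tau(\lambda)$, either \eqref{hom-TQ} or \eqref{inhom-TQ} --- this is the content of Theorem~\ref{th-inhom} and the Corollary when $\mathfrak{f}^{(r)}_{\boldsymbol{\varepsilon}}\neq 0$ for all $r<N$, and of Theorem~\ref{th-hom-partiel} otherwise --- so by the remark below \eqref{separate-L} both $\ket{Q_{\tau,\boldsymbol{\varepsilon}}}_{\boldsymbol{\varepsilon}}$ and ${}_{\boldsymbol{\varepsilon}}\bra{Q_{\tau,\boldsymbol{\varepsilon}}}$ are $\mathcal{T}(\lambda)$-eigenstates for $\tau(\lambda)$. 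To check they do not vanish I would expand \eqref{separate-R} (resp.\ \eqref{separate-L}) on the SoV basis and isolate the $\mathbf{h}=\mathbf{0}$ term, whose coefficient is proportional to $\widehat{V}(\xi_1^{(0)},\ldots,\xi_N^{(0)})\prod_{n=1}^N Q_{\tau,\boldsymbol{\varepsilon}}(\xi_n^{(0)})$; this is non‑zero since the Vandermonde is non‑zero by \eqref{cond-inh} and $Q_{\tau,\boldsymbol{\varepsilon}}(\xi_n^{(0)})\neq 0$ because the zeros of $Q_{\tau,\boldsymbol{\varepsilon}}$ avoid the $\cosh(2\xi_n^{(h)})$. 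As the spectrum of $\mathcal{T}(\lambda)$ is simple by Theorem~\ref{th-SoV-spectrum}, each one‑dimensional $\tau$-eigenspace is then generated by the corresponding separate state, for every admissible $\boldsymbol{\varepsilon}$.

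\emph{Proportionality of the right states.} Fix two admissible choices $\boldsymbol{\varepsilon},\boldsymbol{\varepsilon'}$ sharing the same $\epsilon_+$ in \eqref{cond-1ter}. The vectors $\ket{Q_{\tau,\boldsymbol{\varepsilon}}}_{\boldsymbol{\varepsilon}}$ and $\ket{Q_{\tau,\boldsymbol{\varepsilon'}}}_{\boldsymbol{\varepsilon'}}$ span the same line, hence are proportional; to read off the constant I would express both on the single basis $\{\,S_{1\ldots N}(\{\xi\}|\beta)\,\ket{\mathbf{h},\beta+1}_{\boldsymbol{\varepsilon}}\,\}_{\mathbf{h}}$. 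Comparing \eqref{A-} and \eqref{g-eps} with \eqref{expr-A}--\eqref{a-eps} gives $\mathsf{g}_-^{\boldsymbol{\varepsilon}}(\lambda)/\mathsf{g}_-^{\boldsymbol{\varepsilon'}}(\lambda)=\mathbf{a}_{\boldsymbol{\varepsilon}}(\lambda)/\mathbf{a}_{\boldsymbol{\varepsilon'}}(\lambda)$, whence, by \eqref{SOVstate-R},
\[
 \ket{\mathbf{h},\beta+1}_{\boldsymbol{\varepsilon'}}
 =\prod_{n=1}^N\left(\frac{\mathbf{a}_{\boldsymbol{\varepsilon}}(\eta/2-\xi_n)}{\mathbf{a}_{\boldsymbol{\varepsilon'}}(\eta/2-\xi_n)}\right)^{\!h_n}\ket{\mathbf{h},\beta+1}_{\boldsymbol{\varepsilon}} .
\]
Substituting into \eqref{separate-R} written for $Q_{\tau,\boldsymbol{\varepsilon'}}$, the coefficient of $S_{1\ldots N}(\{\xi\}|\beta)\ket{\mathbf{h},\beta+1}_{\boldsymbol{\varepsilon}}$ equals that of $\ket{Q_{\tau,\boldsymbol{\varepsilon}}}_{\boldsymbol{\varepsilon}}$ multiplied by $\prod_n\big[Q_{\tau,\boldsymbol{\varepsilon'}}(\xi_n^{(h_n)})/Q_{\tau,\boldsymbol{\varepsilon}}(\xi_n^{(h_n)})\big]\big(\mathbf{a}_{\boldsymbol{\varepsilon}}(\eta/2-\xi_n)/\mathbf{a}_{\boldsymbol{\varepsilon'}}(\eta/2-\xi_n)\big)^{h_n}$. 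Proportionality forces this factor to be $\mathbf{h}$-independent; equating its $h_n=0$ and $h_n=1$ values index by index, and using \eqref{eq-Q-dis} (valid with $\mathbf{A}=\mathbf{A}_{\boldsymbol{\varepsilon}}$, resp.\ $\mathbf{A}_{\boldsymbol{\varepsilon'}}$, since $Q_{\tau,\boldsymbol{\varepsilon}}$ solves the T-Q equation with $\mathbf{A}_{\boldsymbol{\varepsilon}}$) together with the fact that only $\mathbf{a}_{\boldsymbol{\varepsilon}}$ depends on $\boldsymbol{\varepsilon}$ in \eqref{expr-A}, reduces the required identity to the $\boldsymbol{\varepsilon}$-independence of $\mathbf{a}_{\boldsymbol{\varepsilon}}(\lambda+\eta/2)\,\mathbf{a}_{\boldsymbol{\varepsilon}}(-\lambda+\eta/2)$ --- which is immediate from \eqref{eq-A} and \eqref{expr-A}. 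Reading the constant off the $\mathbf{h}=\mathbf{0}$ term then gives $\ket{Q_{\tau,\boldsymbol{\varepsilon'}}}_{\boldsymbol{\varepsilon'}}=\prod_n\big[Q_{\tau,\boldsymbol{\varepsilon'}}(\xi_n+\eta/2)/Q_{\tau,\boldsymbol{\varepsilon}}(\xi_n+\eta/2)\big]\,\ket{Q_{\tau,\boldsymbol{\varepsilon}}}_{\boldsymbol{\varepsilon}}$.

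\emph{Rewriting, left states, and the main difficulty.} The second form of the constant follows from the elementary identity $d(\mu)\,d(-\mu)=\prod_{n=1}^N\big(\sinh^2(\xi_n+\eta/2)-\sinh^2\mu\big)$ (from $\sinh(a-b)\sinh(a+b)=\sinh^2 a-\sinh^2 b$ and \eqref{a-d}), which yields $\prod_{n=1}^N Q(\xi_n+\eta/2)=\prod_j d(\lambda_j)\,d(-\lambda_j)$ for any $Q$ of the form \eqref{Q-form1}; applied to $Q_{\tau,\boldsymbol{\varepsilon}}$ and $Q_{\tau,\boldsymbol{\varepsilon'}}$ it produces the quoted ratio of $d$-products. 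For the left eigenspace the argument is identical, except that one must track in addition the exponent $1-h_n$ in \eqref{SOVstate-L} and the twist $\mathsf{u}_n\,\mathsf{v}_{n,\boldsymbol{\varepsilon}}$ in \eqref{separate-L}; using $\mathsf{v}_{n,\boldsymbol{\varepsilon}}=\mathbf{a}_{\boldsymbol{\varepsilon}}(\xi_n+\eta/2)/\mathbf{a}_{\boldsymbol{\varepsilon}}(-\xi_n+\eta/2)$ and once more the $\boldsymbol{\varepsilon}$-independence of $\mathbf{a}_{\boldsymbol{\varepsilon}}(\lambda+\eta/2)\mathbf{a}_{\boldsymbol{\varepsilon}}(-\lambda+\eta/2)$, the $\mathbf{h}$-dependence again cancels and one recovers the stated constant, with the extra factor $\prod_n\big[\mathbf{a}_{\boldsymbol{\varepsilon'}}(\xi_n+\eta/2)/\mathbf{a}_{\boldsymbol{\varepsilon}}(\xi_n+\eta/2)\big]$, whose rewriting uses the same $d$-product identity. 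I expect the only genuinely non‑routine point of the whole proof to be precisely this cancellation of the $\mathbf{h}$-dependent factors: it rests entirely on the quantum-determinant relation \eqref{eq-A}; everything else amounts to bookkeeping of the Vandermonde prefactors and of the functions $a$, $d$, $\mathbf{a}_{\boldsymbol{\varepsilon}}$ and $\mathsf{g}_-$.
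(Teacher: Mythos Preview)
Your proposal is correct and follows essentially the same route as the paper's proof: the paper also reduces everything to (i) the change of SoV basis $\ket{\mathbf{h},\beta+1}_{\boldsymbol{\varepsilon'}}=\prod_n\big(\mathbf{a}_{-\boldsymbol{\varepsilon}}(\xi_n+\eta/2)/\mathbf{a}_{-\boldsymbol{\varepsilon'}}(\xi_n+\eta/2)\big)^{h_n}\ket{\mathbf{h},\beta+1}_{\boldsymbol{\varepsilon}}$ (your formula with $\mathbf{a}_{\boldsymbol{\varepsilon}}(\eta/2-\xi_n)$ is the same thing, since $\mathbf{a}_{\boldsymbol{\varepsilon}}(-\mu+\eta/2)=\mathbf{a}_{-\boldsymbol{\varepsilon}}(\mu+\eta/2)$), (ii) the ratio identity $Q_{\tau,\boldsymbol{\varepsilon'}}(\xi_n^{(1)})/Q_{\tau,\boldsymbol{\varepsilon'}}(\xi_n^{(0)})=\big[\mathbf{a}_{\boldsymbol{\varepsilon}}(\xi_n^{(0)})/\mathbf{a}_{\boldsymbol{\varepsilon'}}(\xi_n^{(0)})\big]\,Q_{\tau,\boldsymbol{\varepsilon}}(\xi_n^{(1)})/Q_{\tau,\boldsymbol{\varepsilon}}(\xi_n^{(0)})$ coming from \eqref{eq-Q-dis}, and (iii) the $\boldsymbol{\varepsilon}$-independence of $\mathbf{a}_{\boldsymbol{\varepsilon}}(\lambda)\,\mathbf{a}_{-\boldsymbol{\varepsilon}}(\lambda)$. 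Your additional verification of non-vanishing and the explicit $d$-product rewriting are details the paper leaves implicit.
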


\begin{proof}
This is a direct consequence of the previous study, of the following identities,
\begin{equation}
    \frac{Q_{\tau,\boldsymbol{\varepsilon'}}(\xi_n^{(1)})}{Q_{\tau,\boldsymbol{\varepsilon'}}(\xi_n^{(0)})}
    = \frac{\mathbf{a}_{\boldsymbol{\epsilon}}(\xi_n^{(0)})}{\mathbf{a}_{\boldsymbol{\epsilon'}}(\xi_n^{(0)})}\,
    \frac{Q_{\tau,\boldsymbol{\varepsilon}}(\xi_n^{(1)})}{Q_{\tau,\boldsymbol{\varepsilon}}(\xi_n^{(0)})} ,   
\end{equation}
and
\begin{align}
 &\ket{\mathbf{h},\beta+1}_{\boldsymbol{\varepsilon'}}
 =\prod_{n=1}^N \left(\frac{\mathbf{a}_{-\boldsymbol{\epsilon}}(\xi_n+\frac{\eta}{2})}{\mathbf{a}_{-\boldsymbol{\epsilon'}}(\xi_n+\frac{\eta}{2})}\right)^{\! h_n}\
 \ket{\mathbf{h},\beta+1}_{\boldsymbol{\varepsilon}},
 \label{prop-states-1}\\
 & {}_{\boldsymbol{\varepsilon'}}\bra{\beta-1,\mathbf{h}}
 =\prod_{n=1}^N \left(\frac{\mathbf{a}_{-\boldsymbol{\epsilon}}(\xi_n+\frac{\eta}{2})}{\mathbf{a}_{-\boldsymbol{\epsilon'}}(\xi_n+\frac{\eta}{2})}\right)^{\! 1-h_n}
 {}_{\boldsymbol{\varepsilon}}\bra{\beta-1,\mathbf{h}},
 \label{prop-states-2}
\end{align}
and of the fact that the product $\mathbf{a}_{\boldsymbol{\epsilon}}(\lambda)\, \mathbf{a}_{-\boldsymbol{\epsilon}}(\lambda)$ does not depend on $\boldsymbol{\epsilon}$.
\end{proof}

\section{Scalar product of separate states}
\label{sec-sp}

Let $P(\lambda)$ and $Q(\lambda)$ be two polynomials in $\cosh(2\lambda)$, of respective degree $n_p$ and $n_q$, and which can be expressed as
\begin{equation}\label{P-Q-form}
   P(\lambda)=\prod_{j=1}^{n_p}\frac{\cosh(2\lambda)-\cosh(2p_j)}{2},
   \qquad
   Q(\lambda)=\prod_{j=1}^{n_q}\frac{\cosh(2\lambda)-\cosh(2q_j)}{2}.
\end{equation}
It is easy to see that the scalar product of any two separate states of the form \eqref{separate-R} and \eqref{separate-L} (or \eqref{separate-L-bis}) constructed from $P$ and $Q$ can be represented as a determinant.

\begin{proposition}\label{prop-sc-det-1}
Let us suppose that the inhomogeneity parameters are generic \eqref{cond-inh} and that the condition \eqref{cond-3bis} is satisfied. 
Let $\boldsymbol{\varepsilon},\, \boldsymbol{\varepsilon'}\in\{-1,1\}^4$ such that $\epsilon_{\alpha_+}\! \epsilon_{\alpha_-}\!\epsilon_{\beta_+} \!\epsilon_{\beta_-}\!\!=\epsilon_{\alpha_+}' \!\epsilon_{\alpha_-}'\!\epsilon_{\beta_+}' \!\epsilon_{\beta_-}'=1$.
The scalar products of the separate states ${}_{\boldsymbol{\varepsilon}}\bra{Q}$ built as in \eqref{separate-L} or \eqref{separate-L-bis} from $Q(\lambda)$, and $\ket{P}_{\boldsymbol{\varepsilon'}}$ built as in \eqref{separate-R} from $P(\lambda)$, admit the following determinant representation:
\begin{multline}\label{sc-det-1}
    {}_{\boldsymbol{\varepsilon}}\moy{Q\, |\, P}_{\boldsymbol{\varepsilon'}}
    =\frac 1{N(\{\xi\},\beta)}\, 
    \frac{\widehat{V}(\xi_1^{(0)},\ldots,\xi_N^{(0)})}{\widehat{V}(\xi_1^{(1)},\ldots,\xi_N^{(1)})}\,
    \\
    \times
    \det_{1\le i,j\le N}\left[\sum_{h=0}^1\left(-\frac{\mathbf{a}_{\boldsymbol{\epsilon'}}(\xi_i+\frac{\eta}{2})}{\mathbf{a}_{\boldsymbol{-\epsilon}}(\xi_i+\frac{\eta}{2})}\right)^{\! h}\, P(\xi_i^{(h)})\, Q(\xi_i^{(h)})\, \left(\frac{\cosh(2\xi_i^{(1-h)})}{2}\right)^{\! j-1}\right].
\end{multline}
\end{proposition}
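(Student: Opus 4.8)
The plan is to compute the scalar product directly in the SoV basis, exploiting the orthogonality relation \eqref{orthog-states} and the explicit form of the separate states. First I would take the bra $_{\boldsymbol{\varepsilon}}\bra{Q}$ in the form \eqref{separate-L-bis}, expressed on the SoV basis $_{\boldsymbol{\varepsilon}}\bra{\beta-1,\mathbf{h}}$, and the ket $\ket{P}_{\boldsymbol{\varepsilon'}}$ in the form \eqref{separate-R}, expressed on the SoV basis $\ket{\mathbf{k},\beta+1}_{\boldsymbol{\varepsilon'}}$. The two are written relative to different choices $\boldsymbol{\varepsilon}$ and $\boldsymbol{\varepsilon'}$, so the next step is to use the change-of-basis relation \eqref{prop-states-1}, namely $\ket{\mathbf{k},\beta+1}_{\boldsymbol{\varepsilon'}}=\prod_{n}\big(\mathbf{a}_{-\boldsymbol{\epsilon}}(\xi_n+\tfrac{\eta}{2})/\mathbf{a}_{-\boldsymbol{\epsilon'}}(\xi_n+\tfrac{\eta}{2})\big)^{k_n}\,\ket{\mathbf{k},\beta+1}_{\boldsymbol{\varepsilon}}$, to bring everything onto the single pair of bases labelled by $\boldsymbol{\varepsilon}$. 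At that point the orthogonality \eqref{orthog-states} collapses the double sum over $\mathbf{h},\mathbf{k}$ to a single sum over $\mathbf{h}=\mathbf{k}$, with the factor $N(\{\xi\},\beta)\,e^{2\sum_j h_j\xi_j}/\widehat V(\xi_1^{(h_1)},\ldots,\xi_N^{(h_N)})$.

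Next I would collect the prefactors. The $1/N(\{\xi\},\beta)$ from the ket, combined with the orthogonality factor $N(\{\xi\},\beta)$, cancels; the $\widehat V(\xi^{(0)})/\widehat V(\xi^{(1)})$ prefactor from \eqref{separate-L-bis} survives and appears out front. The remaining sum over $\mathbf{h}\in\{0,1\}^N$ has summand equal to a product over $n$ of a per-site factor, times $\widehat V(\xi_1^{(1-h_1)},\ldots,\xi_N^{(1-h_N)})\,\widehat V(\xi_1^{(h_1)},\ldots,\xi_N^{(h_N)})/\widehat V(\xi_1^{(h_1)},\ldots,\xi_N^{(h_N)})$ — so one Vandermonde cancels against the orthogonality denominator, leaving a single $\widehat V(\xi_1^{(1-h_1)},\ldots,\xi_N^{(1-h_N)})$, which by \eqref{VDM} is $\det_{i,j}\big[(\sinh^2\xi_i^{(1-h_i)})^{j-1}\big]=\det_{i,j}\big[(\tfrac{\cosh(2\xi_i^{(1-h_i)})-1}{2})^{j-1}\big]$. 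The exponential factors $e^{-\sum_j h_j\xi_j}$ from both states together with $e^{2\sum_j h_j\xi_j}$ from orthogonality combine into $e^{\sum_j h_j\xi_j}$; these, together with the $\mathbf{u}_n$-free version $-\mathsf{v}_{n,\boldsymbol\varepsilon}$ from \eqref{separate-L-bis}, the change-of-basis ratios $\mathbf{a}_{-\boldsymbol\epsilon}/\mathbf{a}_{-\boldsymbol{\epsilon'}}$, the factors $Q(\xi_n^{(h_n)})$ and $P(\xi_n^{(h_n)})$, must be reorganised. I would then recognise the sum over $\mathbf{h}$ of (per-site factor)$\times\det_{i,j}\big[(\ldots)^{j-1}\big]$ as the determinant $\det_{1\le i,j\le N}\big[\sum_{h=0}^1(\text{per-site factor at }(i,h))\,(\tfrac{\cosh(2\xi_i^{(1-h)})}{2})^{j-1}\big]$, using multilinearity of the determinant in its rows (the standard ``sum inside the determinant'' identity, exactly as in the XXX case \cite{KitMNT17}). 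The shift $\cosh(2\xi_i^{(1-h)})-1\mapsto\cosh(2\xi_i^{(1-h)})$ in the column entries is harmless: replacing $\sinh^2 x$ by $\cosh(2x)/2$ amounts to adding to column $j$ a fixed linear combination of columns $1,\dots,j-1$, which does not change the determinant.

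The remaining task is purely bookkeeping of the per-site factor: one must check that $e^{\xi_i}\cdot(-\mathsf{v}_{i,\boldsymbol\varepsilon})\cdot(\mathbf{a}_{-\boldsymbol\epsilon}(\xi_i+\tfrac\eta2)/\mathbf{a}_{-\boldsymbol{\epsilon'}}(\xi_i+\tfrac\eta2))$, raised to the power $h$, reduces to $\big(-\mathbf{a}_{\boldsymbol{\epsilon'}}(\xi_i+\tfrac\eta2)/\mathbf{a}_{-\boldsymbol\epsilon}(\xi_i+\tfrac\eta2)\big)^h$ once the $\xi^{(1-h)}$ versus $\xi^{(h)}$ labelling of the surviving Vandermonde is taken into account. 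Here one uses $\mathsf{v}_{i,\boldsymbol\varepsilon}=\mathbf{a}_{\boldsymbol\varepsilon}(\xi_i+\tfrac\eta2)/\mathbf{a}_{-\boldsymbol\varepsilon}(\xi_i+\tfrac\eta2)$, the fact that $\mathbf{a}_{\boldsymbol\epsilon}(\lambda)\,\mathbf{a}_{-\boldsymbol\epsilon}(\lambda)$ is independent of $\boldsymbol\epsilon$, and that in the $h=1$ term of row $i$ the column entry involves $\xi_i^{(0)}$ while in the $h=0$ term it involves $\xi_i^{(1)}$ — so $P,Q$ get evaluated at $\xi_i^{(h)}$ but the power of $\cosh(2\cdot)/2$ is at $\xi_i^{(1-h)}$, matching \eqref{sc-det-1}. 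The main obstacle is entirely in this last step: keeping straight the three different labellings ($h$ versus $1-h$, and the two gauge choices $\boldsymbol\varepsilon,\boldsymbol{\varepsilon'}$) while tracking the exponential prefactors and the Vandermonde ratio, so that all spurious factors cancel and exactly the stated summand emerges; the algebraic manipulations themselves are elementary and follow the rational-case computation of \cite{KitMNT17} mutatis mutandis.
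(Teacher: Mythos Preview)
Your approach is correct and is exactly the route indicated in the paper: combine the representation \eqref{separate-L-bis} of the bra with \eqref{separate-R} of the ket, use the proportionality \eqref{prop-states-1} to pass to a common $\boldsymbol{\varepsilon}$-basis, apply the orthogonality \eqref{orthog-states}, and then recognise the resulting sum over $\mathbf{h}$ as a determinant via multilinearity in the rows. One small bookkeeping correction for your final check: the exponential factors actually cancel completely (two factors $e^{-\sum_j h_j\xi_j}$, one from each state, against $e^{2\sum_j h_j\xi_j}$ from \eqref{orthog-states}), so there is no leftover $e^{\xi_i}$, and the per-site identity $(-\mathsf{v}_{i,\boldsymbol\varepsilon})\cdot\mathbf{a}_{-\boldsymbol\epsilon}(\xi_i+\tfrac{\eta}{2})/\mathbf{a}_{-\boldsymbol{\epsilon'}}(\xi_i+\tfrac{\eta}{2})=-\mathbf{a}_{\boldsymbol{\epsilon'}}(\xi_i+\tfrac{\eta}{2})/\mathbf{a}_{-\boldsymbol\epsilon}(\xi_i+\tfrac{\eta}{2})$ then follows directly from $\mathsf{v}_{i,\boldsymbol\varepsilon}=\mathbf{a}_{\boldsymbol\varepsilon}/\mathbf{a}_{-\boldsymbol\varepsilon}$ and the $\boldsymbol\epsilon$-independence of $\mathbf{a}_{\boldsymbol\epsilon}\,\mathbf{a}_{-\boldsymbol\epsilon}$.
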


The representation \eqref{sc-det-1} of Proposition~\ref{prop-sc-det-1} is a direct consequence of the representations \eqref{separate-L-bis} and \eqref{separate-R} of the separate states, of the proportionality relation \eqref{prop-states-1}, and of the orthogonality relation \eqref{orthog-states}.

It is an important advantage of the separation of variables that the scalar products of separate states are always expressed as determinants. However the present formula \eqref{sc-det-1} becomes very difficult to use in the homogeneous and thermodynamic limits. For this reason it is important to recast this expression in a more convenient form depending more directly on the roots of polynomials $P$ and $Q$, and in which the dependence on the inhomogeneous parameters is such that taking the homogeneous limit becomes straightforward. As in the XXX case \cite{KitMNT17}, we will in fact show that it is possible to rewrite the expression \eqref{sc-det-1} for the scalar product of two arbitrary separate states (without requiring any of them to be an eigenstate) in terms of a generalized Slavnov determinant \cite{Sla89}. We will proceed with this computations following an approach similar to what was done in \cite{KitMNT17}. However, due to some additional difficulties in the XXZ case, the most general result  becomes much more involved than in the rational case. Hence, we will gather many technical details in Appendices~\ref{app-det-id} and \ref{app-Slavnov}, and insist mainly on the cases which seem to be the most interesting for the physical applications. As we shall see, in these cases, the obtained formulas simplify due to the use of the Bethe equations.

\subsection{Scalar product of two arbitrary separate states} 

Let us first notice that the whole dependence on $\boldsymbol{\varepsilon},\, \boldsymbol{\varepsilon'}$ in the expression \eqref{sc-det-1} is contained in the ratio of $\mathbf{a}_{\boldsymbol{\epsilon'}}(\xi_i+\frac{\eta}{2})$ and $\mathbf{a}_{\boldsymbol{-\epsilon}}(\xi_i+\frac{\eta}{2})$, which is of the form
\begin{equation}\label{form-ratio}
   \frac{\mathbf{a}_{\boldsymbol{\epsilon'}}(\xi_i+\frac{\eta}{2})}{\mathbf{a}_{\boldsymbol{-\epsilon}}(\xi_i+\frac{\eta}{2})}
   =\prod_{\ell=1}^{n_{\boldsymbol{\varepsilon},\boldsymbol{\varepsilon'}}}\frac{\sinh(\xi_i+a_\ell)}{\sinh(\xi_i-a_\ell)}.
\end{equation}
In this expression,
\begin{align}
   &\bullet\ n_{\boldsymbol{\varepsilon},\boldsymbol{\varepsilon'}}= 0 \qquad \text{if}  \quad \boldsymbol{\varepsilon}=-\boldsymbol{\varepsilon'}
   \label{case0} \\
   &\bullet\ n_{\boldsymbol{\varepsilon},\boldsymbol{\varepsilon'}}=4, \quad\
   \{a_\ell\}=\{\epsilon'_{\alpha_+}\alpha_+,\epsilon'_{\alpha_-}\alpha_-,\epsilon'_{\beta_+}(-\beta_++i\pi/2),\epsilon'_{\beta_-}(\beta_-+i\pi/2)\}   \nonumber\\
   &\hspace{10cm}
   \text{if} \quad \boldsymbol{\varepsilon}=\boldsymbol{\varepsilon'},
   \label{case4}\\
   &\bullet\ n_{\boldsymbol{\varepsilon},\boldsymbol{\varepsilon'}}=2, \quad\
   \{a_\ell\}\subset\{\epsilon'_{\alpha_+}\alpha_+,\epsilon'_{\alpha_-}\alpha_-,\epsilon'_{\beta_+}(-\beta_++i\pi/2),\epsilon'_{\beta_-}(\beta_-+i\pi/2)\}   \nonumber\\
   &\hspace{10cm}
   \text{otherwise.} 
   \label{case2}
\end{align}
For purely technical reasons, it is in fact convenient to treat the case $n_{\boldsymbol{\varepsilon},\boldsymbol{\varepsilon'}}= 0$ as the $n_{\boldsymbol{\varepsilon},\boldsymbol{\varepsilon'}}= 2$ case. Therefore we introduce in this case an arbitrary parameter $\tilde{a}$ so as to rewrite the ratio \eqref{form-ratio} as
\begin{equation}\label{form-ratio2}
   \frac{\mathbf{a}_{\boldsymbol{\epsilon'}}(\xi_i+\frac{\eta}{2})}{\mathbf{a}_{\boldsymbol{-\epsilon}}(\xi_i+\frac{\eta}{2})}
   =\prod_{\ell=1}^{n_a}\frac{\sinh(\xi_i+a_\ell)}{\sinh(\xi_i-a_\ell)},
\end{equation}
where $n_a=2$ and $\{a_1,a_2\}=\{\tilde a, -\tilde a\}$ if $\boldsymbol{\varepsilon}=-\boldsymbol{\varepsilon'}$, whereas $n_a=n_{\boldsymbol{\varepsilon},\boldsymbol{\varepsilon'}}$ and $\{a_\ell\}$ is given by \eqref{case4} or \eqref{case2} otherwise.
In the following, we shall use the notation of \eqref{form-ratio2}. 

Let us also introduce some additional notations. For two arbitrary functions $f$ and $g$, and any set of variables $\{z\}\equiv\{z_1,\ldots,z_L\}$, we define 
\begin{equation}\label{def-Afg}
   \mathcal{A}_{\{z\}}[f, g]
   =
   \frac{\det_{1\le i,j \le L}\left[ \sum_{\bar\epsilon\in\{+,-\}} 
     f(\bar\epsilon z_i)
     \left(\frac{\cosh(2 z_i+\bar\epsilon\eta)}{2}\right)^{j-1}+\delta_{j,L}\, g(z_i)\right]}
          {\widehat{V}(z_1,\ldots,z_L)}.
\end{equation}
When the function $g$ vanishes identically, we may simply denote \eqref{def-Afg} by $\mathcal{A}_{\{z\}}[f]$.
We also consider a particular function $f_{\boldsymbol{\varepsilon},\boldsymbol{\varepsilon}'}$, defined in terms of the corresponding set $\{a_\ell\}_{1\le \ell \le n_a}$ as 
\begin{equation}
  f_{\boldsymbol{\varepsilon},\boldsymbol{\varepsilon}'}(\lambda)
    = (-1)^N\,
     \prod_{\ell=1}^{n_a}\frac{\sinh(\lambda-a_\ell+\frac{\eta}{2})}{\sinh(a_\ell)}\
     \frac{a(-\lambda)\, d(\lambda)}{\sinh(2\lambda)},
    \label{feps}
\end{equation}
and, in the case $\boldsymbol{\varepsilon}=\boldsymbol{\varepsilon}'$,  a function $g_{\boldsymbol{\varepsilon}}^{(L)}$ such that
\begin{align}
  &g_{\boldsymbol{\varepsilon}}^{(N)}(\lambda)
  =  \frac{\sinh(\sum_\ell a_\ell-\eta)}{\prod_{\ell}\sinh(a_\ell)}\,
     a(\lambda)\, d(\lambda)\, a(-\lambda)\, d(-\lambda),
     \label{geps-N}\\
   &g_{\boldsymbol{\varepsilon}}^{(L)}(\lambda)
   = (-1)^{L-N}\, g_{\boldsymbol{\varepsilon}}^{(N)}(\lambda)
   - \bar{f}_{\boldsymbol{\varepsilon},\boldsymbol{\varepsilon}}^{(L)}(\lambda)
   \qquad \text{if } L>N,
   \label{geps-L>N}
\end{align}
whereas, if $L<N$, the function $g_{\boldsymbol{\varepsilon}}^{(L)}(\lambda)$ is defined by induction as
\begin{multline}
    g^{(L)}_{\boldsymbol{\varepsilon}}(z)
    =
    \frac{\prod_{\ell}\sinh(a_\ell)\ \bar{f}^{(L)}_{\boldsymbol{\varepsilon},\boldsymbol{\varepsilon}}(z)}{\sinh((L+1-N)\eta-\sum_\ell a_\ell)}
    \cdot 
  \lim_{z'\to\infty}
  \frac{\bar{f}^{(L+1)}_{\boldsymbol{\varepsilon},\boldsymbol{\varepsilon}}(z')+g^{(L+1)}_{\boldsymbol{\varepsilon}}(z')}{\varsigma(z')^{N+L}}
       \\
    - \bar{f}^{(L)}_{\boldsymbol{\varepsilon},\boldsymbol{\varepsilon}}(z)  
   - \bar{f}^{(L+1)}_{\boldsymbol{\varepsilon},\boldsymbol{\varepsilon}}(z)
   -g^{(L+1)}_{\boldsymbol{\varepsilon}}(z).
   \label{rec-geps}
\end{multline}
Here we have used the shortcut notation:
\begin{equation}\label{feps-L}
\bar{f}^{(l)}_{\boldsymbol{\varepsilon},\boldsymbol{\varepsilon}'}(\lambda) 
     =
\sum_{\bar\epsilon\in\{+,-\}} 
     f_{\boldsymbol{\varepsilon},\boldsymbol{\varepsilon}'}(\bar\epsilon \lambda)
     \left(\frac{\cosh(2\lambda+\bar\epsilon\eta)}{2}\right)^{l-1},
\end{equation}
for a generic integer $l$. Then we have the following result:

\begin{theorem}\label{prop-hom}
Let us suppose that the inhomogeneity parameters are generic \eqref{cond-inh} and that the condition \eqref{cond-3bis} is satisfied. 
Let $\boldsymbol{\varepsilon},\, \boldsymbol{\varepsilon'}\in\{-1,1\}^4$ such that $\epsilon_{\alpha_+}\! \epsilon_{\alpha_-}\!\epsilon_{\beta_+} \!\epsilon_{\beta_-}\!\!=\epsilon_{\alpha_+}' \!\epsilon_{\alpha_-}'\!\epsilon_{\beta_+}' \!\epsilon_{\beta_-}'=1$.

The scalar products of the separate states ${}_{\boldsymbol{\varepsilon}}\bra{Q}$ built as in \eqref{separate-L} or \eqref{separate-L-bis} from $Q(\lambda)$ with roots $\{q\}\equiv\{q_1,\ldots,q_{n_q}\}$, and $\ket{P}_{\boldsymbol{\varepsilon'}}$ built as in \eqref{separate-R} from $P(\lambda)$ with roots $\{p\}\equiv\{p_1,\ldots,p_{n_p}\}$, admit the following determinant representation:
\begin{equation}\label{sc-hom-A}
   {}_{\boldsymbol{\varepsilon}}\moy{Q\, |\, P}_{\boldsymbol{\varepsilon'}}
    = (-1)^{N(n_p+n_q)}\ Z_\beta \ \bar{Z}_{(\{a\},\{\xi\})}\
    \Gamma_{\{a\}}^{(n_p+n_q)}\
     \mathcal{A}_{\{q\}\cup\{p\}}[f_{\boldsymbol{\varepsilon},\boldsymbol{\varepsilon}'}, g_{\boldsymbol{\varepsilon},\boldsymbol{\varepsilon}'}],
\end{equation}
where the function $f_{\boldsymbol{\varepsilon},\boldsymbol{\varepsilon}'}$ is given as in \eqref{feps}.

Here the function $g_{\boldsymbol{\varepsilon},\boldsymbol{\varepsilon}'}$ is defined to be identically zero if $\boldsymbol{\varepsilon}\not= \boldsymbol{\varepsilon'}$, while for $\boldsymbol{\varepsilon}= \boldsymbol{\varepsilon'}$, it is equal to the function $g_{\boldsymbol{\varepsilon}}^{(n_p+n_q)}$ defined as in \eqref{geps-N}, \eqref{geps-L>N} or \eqref{rec-geps} according to whether $n_p+n_q$ is equal, larger or smaller than $N$.

Finally, the normalization coefficient $Z_\beta$, $\bar{Z}_{(\{a\},\{\xi\})}$ and $\Gamma_{\{a\}}^{(n_p+n_q)}$ are respectively given by
\begin{align}\label{Zbeta}
    &Z_\beta=\prod_{j=1}^N\left[ \frac{1}{\mathsf{b}_-(\beta+1+N-2j)}\frac{\sinh(\eta(\beta+N-j))}{\sinh(\eta(\beta+1+N-2j))}\right] ,\\
    &\bar{Z}_{(\{a\},\{\xi\})}=\prod_{i=1}^N\left\{ e^{\xi_i}\, \mathsf{g}_-(\eta/2-\xi_i)\, \prod_{\ell=1}^{n_a}\frac{\sinh(a_\ell)}{\sinh(\xi_i-a_\ell)}\right\}
    \label{Zxi}
\end{align}
and
\begin{equation}\label{norm-gamma}
   \Gamma_{\{a\}}^{(n_p+n_q)}=
   \begin{cases}
       {\displaystyle \prod\limits_{j=1}^{n_p+n_q-N}\frac{\prod_\ell\sinh(a_\ell)}{\sinh\big(j\eta-\sum_\ell a_\ell\big)} }
               &\text{if }\ n_p+n_q\ge N, \vspace{2mm}\\
      {\displaystyle \prod\limits_{j=0}^{N-n_p-n_q-1} 
      \frac{\sinh\big(-j\eta-\sum_\ell a_\ell\big)}{\prod_\ell\sinh(a_\ell)} }
                &\text{if }\  n_p+n_q<N.
   \end{cases}
\end{equation}
\end{theorem}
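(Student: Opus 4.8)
The plan is to start from the determinant representation \eqref{sc-det-1} of Proposition~\ref{prop-sc-det-1} and gradually transform it into the form \eqref{sc-hom-A}, following the strategy already used in the rational case \cite{KitMNT17} but taking care of the extra subtleties coming from the trigonometric setting. The first step is to rewrite the ratio $\mathbf{a}_{\boldsymbol{\epsilon'}}(\xi_i+\frac{\eta}{2})/\mathbf{a}_{\boldsymbol{-\epsilon}}(\xi_i+\frac{\eta}{2})$ appearing in \eqref{sc-det-1} in the uniform product form \eqref{form-ratio2}, using the classification \eqref{case0}--\eqref{case2}: in all cases (after possibly introducing the artificial parameter $\tilde a$ when $\boldsymbol{\varepsilon}=-\boldsymbol{\varepsilon'}$) the ratio becomes $\prod_{\ell=1}^{n_a}\sinh(\xi_i+a_\ell)/\sinh(\xi_i-a_\ell)$. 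Substituting this into the summand of \eqref{sc-det-1}, extracting the inhomogeneity-dependent prefactors, and identifying the function $\mathsf{g}_-$ coming from the normalization $\mathsf{A}_-$, one recognizes the combination $\bar Z_{(\{a\},\{\xi\})}$ of \eqref{Zxi} together with $Z_\beta$ of \eqref{Zbeta} hidden in $N(\{\xi\},\beta)$ via \eqref{norm-states}. After this bookkeeping the core object that remains is a determinant of the form $\det_{1\le i,j\le N}\big[\sum_{h=0}^1 c(\xi_i^{(h)})\,P(\xi_i^{(h)})\,Q(\xi_i^{(h)})\,(\cosh(2\xi_i^{(1-h)})/2)^{j-1}\big]$ with known coefficient $c$.

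The second and main step is to convert this $N\times N$ determinant, in which the polynomials $P$ and $Q$ are evaluated at the shifted inhomogeneities and the column index is carried by the \emph{complementary} shift $\xi_i^{(1-h)}$, into the object $\mathcal{A}_{\{q\}\cup\{p\}}[f_{\boldsymbol{\varepsilon},\boldsymbol{\varepsilon}'}, g_{\boldsymbol{\varepsilon},\boldsymbol{\varepsilon}'}]$ of \eqref{def-Afg}, which is an $(n_p+n_q)\times(n_p+n_q)$ determinant whose rows are indexed by the roots $\{p\}\cup\{q\}$ rather than by the $\xi_i$. This is exactly the type of Vandermonde$\leftrightarrow$Izergin$\leftrightarrow$Slavnov rewriting used in \cite{KitMNT17,KitMNT16}: one interprets $\prod_n P(\xi_n^{(h_n)})\,Q(\xi_n^{(h_n)}) = \prod_{j} a_{\mathbf h}(p_j)a_{\mathbf h}(-p_j)\,a_{\mathbf h}(q_j)a_{\mathbf h}(-q_j)$, expands each such factor, and uses a purely algebraic determinant identity to trade the $\xi$-indexed rows for root-indexed ones. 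In the trigonometric case the products $a(\lambda)a(-\lambda)d(\lambda)d(-\lambda)$ and the $\sinh$ factors $\prod_\ell\sinh(\lambda-a_\ell+\eta/2)$ naturally assemble into the function $f_{\boldsymbol{\varepsilon},\boldsymbol{\varepsilon}'}$ of \eqref{feps}; I expect the identity to proceed by Laplace/Cauchy-type expansion combined with the factorization $\widehat V(x_1,\dots)=\prod_{j<k}(\sinh^2 x_k-\sinh^2 x_j)$ from \eqref{VDM}, and by carefully matching the degrees: the discrepancy between the fixed size $N$ of the original determinant and the variable size $n_p+n_q$ of the target is absorbed either into extra Vandermonde factors (encoded in $\Gamma_{\{a\}}^{(n_p+n_q)}$ of \eqref{norm-gamma}) when $n_p+n_q\ge N$, or into a reduction step when $n_p+n_q<N$.

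The third step is to pin down the correction column $g_{\boldsymbol{\varepsilon},\boldsymbol{\varepsilon}'}$, which is the genuinely new feature compared with a naive Vandermonde rewriting. When $\boldsymbol{\varepsilon}\ne\boldsymbol{\varepsilon'}$ the asymptotic behavior in the spectral parameter is such that no extra column is needed and $g\equiv 0$; when $\boldsymbol{\varepsilon}=\boldsymbol{\varepsilon'}$ the $n_a=4$ factors in the ratio raise the degree and force the appearance of the last-column correction $\delta_{j,L}\,g(z_i)$ in \eqref{def-Afg}. Its precise form is dictated by requiring that the large-spectral-parameter leading coefficient of the transformed determinant match that of the original one; the base case $g_{\boldsymbol{\varepsilon}}^{(N)}$ in \eqref{geps-N} comes from computing this leading coefficient directly (here the combination $\sinh(\sum_\ell a_\ell-\eta)/\prod_\ell\sinh(a_\ell)$ appears), and the cases $L>N$ and $L<N$ in \eqref{geps-L>N}--\eqref{rec-geps} are obtained by an induction on $L=n_p+n_q$ in which one adds or removes one root at a time, tracking how $\bar f^{(L)}$ and $g^{(L)}$ transform under $\varsigma(z')\to\infty$ limits. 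I would carry out the $L=N$ case first, then the $L>N$ case by appending trivial rows and columns, and finally the $L<N$ case by the descending recursion \eqref{rec-geps}.

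The hard part will be the second step: establishing the exact algebraic determinant identity that exchanges the $\xi$-indexed rows for the $\{p\}\cup\{q\}$-indexed rows while correctly producing all of $\Gamma_{\{a\}}^{(n_p+n_q)}$, $\bar Z_{(\{a\},\{\xi\})}$ and the sign $(-1)^{N(n_p+n_q)}$, and in particular handling the mismatch of dimensions uniformly in the three regimes $n_p+n_q \lessgtr N$. In the rational case \cite{KitMNT17} this was already delicate; in the trigonometric case the replacement of monomials by $\sinh^2$-powers, the appearance of the four distinguished points $a_\ell$, and the non-trivial $\beta$-dependence of the normalization make the combinatorics of the factorized prefactors considerably more intricate, so I would isolate this as a separate lemma (to be proved in Appendices~\ref{app-det-id} and \ref{app-Slavnov}) and only invoke it here. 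Once that identity and the identification of $g_{\boldsymbol{\varepsilon},\boldsymbol{\varepsilon}'}$ are in hand, \eqref{sc-hom-A} follows by collecting the prefactors.
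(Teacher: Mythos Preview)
Your proposal is correct and follows essentially the same route as the paper: start from \eqref{sc-det-1}, rewrite it as an $\mathcal{A}$-functional in the $\xi$-variables (the paper's intermediate form is \eqref{sc-det-2}, obtained by pulling out $\prod_i P(\xi_i^{(0)})Q(\xi_i^{(0)})P(\xi_i^{(1)})Q(\xi_i^{(1)})/(P(\xi_i)Q(\xi_i))$ so that the determinant entries involve the rational function $f_{\{a\},\{p\}\cup\{q\}}$), then invoke the exchange identities of Appendix~\ref{app-det-id} and collect prefactors. Your diagnosis of the three regimes $n_p+n_q\lessgtr N$, of why $g\equiv 0$ when $n_a=2$, and of the recursive structure \eqref{rec-geps} for $L<N$ all match the paper's Identities~\ref{id-4-N=L}--\ref{id-4-M<N}; the only minor difference is that the paper proves the exchange identity not by a Laplace/Cauchy expansion but by multiplying by an auxiliary matrix $\mathcal{C}^Z$ built from $\prod_{\ell\neq k}(\varsigma(\lambda)-\varsigma(z_\ell))$ and then introducing a limiting parameter $\Lambda\to\infty$ to isolate the extra $g$-column.
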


\begin{proof}
Using the notation \eqref{def-Afg}, we can rewrite \eqref{sc-det-1} as
\begin{multline}\label{sc-det-2}
    {}_{\boldsymbol{\varepsilon}}\moy{Q\, |\, P}_{\boldsymbol{\varepsilon'}}
    =(-1)^N\, Z_\beta\,
    \prod_{j=1}^N\frac{e^{\xi_j}\, \mathsf{g}_-(\eta/2-\xi_j)}{\prod_{\ell=1}^{n_a}\sinh(\xi_j-a_\ell)}\\
   \times
    \prod_{i=1}^N\frac{P(\xi_i^{(0)})\, Q(\xi_i^{(0)})\,P(\xi_i^{(1)})\, Q(\xi_i^{(1)})}{P(\xi_i)\, Q(\xi_i)}\
    \mathcal{A}_{\{\xi_1,\ldots,\xi_N\}}\left[ f_{\{a\}, \{p\}\cup\{q\}} \right]
\end{multline}
in terms of the function $f_{\{a\}, \{p\}\cup\{q\}}\equiv f_{\{a_1,\ldots, a_{n_a}\},\{p_1,\ldots,p_{n_p}\}\cup\{q_1,\ldots,q_{n_q}\}}$ defined as
\begin{equation}
   f_{\{a\}, \{p\}\cup\{q\}}(z)=\frac{\prod_{\ell=1}^{n_a}\sinh(z+a_\ell)}{\sinh(2z)}\prod_{\mu\in\{p\}\cup\{q\}}\frac{\cosh(2z)-\cosh(2\mu)}{\cosh(2z+\eta)-\cosh(2\mu)}.
\end{equation}
In \eqref{sc-det-2},  we have also used  the explicit expression \eqref{norm-states} of $N(\{\xi\},\beta)$.

We now use the identities of Appendix~\ref{app-det-id} to transform \eqref{sc-det-2} in terms of a new ratio of determinants in which the role of the sets of variables $\{\xi\}$ and $\{p\}\cup\{q\}$ are exchanged. Reinserting part of the normalization coefficient into the determinant in the numerator, we finally obtain \eqref{sc-hom-A}.
\end{proof}

\begin{rem}
  Note that the normalization coefficient \eqref{norm-gamma}, and hence the scalar product \eqref{sc-hom-A}, vanishes in the case $\boldsymbol{\varepsilon}= -\boldsymbol{\varepsilon'}$ and $n_p+n_q<N$.
\end{rem}

Theorem~\ref{prop-hom} is the XXZ analog of the first part of Theorem~4.1 of \cite{KitMNT17}. The expression~\eqref{sc-hom-A} is now in a completely regular form with respect to the homogeneous limit. However, the formulas that we obtain for the XXZ scalar products appear to be significantly more complicated than their XXX analogs, at least in the $\boldsymbol{\varepsilon}= \boldsymbol{\varepsilon'}$ case, due to the appearance of the non-zero function $g_{\boldsymbol{\varepsilon},\boldsymbol{\varepsilon}}$. 

As shown in Appendix~\ref{app-Slavnov}, it is also possible to transform the functional~\eqref{def-Afg} of $f$ and $g$ into a new functional which takes the form of a generalized version of the famous Slavnov formula \cite{Sla89} (see Identities~\ref{id-Slav_gen=} and \ref{id-Slav_gen>}). This can be done whatever the form of $f$ and $g$ and for any arbitrary set of integers $\{z_1,\ldots,z_L\}$ defined as the union of two subsets $\{x_1,\ldots,x_{L_1}\}\cup\{y_1,\ldots,y_{L_2}\}$. This means in particular that we can re-write the scalar product \eqref{sc-hom-A} in terms of a generalized Slavnov determinant by using Identities~\ref{id-Slav_gen=} and \ref{id-Slav_gen>}, and this for any two arbitrary separate states (i.e. without supposing one of the two states to be an eigenstate). However, the general formula is quite cumbersome\footnote{It nevertheless simplifies in the cases  $\boldsymbol{\varepsilon}\not= \boldsymbol{\varepsilon'}$, for which the function $g_{\boldsymbol{\varepsilon},\boldsymbol{\varepsilon}'}$ vanishes: the resulting formulas are then quite similar to the ones obtained for the XXX chain in \cite{KitMNT17}.}. 
Therefore, we chose not to present it in the main text (the interested reader can refer to Appendix~\ref{app-Slavnov}), and instead to emphasize on a particular case which seems to be the most relevant for the computation of correlation functions: when one of the two sets of variables satisfies the Bethe equations following from the homogeneous T-Q equation \eqref{hom-TQ}.
This is the purpose of the next subsection.

\subsection{Scalar product of an eigenstate with an arbitrary separate state}

As mentioned above, the generalization of the Slavnov formula for the scalar products of two arbitrary separate states is more cumbersome than in the XXX rational case~\cite{KitMNT17}, at least when $\boldsymbol{\varepsilon}= \boldsymbol{\varepsilon'}$ due to the appearance of the non-zero function $g_{\boldsymbol{\varepsilon},\boldsymbol{\varepsilon}}$.
However, the most general case is not the most important for the computation of correlation functions in the thermodynamic limit (i.e. on the half line), for which we need to consider normalized mean values of the form
\begin{equation}
  E(\mathcal{O})
  =\lim_{N\rightarrow\infty}
    \frac{ {}_{\boldsymbol{\varepsilon}}\moy{Q\, |\mathcal{O}|\, Q}_{\boldsymbol{\varepsilon}}}
           { {}_{\boldsymbol{\varepsilon}}\moy{Q\, |\, Q}_{\boldsymbol{\varepsilon}}},
\label{elementary_block}
\end{equation}
where $\mathcal{O}$ is a product of local spin operators \cite{KitMT00,KitKMNST07,KitKMNST08}.
Note that, strictly speaking, the denominator in \eqref{elementary_block} is not the square of the norm of the separate state $\ket{Q}_{\boldsymbol{\varepsilon}}$, but it plays exactly the same role, so, by a slight abuse of language, we will still use the terminology 'norm' in this section.

For the consideration of quantities of the form \eqref{elementary_block} we can make the following remarks:
\begin{itemize}
\item  To compute the norms  and mean values of local operators the case $\boldsymbol{\varepsilon}= \boldsymbol{\varepsilon'}$ (i.e. $n_a=4$) seems to be the most relevant.
 \item In the half-line limit one of the boundaries should become irrelevant, which means that we can  impose the most convenient boundary constraint \eqref{cond-hom-N}. Then all the eigenstates are characterized by polynomials $Q$ of degree $N$ satisfying the homogeneous Baxter equation \eqref{hom-TQ}.
 \item We expect that the resulting action of the product of local operators $\mathcal{O}$ on an eigenstate can always be simply expressed as a linear combination of off-shell separate states associated to polynomials of degree $L\ge N$. 
\end{itemize}
Hence, in this subsection, we will restrict ourselves to the case $\boldsymbol{\varepsilon}= \boldsymbol{\varepsilon'}$ (i.e. $n_a=4$). We shall moreover suppose that $n_p\ge n_q$ and that the polynomial $Q$ satisfies the homogeneous T-Q equation~\eqref{hom-TQ}. We recall that all other cases can be deduced from the general formulas presented in Appendix~\ref{app-Slavnov}.

\begin{theorem}
\label{th-Slavnov1}
Let $P$ and $Q$ be two trigonometric polynomials of the form \eqref{P-Q-form} and of the same degree $n_p=n_q=n$.
We suppose moreover that $Q(\la)$ satisfies the homogeneous T-Q equation \eqref{hom-TQ} with  $\tau(\la)\in\Sigma_\mathcal{T}$, whereas the roots $p_j$ of the trigonometric polynomial $P$ are arbitrary complex numbers.

Then the scalar product of the two corresponding separate states ${}_{\boldsymbol{\varepsilon}}\bra{Q}$ and $\ket{P}_{\boldsymbol{\varepsilon}}$ can be written as
\begin{multline}
\label{eq_Slav_formula}
 {}_{\boldsymbol{\varepsilon}}\moy{Q\, |\, P}_{\boldsymbol{\varepsilon}}
    =  Z_\beta \ \bar{Z}_{(\{a\},\{\xi\})}\
    \Gamma_{\{a\}}^{(2n)}\ H_Q\big[f_{\boldsymbol{\varepsilon},\boldsymbol{\varepsilon}}, g_{\boldsymbol{\varepsilon}}^{(2n)}\big]\,    
    \pl_{j=1}^n\frac{Q(p_j)}{\sinh(2p_j+\eta)\, \sinh(2p_j-\eta)}
    \\
    \times 
    \pl_{j=1}^{n}\left(-\frac{\mathbf{A}_{\boldsymbol{\varepsilon}}(q_j)}{\sinh(2q_j+\eta)}\right) \
    \frac{\widehat{V}(q_1-\frac\eta2,\dots, q_{n}-\frac\eta2)}{\widehat{V}(q_1+\frac\eta2,\dots, q_{n}+\frac\eta2)}\,
    \frac{\det_{1\le j,k\le n}\left[\frac{\partial\tau(p_j)}{\partial q_k}\right]}{\widehat{V}(q_n,\dots, q_1)\widehat{V}(p_1,\dots,p_n)} ,
\end{multline}
where the normalization coefficient $Z_\beta$, $\bar{Z}_{(\{a\},\{\xi\})}$ and $\Gamma_{\{a\}}^{(2n)}$ are respectively given by \eqref{Zbeta}, \eqref{Zxi} and \eqref{norm-gamma}.
The normalization coefficient $H_Q\big[f_{\boldsymbol{\varepsilon},\boldsymbol{\varepsilon}}, g_{\boldsymbol{\varepsilon}}^{(2n)}\big]$ is defined
in terms of the roots $q_1,\ldots, q_{n}$ of $Q$ as
\begin{equation}
H_Q\big[f_{\boldsymbol{\varepsilon},\boldsymbol{\varepsilon}}, g_{\boldsymbol{\varepsilon}}^{(2n)}\big]
= 1+\sum_{j=1}^{n} \frac{g_{\boldsymbol{\varepsilon}}^{(2n)}(q_j)\, \sinh(2 q_j-\eta)}{f_{\boldsymbol{\varepsilon},\boldsymbol{\varepsilon}}(-q_j)\, Q'(q_j)Q(q_j-\eta)}.
\end{equation}
It involves the functions $f_{\boldsymbol{\varepsilon},\boldsymbol{\varepsilon}}$ \eqref{feps}, which can be more simply written as
%
\begin{equation}
 \label{f-function}
  f_{\boldsymbol{\varepsilon},\boldsymbol{\varepsilon}}(\la)=\frac{\mathbf{A}_{\boldsymbol{\varepsilon}}(-\lambda)}{\sinh(2\la-\eta)},
\end{equation}
and $g_{\boldsymbol{\varepsilon}}^{(2n)}$ defined as in \eqref{geps-N}, \eqref{geps-L>N} or \eqref{rec-geps} according to whether $2n$ is equal, larger or smaller than $N$.
\end{theorem}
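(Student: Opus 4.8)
The starting point is Theorem~\ref{prop-hom}, which gives the scalar product ${}_{\boldsymbol{\varepsilon}}\moy{Q\, |\, P}_{\boldsymbol{\varepsilon}}$ in the form $(-1)^{N(n_p+n_q)}\, Z_\beta\, \bar{Z}_{(\{a\},\{\xi\})}\, \Gamma_{\{a\}}^{(n_p+n_q)}\, \mathcal{A}_{\{q\}\cup\{p\}}[f_{\boldsymbol{\varepsilon},\boldsymbol{\varepsilon}}, g_{\boldsymbol{\varepsilon}}^{(2n)}]$, with $n_a=4$ and $n_p=n_q=n$. Since $\boldsymbol{\varepsilon}=\boldsymbol{\varepsilon}'$ here, the function $g_{\boldsymbol{\varepsilon}}$ is present, but all prefactors $Z_\beta$, $\bar{Z}_{(\{a\},\{\xi\})}$, $\Gamma_{\{a\}}^{(2n)}$ are already exactly those appearing in \eqref{eq_Slav_formula}. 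Thus the whole content of the theorem is the evaluation of the functional $\mathcal{A}_{\{q\}\cup\{p\}}[f_{\boldsymbol{\varepsilon},\boldsymbol{\varepsilon}}, g_{\boldsymbol{\varepsilon}}^{(2n)}]$ on the set $\{z\}=\{q\}\cup\{p\}$ when the subset $\{q\}$ consists of the Bethe roots of $Q$, i.e.\ the reduction of a ``generalized Slavnov'' functional to an $n\times n$ Jacobian determinant $\det[\partial\tau(p_j)/\partial q_k]$. The plan is to carry this out by first invoking the generalized Slavnov identity of Appendix~\ref{app-Slavnov} (Identities~\ref{id-Slav_gen=} and~\ref{id-Slav_gen>}) to split $\mathcal{A}_{\{q\}\cup\{p\}}[f,g]$ according to the partition $\{z\}=\{q\}\cup\{p\}$, obtaining a determinant of size $n_p+n_q=2n$ whose block structure is triangular modulo the rank-one correction carried by $g$.

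The second step is to feed in the homogeneous $T$-$Q$ equation \eqref{hom-TQ} satisfied by $Q$. At each argument $z=q_k$, the quantity $\mathbf{A}_{\boldsymbol{\varepsilon}}(q_k)\,Q(q_k-\eta)+\mathbf{A}_{\boldsymbol{\varepsilon}}(-q_k)\,Q(q_k+\eta)$ equals $\tau(q_k)\,Q(q_k)=0$, so the ``Bethe equations'' $\mathbf{A}_{\boldsymbol{\varepsilon}}(q_k)\,Q(q_k-\eta)=-\mathbf{A}_{\boldsymbol{\varepsilon}}(-q_k)\,Q(q_k+\eta)$ hold. Using the rewriting \eqref{f-function}, $f_{\boldsymbol{\varepsilon},\boldsymbol{\varepsilon}}(\lambda)=\mathbf{A}_{\boldsymbol{\varepsilon}}(-\lambda)/\sinh(2\lambda-\eta)$, one recognizes that the entries of the Slavnov-type matrix restricted to the $\{q\}$-rows become, after using the Bethe equations, proportional to the derivative $\partial\tau(p_j)/\partial q_k$ in the $\{p\}$-columns — this is the standard Slavnov mechanism whereby an off-shell/on-shell scalar product collapses to a Jacobian. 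Concretely I would differentiate the $T$-$Q$ relation written at a generic spectral parameter with respect to the Bethe roots $q_k$ and evaluate at $\lambda=p_j$; since $Q(\lambda)=\prod_j(\sinh^2\lambda-\sinh^2 q_j)$, $\partial Q(\lambda)/\partial q_k$ is explicit, and one gets $\partial\tau(p_j)/\partial q_k$ in terms of $\mathbf{A}_{\boldsymbol{\varepsilon}}$ and ratios of $Q$-values, which is precisely what is needed to identify the surviving determinant. The Vandermonde ratios $\widehat{V}(q-\tfrac\eta2)/\widehat{V}(q+\tfrac\eta2)$ and the denominators $\widehat{V}(q_n,\dots,q_1)\widehat{V}(p_1,\dots,p_n)$, together with the products $\prod_j Q(p_j)/(\sinh(2p_j+\eta)\sinh(2p_j-\eta))$ and $\prod_j(-\mathbf{A}_{\boldsymbol{\varepsilon}}(q_j)/\sinh(2q_j+\eta))$, come out as the Jacobian factors relating the columns indexed by $\cosh(2z+\bar\epsilon\eta)/2$ to those indexed directly by the roots, exactly as in the rational case \cite{KitMNT17}.

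The third and genuinely new step, absent in the XXX treatment, is the handling of the rank-one inhomogeneous column carried by $g_{\boldsymbol{\varepsilon}}^{(2n)}$. Because $g_{\boldsymbol{\varepsilon}}^{(2n)}$ contributes only to the last column of $\mathcal{A}_{\{z\}}[f,g]$, expanding that determinant by multilinearity in the last column produces the main Slavnov determinant (treated as above) plus a single extra term in which the $g$-column replaces the corresponding $f$-column. On the $\{q\}$-rows this extra term is evaluated using the Bethe equations and the definition of $H_Q$: the sum $\sum_{j=1}^n g_{\boldsymbol{\varepsilon}}^{(2n)}(q_j)\sinh(2q_j-\eta)/[f_{\boldsymbol{\varepsilon},\boldsymbol{\varepsilon}}(-q_j)Q'(q_j)Q(q_j-\eta)]$ is exactly the residue-type correction that arises when one expands along that column and resums the cofactors, so that the entire contribution factorizes as $H_Q\big[f_{\boldsymbol{\varepsilon},\boldsymbol{\varepsilon}}, g_{\boldsymbol{\varepsilon}}^{(2n)}\big]$ multiplying the same Jacobian. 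I would verify this factorization by a cofactor expansion together with the interpolation identity $Q'(q_j)=\lim_{\lambda\to q_j}Q(\lambda)/(\sinh^2\lambda-\sinh^2q_j)\cdot(\text{derivative factor})$ and the Bethe relation to turn $\mathbf{A}_{\boldsymbol{\varepsilon}}(-q_j)$ into $\mathbf{A}_{\boldsymbol{\varepsilon}}(q_j)$ where needed. The main obstacle I anticipate is precisely bookkeeping this rank-one correction: keeping track of the signs, of the $\sinh(2q_j\pm\eta)$ factors generated by $f_{\boldsymbol{\varepsilon},\boldsymbol{\varepsilon}}$ versus those in $\mathbf{A}_{\boldsymbol{\varepsilon}}$, and of the degree-counting that distinguishes the cases $2n<N$, $2n=N$, $2n>N$ entering through $g_{\boldsymbol{\varepsilon}}^{(2n)}$ and $\Gamma_{\{a\}}^{(2n)}$; once the determinantal identity of Appendix~\ref{app-Slavnov} is granted, no new idea is needed, only a careful matching of prefactors, so the bulk of the proof is the reduction described above with the routine algebra relegated to the appendices.
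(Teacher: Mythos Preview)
Your plan is correct in spirit and would succeed, but it takes a longer route than the paper. The paper's proof is a one-liner: apply Theorem~\ref{prop-hom}, then apply Identity~\ref{slavnov_square}. You instead invoke the \emph{general} Identities~\ref{id-Slav_gen=} and~\ref{id-Slav_gen>} (note that \ref{id-Slav_gen>} is irrelevant here since $n_p=n_q$), and then outline, in your steps~2 and~3, the on-shell simplification via the Bethe equations and the rank-one factorization of the $g$-column. But precisely those two steps are the \emph{content} of Identity~\ref{slavnov_square}: its proof introduces an auxiliary block matrix in which the on-shell condition \eqref{Bethe_withf} kills the $\mathcal{G}^{(1,1)}$ block, the $g$-column sits in the diagonal-plus-rank-one block $\mathcal{G}^{(1,2)}$ whose determinant \eqref{det_with_g} already \emph{is} the factor $1+\sum_k X^g_{f,k}=H_Q[f_{\boldsymbol{\varepsilon},\boldsymbol{\varepsilon}},g_{\boldsymbol{\varepsilon}}^{(2n)}]$, and the remaining block $\mathcal{G}^{(2,1)}$ is the Slavnov matrix which one then rewrites as $\partial\tau(p_j)/\partial q_k$. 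So you are essentially re-deriving Identity~\ref{slavnov_square} (or, equivalently, specializing Identity~\ref{id-Slav_gen=} to the on-shell case, as in the Remark following it) rather than quoting it. Your approach buys nothing extra here, but it is not wrong; it just repeats work the paper has already packaged into a dedicated identity.
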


A comment is due here. This result is essentially the {\it Slavnov formula} \cite{Sla89} (presented as in \cite{KitMT99} in terms of a jacobian), the only difference being the normalization coefficients.
Note however that most of these normalization coefficients
do not depend on the off-shell separate state $\ket{P}$. Hence, they are irrelevant for the computation of the  correlation functions since they will always be cancelled by the same coefficient from the norm of the on-shell state in the denominator of \eqref{elementary_block}. 

\begin{proof}
This is a direct consequence of Theorem~\ref{prop-hom} and of Identity~\ref{slavnov_square}.\end{proof}

As a corollary  of this theorem is the analog of the Gaudin formula \cite{GauMcCW81} for the square of the norm of on-shell separate states:
\begin{corollary}
Let $Q$ be a polynomial of the form \eqref{P-Q-form} of degree degree $n_q=n$ satisfiyng the homogeneous T-Q equation \eqref{hom-TQ} with  $\tau(\la)\in\Sigma_\mathcal{T}$.
Then
\begin{multline}
 {}_{\boldsymbol{\varepsilon}}\moy{Q\, |\, Q}_{\boldsymbol{\varepsilon}}
    = 
     Z_\beta \ \bar{Z}_{(\{a\},\{\xi\})}\
    \Gamma_{\{a\}}^{(2n)}\ H_Q\big[f_{\boldsymbol{\varepsilon},\boldsymbol{\varepsilon}}, g_{\boldsymbol{\varepsilon}}^{(2n)}\big]\,
      \pl_{j=1}^n\frac{\mathbf{A}_{\boldsymbol{\varepsilon}}(q_j)^2\, Q(q_j-\eta)}{\sinh(2q_j+\eta)^2\, \sinh(2q_j-\eta)}
    \\
    \times 
   \frac{\widehat{V}(q_1-\frac\eta2,\dots, q_{n}-\frac\eta2)}{\widehat{V}(q_1+\frac\eta2,\dots, q_{n}+\frac\eta2)}\,
    \frac{\det_{1\le j,k\le n}\left[\frac{\partial}{\partial q_k}\log
    \left(\frac{\mathbf{A}_{\boldsymbol{\varepsilon}}(-q_j)\, Q(q_j+\eta)}{\mathbf{A}_{\boldsymbol{\varepsilon}}(q_j)\, Q(q_j-\eta)}\right)\right]}{\widehat{V}(q_n,\dots, q_1)\widehat{V}(q_1,\dots,q_n)} .
\end{multline}
\end{corollary}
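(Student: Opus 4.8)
The plan is to derive this as an immediate specialization of Theorem~\ref{th-Slavnov1} to the case $P=Q$, i.e.\ by sending the free parameters $p_j$ of the off-shell state to the Bethe roots $q_j$ (so $n_p=n_q=n$). Done naively, the right-hand side of \eqref{eq_Slav_formula} becomes a $0/0$ indeterminacy: the factor $\pl_{j=1}^n Q(p_j)$ vanishes, while the Slavnov Jacobian $\det_{j,k}\!\big[\partial\tau(p_j)/\partial q_k\big]$ generically diverges, each row $j$ acquiring a pole because $\tau(\lambda)=\big[\mathbf{A}_{\boldsymbol{\varepsilon}}(\lambda)\,Q(\lambda-\eta)+\mathbf{A}_{\boldsymbol{\varepsilon}}(-\lambda)\,Q(\lambda+\eta)\big]/Q(\lambda)$ has a denominator $Q(\lambda)$. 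The actual content of the proof is thus to perform this limit cleanly and to recognise that it converts the Slavnov Jacobian into the Gaudin determinant $\det_{j,k}\!\big[\partial_{q_k}\log\!\big(\mathbf{A}_{\boldsymbol{\varepsilon}}(-q_j)Q(q_j+\eta)/\mathbf{A}_{\boldsymbol{\varepsilon}}(q_j)Q(q_j-\eta)\big)\big]$.

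Concretely, I would first differentiate the homogeneous T-Q equation \eqref{hom-TQ} with respect to $q_k$, evaluate it at $\lambda=p_j$, and use both $\partial_{q_k}\log Q(\mu)=-\sinh(2q_k)/(\sinh^2\mu-\sinh^2 q_k)$ and \eqref{hom-TQ} itself (to eliminate $\tau(p_j)$). This produces
\begin{multline*}
   \frac{\partial\tau(p_j)}{\partial q_k}=\frac{\widetilde{M}_{jk}}{Q(p_j)},\qquad
   \widetilde{M}_{jk}=\mathbf{A}_{\boldsymbol{\varepsilon}}(p_j)\,Q(p_j-\eta)\,\partial_{q_k}\!\log\frac{Q(p_j-\eta)}{Q(p_j)}\\
   +\mathbf{A}_{\boldsymbol{\varepsilon}}(-p_j)\,Q(p_j+\eta)\,\partial_{q_k}\!\log\frac{Q(p_j+\eta)}{Q(p_j)},
\end{multline*}
so that $\pl_{j=1}^n Q(p_j)\cdot\det_{j,k}\!\big[\partial\tau(p_j)/\partial q_k\big]=\det_{j,k}\!\big[\widetilde{M}_{jk}\big]$, a genuine determinant. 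The key observation is that $\widetilde{M}_{jk}$ stays finite as $p_j\to q_j$: for $k\neq j$ this is immediate, while for $k=j$ the would-be pole lives only in the combination $\big(\mathbf{A}_{\boldsymbol{\varepsilon}}(p_j)Q(p_j-\eta)+\mathbf{A}_{\boldsymbol{\varepsilon}}(-p_j)Q(p_j+\eta)\big)\,\partial_{q_j}\log Q(p_j)=\tau(p_j)\,Q(p_j)\,\sinh(2q_j)/(\sinh^2 p_j-\sinh^2 q_j)$, whose limit is the finite quantity $\tau(q_j)\,Q'(q_j)$.

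Next I would invoke the Bethe equations $\mathbf{A}_{\boldsymbol{\varepsilon}}(-q_j)\,Q(q_j+\eta)=-\mathbf{A}_{\boldsymbol{\varepsilon}}(q_j)\,Q(q_j-\eta)$ (which are \eqref{hom-TQ} at $\lambda=q_j$) to pull $-\mathbf{A}_{\boldsymbol{\varepsilon}}(q_j)\,Q(q_j-\eta)$ out of the $j$-th row of $\big[\widetilde{M}_{jk}\big]\big|_{p=q}$. The residual matrix is, for $k\neq j$, exactly $\partial_{q_k}\log\!\big(Q(q_j+\eta)/Q(q_j-\eta)\big)$, the off-diagonal Gaudin entry; for $k=j$, the leftover $\tau(q_j)Q'(q_j)$ term is precisely the piece that combines with the "variation of the roots" derivative to reconstitute the full (total) logarithmic derivative — this last point follows by differentiating the identity $\mathbf{A}_{\boldsymbol{\varepsilon}}(\lambda)Q(\lambda-\eta)+\mathbf{A}_{\boldsymbol{\varepsilon}}(-\lambda)Q(\lambda+\eta)-\tau(\lambda)Q(\lambda)\equiv 0$ once in $\lambda$, evaluating at $q_j$, and using the Bethe equation again. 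Hence $\det_{j,k}\!\big[\widetilde{M}_{jk}\big]\big|_{p=q}=\pl_{j=1}^n\!\big(-\mathbf{A}_{\boldsymbol{\varepsilon}}(q_j)Q(q_j-\eta)\big)\cdot\det_{j,k}\!\big[\partial_{q_k}\log\!\big(\mathbf{A}_{\boldsymbol{\varepsilon}}(-q_j)Q(q_j+\eta)/\mathbf{A}_{\boldsymbol{\varepsilon}}(q_j)Q(q_j-\eta)\big)\big]$.

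Finally I would collect the prefactors of \eqref{eq_Slav_formula} in the limit $p_j\to q_j$: the factor $\pl_j Q(p_j)$ is absorbed as above, $\widehat{V}(p_1,\dots,p_n)\to\widehat{V}(q_1,\dots,q_n)$ and the reversed/forward Vandermonde pairing in the denominator is inherited unchanged, and the product $\pl_j\!\big(-\mathbf{A}_{\boldsymbol{\varepsilon}}(q_j)/\sinh(2q_j+\eta)\big)$ times the $\pl_j\!\big(-\mathbf{A}_{\boldsymbol{\varepsilon}}(q_j)Q(q_j-\eta)\big)$ extracted from the determinant and $\pl_j 1/\big(\sinh(2q_j+\eta)\sinh(2q_j-\eta)\big)$ assembles into $\pl_{j=1}^n \mathbf{A}_{\boldsymbol{\varepsilon}}(q_j)^2\,Q(q_j-\eta)/\big(\sinh(2q_j+\eta)^2\sinh(2q_j-\eta)\big)$; the remaining coefficients $Z_\beta$, $\bar{Z}_{(\{a\},\{\xi\})}$, $\Gamma_{\{a\}}^{(2n)}$, $H_Q[f_{\boldsymbol{\varepsilon},\boldsymbol{\varepsilon}},g_{\boldsymbol{\varepsilon}}^{(2n)}]$ and the ratio $\widehat{V}(q_1-\tfrac\eta2,\dots)/\widehat{V}(q_1+\tfrac\eta2,\dots)$ are untouched, since they never depended on $\{p\}$ and $n_p+n_q=2n$ throughout. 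This yields the stated Gaudin-type formula. The main obstacle is exactly the $p_j\to q_j$ limit in the middle steps: establishing the regularity of $\widetilde{M}_{jk}$ despite the apparent $0\cdot\infty$ on the diagonal, and — more delicately — checking that the surviving term $\tau(q_j)Q'(q_j)$ is precisely what promotes the "root-variation" derivative to the total logarithmic derivative of the Gaudin matrix (with due attention to what $\partial/\partial q_k$ means there); the bookkeeping of signs through the Vandermonde reorderings should then be routine, mirroring the proof of Theorem~\ref{th-Slavnov1}.
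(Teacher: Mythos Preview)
Your proposal is correct and follows exactly the route the paper implicitly takes: the paper states this corollary without proof, as an immediate specialization of Theorem~\ref{th-Slavnov1} to $P=Q$, and your write-up supplies precisely the standard Slavnov-to-Gaudin limit that this specialization requires. The resolution of the $0/0$ indeterminacy via $\widetilde{M}_{jk}=Q(p_j)\,\partial\tau(p_j)/\partial q_k$, the use of the Bethe equations to factor $-\mathbf{A}_{\boldsymbol{\varepsilon}}(q_j)Q(q_j-\eta)$ from each row, and the identification of the diagonal remainder $\tau(q_j)Q'(q_j)$ with the ``argument'' part of the total logarithmic derivative (obtained by differentiating \eqref{hom-TQ} in $\lambda$ at $\lambda=q_j$) are all correct and constitute the expected computation.
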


Finally, we present a generalization of Theorem~\ref{th-Slavnov1} in the case $n_p>n_q$:

\begin{theorem}
\label{th-Slavnov2}
Let $P$ and $Q$ be two trigonometric polynomials of the form \eqref{P-Q-form} with $n_p>n_q$.
We suppose moreover that $Q(\la)$ satisfies the homogeneous T-Q equation \eqref{hom-TQ} with  $\tau(\la)\in\Sigma_\mathcal{T}$, whereas the roots $p_j$ of the trigonometric polynomial $P$ are arbitrary complex numbers.

Then the scalar product of the two corresponding separate states ${}_{\boldsymbol{\varepsilon}}\bra{Q}$ and $\ket{P}_{\boldsymbol{\varepsilon}}$ can be written as
 a generalised Slavnov determinant with a rank one correction:
\begin{multline}
\label{eq_gen_Slavnov_formula}
 {}_{\boldsymbol{\varepsilon}}\moy{Q\, |\, P}_{\boldsymbol{\varepsilon}}
    = 
    (-1)^{N(n_p+n_q)}\ Z_\beta \ \bar{Z}_{(\{a\},\{\xi\})}\
    \Gamma_{\{a\}}^{(n_p+n_q)}\
    \prod_{i=1}^{n_p}\frac{Q(p_i)}{\sinh(2p_i+\eta)\,\sinh(2p_i-\eta)}
    \\
    \times
    \prod_{k=1}^{n_q}\frac{f_{\boldsymbol{\varepsilon},\boldsymbol{\varepsilon}}(-q_k)}{\sinh\eta\,\sinh(2q_k)}\,
    \frac{\widehat{V}(q_1-\frac\eta2,\dots, q_{n_q}-\frac\eta2)}{\widehat{V}(q_1+\frac\eta2,\dots, q_{n_q}+\frac\eta2)}\,
 \frac{\det_{n_p}\left(\mathcal{S}_\tau +\mathcal{P}\right)}{\widehat{V}(q_1,\dots q_{n_q})\widehat{V}(p_{n_p},\dots p_1)} .
\end{multline}
The $n_p\times n_p$ matrix $\mathcal{S}_\tau$ is a generalized Slavnov matrix with elements:
\begin{align}
   \big[\mathcal{S}_\tau\big]_{ j k}
       =&\frac{\partial\tau(p_j)}{\partial q_k},\qquad \text{if}\quad k\le n_q,
       \nonumber\\
   \big[ \mathcal{S}_\tau \big]_{ j k}
       =&\sum_{\bar\epsilon=\pm 1}\bar\epsilon\,
       \mathbf{A}_{\boldsymbol{\varepsilon}}(-\bar\epsilon p_j)\,
       \sinh(2p_j+\bar\epsilon\eta)\, \frac{Q(p_j+\bar\epsilon\eta)}{Q(p_j)}\,
       \left(\frac{\cosh(2 p_j+\bar\epsilon\eta)}2\right)^{k-n_q-1} 
       \nonumber\\
   &\hphantom{\frac{\partial\tau(p_j)}{\partial q_k},}\qquad \text{if}\quad k> n_q.
\end{align}
The additional rank one matrix $\mathcal{P}$ has only one non-zero column:  $\mathcal{P}_{ j k}=0$ if $j\neq n_p$ and $k\le n_q$, and
\begin{align}
\mathcal{P}_{ j n_p}
   &= g_{\boldsymbol{\varepsilon}}^{(n_p+n_q)}(p_j)\,
     \frac{\sinh(2p_j+\eta)\sinh(2p_j-\eta)}{Q^2(p_j)}
        \nonumber\\
  &\qquad
  -\sum_{\bar\epsilon=\pm 1}\bar\epsilon\,\mathbf{A}_{\boldsymbol{\varepsilon}}(-\bar\epsilon p_j)\,
    \sinh(2p_j+\bar\epsilon\eta)\, \frac{Q(p_j+\bar\epsilon\eta)}{Q(p_j)}
    \nonumber\\
   &\qquad
   \times
   \sum_{l=1}^{n_q} \frac{2\,g_{\boldsymbol{\varepsilon}}^{(n_p+n_q)}(q_l)\, \sinh(2q_l-\eta)}{  f_{\boldsymbol{\varepsilon},\boldsymbol{\varepsilon}}(-q_l)\, Q'(q_l)\, Q(q_l-\eta)\,\big[\cosh(2 p_j+\bar\epsilon\eta)-\cosh(2q_l-\eta)\big]}
\end{align}
if $k>n_q$.
\end{theorem}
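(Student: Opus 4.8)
The plan is to deduce this statement from Theorem~\ref{prop-hom} by evaluating the functional $\mathcal{A}_{\{q\}\cup\{p\}}[f_{\boldsymbol{\varepsilon},\boldsymbol{\varepsilon}},g^{(n_p+n_q)}_{\boldsymbol{\varepsilon}}]$ under the assumption that $Q$ solves the homogeneous $T$-$Q$ equation~\eqref{hom-TQ}, following the route of the proof of Theorem~\ref{th-Slavnov1} but using the general determinant identities of Appendix~\ref{app-Slavnov} (in particular Identity~\ref{id-Slav_gen>}, the $L_1>L_2$ analogue) in place of Identity~\ref{slavnov_square}. Concretely, I would first apply Theorem~\ref{prop-hom} with $\boldsymbol{\varepsilon'}=\boldsymbol{\varepsilon}$ (so that $n_a=4$) and set $L\equiv n_p+n_q$; this already produces the overall sign $(-1)^{N(n_p+n_q)}$ and all the explicit normalization factors $Z_\beta$, $\bar{Z}_{(\{a\},\{\xi\})}$, $\Gamma_{\{a\}}^{(n_p+n_q)}$ of \eqref{Zbeta}--\eqref{norm-gamma}, so that it only remains to rewrite the determinant $\mathcal{A}_{\{q\}\cup\{p\}}[f_{\boldsymbol{\varepsilon},\boldsymbol{\varepsilon}},g^{(n_p+n_q)}_{\boldsymbol{\varepsilon}}]$. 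At this stage the key input is the rewriting $f_{\boldsymbol{\varepsilon},\boldsymbol{\varepsilon}}(\lambda)=\mathbf{A}_{\boldsymbol{\varepsilon}}(-\lambda)/\sinh(2\lambda-\eta)$ of \eqref{f-function}, which ties the function $f$ entering $\mathcal{A}$ to the coefficient $\mathbf{A}_{\boldsymbol{\varepsilon}}$ of the $T$-$Q$ equation.

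Next I would invoke the generalized Slavnov identity of Appendix~\ref{app-Slavnov} (Identity~\ref{id-Slav_gen>}) for the partition $\{z\}=\{q\}\cup\{p\}$ with $\{y\}=\{q\}$ of cardinality $n_q$ and $\{x\}=\{p\}$ of cardinality $n_p\ge n_q$. This transforms $\mathcal{A}_{\{q\}\cup\{p\}}[f,g]$ into an $n_p\times n_p$ determinant divided by $\widehat{V}(q_1,\dots,q_{n_q})\,\widehat{V}(p_{n_p},\dots,p_1)$, dressed by the Vandermonde ratio $\widehat{V}(q_1-\tfrac\eta2,\dots,q_{n_q}-\tfrac\eta2)/\widehat{V}(q_1+\tfrac\eta2,\dots,q_{n_q}+\tfrac\eta2)$ and by the scalar prefactors $\prod_i Q(p_i)/[\sinh(2p_i+\eta)\sinh(2p_i-\eta)]$ and $\prod_k f_{\boldsymbol{\varepsilon},\boldsymbol{\varepsilon}}(-q_k)/[\sinh\eta\,\sinh(2q_k)]$ that appear in \eqref{eq_gen_Slavnov_formula}. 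In the resulting matrix, the first $n_q$ columns are attached to the $q$-variables: for these I would use~\eqref{hom-TQ} in the form $\tau(p_j)=[\mathbf{A}_{\boldsymbol{\varepsilon}}(p_j)Q(p_j-\eta)+\mathbf{A}_{\boldsymbol{\varepsilon}}(-p_j)Q(p_j+\eta)]/Q(p_j)$ and differentiate in $q_k$, recognizing them as $\partial\tau(p_j)/\partial q_k$, whereas the last $n_p-n_q$ columns (indices $k>n_q$) are read off directly as $\sum_{\bar\epsilon=\pm1}\bar\epsilon\,\mathbf{A}_{\boldsymbol{\varepsilon}}(-\bar\epsilon p_j)\,\sinh(2p_j+\bar\epsilon\eta)\,\frac{Q(p_j+\bar\epsilon\eta)}{Q(p_j)}\big(\tfrac{\cosh(2p_j+\bar\epsilon\eta)}2\big)^{k-n_q-1}$. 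Together these give the generalized Slavnov matrix $\mathcal{S}_\tau$.

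The remaining, and most delicate, point is the contribution of $g^{(n_p+n_q)}_{\boldsymbol{\varepsilon}}$, which is nonzero here since $\boldsymbol{\varepsilon}=\boldsymbol{\varepsilon'}$ and which is given by \eqref{geps-N}, \eqref{geps-L>N} or \eqref{rec-geps} according to whether $n_p+n_q$ is equal to, larger than, or smaller than $N$. In the functional $\mathcal{A}_{\{z\}}[f,g]$ it enters only through the $\delta_{j,L}\,g(z_i)$ term, i.e. only in the last column; carrying this single column through Identity~\ref{id-Slav_gen>} should leave it as one extra column of the $n_p\times n_p$ matrix, which I expect to split into a ``direct'' piece $g^{(n_p+n_q)}_{\boldsymbol{\varepsilon}}(p_j)\,\sinh(2p_j+\eta)\sinh(2p_j-\eta)/Q^2(p_j)$ coming from the $p$-block and a ``reaction'' piece generated when the $q$-columns were collapsed to $\partial\tau/\partial q_k$; the latter should carry precisely the weights $g^{(n_p+n_q)}_{\boldsymbol{\varepsilon}}(q_l)\,\sinh(2q_l-\eta)/[f_{\boldsymbol{\varepsilon},\boldsymbol{\varepsilon}}(-q_l)Q'(q_l)Q(q_l-\eta)]$ --- the same combination occurring in $H_Q$ in Theorem~\ref{th-Slavnov1} --- multiplied by the partial-fraction factors $1/[\cosh(2p_j+\bar\epsilon\eta)-\cosh(2q_l-\eta)]$. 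Identifying this rank-one column with $\mathcal{P}$ and collecting all prefactors then yields \eqref{eq_gen_Slavnov_formula}. The hard part is exactly this bookkeeping: when $g\equiv0$ (i.e. $\boldsymbol{\varepsilon}\neq\boldsymbol{\varepsilon'}$) the $q$- and $p$-blocks of the determinant decouple cleanly, but here one must verify that the cross-terms between the $g$-column and the $q$-block assemble into $\mathcal{P}_{jn_p}$ with exactly the stated coefficients, and that the three regimes for $g$ (including the inductive definition~\eqref{rec-geps} when $n_p+n_q<N$) are all accommodated by the single identity~\ref{id-Slav_gen>}; a good consistency check is that specializing to $n_p=n_q$ must make the rank-one correction collapse to the scalar factor $H_Q\big[f_{\boldsymbol{\varepsilon},\boldsymbol{\varepsilon}},g^{(2n)}_{\boldsymbol{\varepsilon}}\big]$ of Theorem~\ref{th-Slavnov1}.
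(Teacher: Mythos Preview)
Your overall strategy is exactly the paper's: apply Theorem~\ref{prop-hom} and then Identity~\ref{id-Slav_gen>}. However, you have the roles of the two subsets reversed. Identity~\ref{id-Slav_gen>} is stated for $L_1<L_2$ (not ``the $L_1>L_2$ analogue''), and in its formulation the set $\{x\}=\{x_1,\ldots,x_{L_1}\}$ is the \emph{smaller} one: it is the set that builds the polynomial $X(\lambda)=\prod_\ell[\varsigma(\lambda)-\varsigma(x_\ell)]$, that appears in the Vandermonde ratio $\widehat{V}(x-\eta/2)/\widehat{V}(x+\eta/2)$, and whose ``Bethe-like'' combinations $f(-x_k)-f(x_k)\varphi_{\{x\}}(x_k)$ sit in the first $L_1$ columns. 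Comparing with \eqref{eq_gen_Slavnov_formula} one sees $X=Q$, the Vandermonde ratio is in the $q$'s, and the determinant has size $n_p$: hence one must take $\{x\}=\{q\}$ (so $L_1=n_q$) and $\{y\}=\{p\}$ (so $L_2=n_p$), the opposite of what you wrote.

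This is not just a labeling issue. The reduction of the first $n_q$ columns to $\partial\tau(p_j)/\partial q_k$ relies on the on-shell condition \eqref{Bethe_withf} for the $x$-variables; with your assignment $\{x\}=\{p\}$ the $p_j$ are off-shell and that simplification does not occur. Likewise, the ``reaction'' piece of the rank-one column $\mathcal{P}$ carries the weights $X^g_{f,l}$ built from the $x$-variables, which is why the sum over $l$ in $\mathcal{P}_{j\,n_p}$ runs over the $q_l$'s and involves $Q'(q_l)$, $Q(q_l-\eta)$. Once you swap back to $\{x\}=\{q\}$, $\{y\}=\{p\}$, the identification of $\mathcal{S}_\tau$ and $\mathcal{P}$ with the matrix $\widetilde{\mathcal{S}}_{\mathbf{x},\mathbf{y}}[f,g]$ of Identity~\ref{id-Slav_gen>} becomes a direct reading-off (using \eqref{f-function} and the homogeneous $T$-$Q$ equation), with no further ``bookkeeping'' needed beyond what the identity already provides.
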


\begin{proof}
This is a direct consequence of Theorem~\ref{prop-hom} and of Identity~\ref{id-Slav_gen>}.\end{proof}

\section{Conclusion}
We have shown that the program of rewriting SoV type determinant representations for the scalar products of separate states (that include all eigenstates of the transfer matrix) in terms of generalized Slavnov's type determinants can be achieved for the most general  XXZ spin-1/2 integrable open chain. It generalizes to the trigonometric case the results obtained in the rational model \cite{KitMNT17}. It paves the way to the computation of form factors and correlation functions and to the study of the dynamics of these models that we plan to address in future publications. In particular, these formulae allow to start the computation of elementary blocks of correlation functions in the SoV framework on a similar ground to that previously developed for some special boundary conditions in \cite{KitKMNST07,KitKMNST08}. Along these lines, it would be very important also to pursue this program for the case of cyclic representations of the 6-vertex Yang-Baxter algebra. Indeed this case is relevant for the lattice version of the Sine-Gordon field theory and to the Chiral Potts model \cite{BazS90,GroMN12}. A rewriting of the scalar products in a way similar to what we achieved here would give, in particular, the possibility to obtain a direct re-derivation of the order parameter of the Chiral Potts model. More interestingly, it would also lead to the computation of the relevant form factors of local operators and correlation functions in the thermodynamic limit for these important models. We would like to stress here that the rewriting of the various determinants arising all along such a program is essential to be able to take the homogeneous limit 
and then eventually the thermodynamic limit. Hence those rewritings constitute a cornerstone for the applicability of the SoV method itself that  requires 
to consider the inhomogeneity parameters 
to be in generic position. This is perfectly acceptable only if the 
 homogeneous limit can be effectively taken in this end. 
We have no doubt that such a program can be achieved for general integrable models solvable by the quantum Separation of Variables method along the lines presented in this article.

\section*{Acknowledgements}
J. M. M., G. N. and V. T. are supported by CNRS. N. K. 
would like to thank LPTHE, Sorbonne University, and LPTMS, Univ. Paris-Sud, for hospitality. N. K. and V. T. would like to thank Laboratoire de Physique, ENS-Lyon, for hospitality.


\appendix

\section{A few useful properties of the gauged transformed operators}
\label{app-prop-gauged}

We gather here for completeness some relations involving the elements of the gauged transformed boundary monodromy matrix \eqref{gauged-U} or \eqref{U-SOS}.

Both matrices satisfy the dynamical reflection equation \eqref{dyn_refl}. A few useful commutation relations (which are therefore equally valid for the elements of \eqref{gauged-U} or \eqref{U-SOS})  issued from this equation are gathered here:
\begin{equation}
   \mathcal{B}(\lambda|\beta+1)\, \mathcal{B}(\mu |\beta-1)
   = \mathcal{B}(\mu |\beta+1)\, \mathcal{B}(\lambda|\beta-1),
   \label{comm-BB}
\end{equation}
\begin{multline}
   \big[ \mathcal{A}(\lambda|\beta) , \mathcal{A}(\mu|\beta)\big]
   =\frac{\sinh\eta\ \sinh(\lambda+\mu-\eta\beta)}{\sinh(\lambda+\mu)\ \sinh(\eta(\beta-1))}
   \\
   \times
   \big\{ \mathcal{B}(\lambda|\beta)\, \mathcal{C}(\mu |\beta)
   -\mathcal{B}(\mu|\beta)\, \mathcal{C}(\lambda|\beta)\big\},
   \label{comm-AA}
\end{multline}
\begin{multline}
   \big[ \mathcal{D}(\lambda|\beta) , \mathcal{D}(\mu|\beta)\big]
   =\frac{\sinh\eta\ \sinh(\lambda+\mu+\eta\beta)}{\sinh(\lambda+\mu)\ \sinh(\eta(\beta+1))}
   \\
   \times
   \big\{ \mathcal{C}(\mu|\beta)\, \mathcal{B}(\lambda |\beta)
   -\mathcal{C}(\lambda |\beta)\, \mathcal{B}(\mu|\beta)\big\},
   \label{comm-DD}
\end{multline}
\begin{multline}
  \mathcal{A}(\mu|\beta+1)\, \mathcal{B}(\lambda|\beta+1)
  =\frac{\sinh(\lambda+\mu-\eta)\ \sinh(\lambda-\mu+\eta)}{\sinh(\lambda+\mu)\ \sinh(\lambda-\mu)}\,
  \mathcal{B}(\lambda|\beta+1)\, \mathcal{A}(\mu|\beta-1)
  \\
  -\frac{\sinh(\lambda+\mu-\eta)\ \sinh\eta\ \sinh(\lambda-\mu+\eta\beta)}{\sinh(\lambda+\mu)\ \sinh(\lambda-\mu)\ \sinh(\eta\beta)} \,\mathcal{B}(\mu|\beta+1)\, \mathcal{A}(\lambda|\beta-1)
  \\
  +\frac{\sinh\eta\ \sinh(\lambda+\mu-\eta(\beta+1))}{\sinh(\lambda+\mu)\ \sinh(\eta\beta)}\,
  \mathcal{B}(\mu|\beta+1)\, \mathcal{D}(\lambda|\beta+1),
  \label{comm-AB}
\end{multline}
\begin{multline}
  \mathcal{B}(\lambda|\beta-1)\, \mathcal{D}(\mu|\beta-1)
  =\frac{\sinh(\lambda+\mu-\eta)\ \sinh(\lambda-\mu+\eta)}{\sinh(\lambda+\mu)\ \sinh(\lambda-\mu)}\,
  \mathcal{D}(\mu|\beta+1)\, \mathcal{B}(\lambda|\beta-1)
  \\
  +\frac{\sinh(\lambda+\mu-\eta)\ \sinh\eta\ \sinh(\lambda-\mu-\eta\beta)}{\sinh(\lambda+\mu)\ \sinh(\lambda-\mu)\ \sinh(\eta\beta)} \,\mathcal{D}(\lambda|\beta+1)\, \mathcal{B}(\mu|\beta-1)
  \\
  -\frac{\sinh\eta\ \sinh(\lambda+\mu+\eta(\beta-1))}{\sinh(\lambda+\mu)\ \sinh(\eta\beta)}\,
  \mathcal{A}(\lambda|\beta-1)\, \mathcal{B}(\mu|\beta-1).
  \label{comm-BD}
\end{multline}

By means of the transformation \eqref{gauged-U}, the elements of the matrix $\widetilde{\mathcal{U}}(\lambda|\beta)$ can explicitly be expressed in terms of the elements of the matrix  $\mathcal{U}_-(\lambda)$ \eqref{def-U-} as
\begin{align}
   &\widetilde{\mathcal{A}}_-(\lambda|\beta)
   =\widetilde{\mathcal{D}}_-(\lambda|-\beta)
   =\frac{1}{2\sinh(\eta\beta)}
   \Big\{ - e^{2\lambda-\eta-\eta\beta}\mathcal{A}_-(\lambda)-e^{\lambda-\frac{\eta}{2}+\eta\alpha} \mathcal{B}_-(\lambda) \nonumber\\
   &\hspace{6cm}
   +e^{\lambda-\frac{\eta}{2}-\eta\alpha}\mathcal{C}_-(\lambda)+e^{\eta\beta}\mathcal{D}_-(\lambda)\Big\},
    \label{Atilde}\displaybreak[0]\\
  &\widetilde{\mathcal{B}}_-(\lambda|\beta)
  =\widetilde{\mathcal{C}}_-(\lambda|-\beta)
  =\frac{1}{2\sinh(\eta\beta)}
   \Big\{ - e^{2\lambda-\eta+\eta\beta}\mathcal{A}_-(\lambda)-e^{\lambda-\frac{\eta}{2}+\eta\alpha} \mathcal{B}_-(\lambda) \nonumber\\
   &\hspace{6cm}
   +e^{\lambda-\frac{\eta}{2}+\eta(2\beta-\alpha)}\mathcal{C}_-(\lambda)+e^{\eta\beta}\mathcal{D}_-(\lambda)\Big\}  .
   \label{Btilde}
\end{align}
The gauged transformed matrix elements of \eqref{gauged-U}, as well as the SOS boundary matrix elements of \eqref{U-SOS}, also satisfy parity properties of the form
\begin{align}
    &e^{2\lambda}\,\sinh(2\lambda-\eta)\,{\mathcal{A}}_-(-\lambda|\beta-1)
    =\frac{\sinh(\eta(\beta+1))}{\sinh(\eta\beta)}\, \sinh(2\lambda)\,
    {\mathcal{D}}_-(\lambda|\beta+1)
      \nonumber\\
     &\hspace{6.5cm}
    -\frac{\sinh(2\lambda+\eta\beta)}{\sinh(\eta\beta)}\,\sinh\eta \,
    {\mathcal{A}}_-(\lambda|\beta-1),
    \label{parity-A}\\
    &e^{2\lambda}\,\sinh(2\lambda-\eta)\,{\mathcal{D}}_-(-\lambda|\beta+1)
    =\frac{\sinh(\eta(\beta-1))}{\sinh(\eta\beta)}\,\sinh(2\lambda)\,
    {\mathcal{A}}_-(\lambda|\beta-1)
         \nonumber\\
     &\hspace{6.5cm}
    +\frac{\sinh(2\lambda-\eta\beta)}{\sinh(\eta\beta)}\,\sinh\eta\,
    {\mathcal{D}}_-(\lambda|\beta+1),
    \label{parity-D}\\
    &e^{2\lambda}\,\sinh(2\lambda-\eta)\,{\mathcal{B}}_-(-\lambda|\beta)
    =-\sinh(2\lambda+\eta)\, {\mathcal{B}}_-(\lambda|\beta),
    \label{parity-B} \\
    &e^{2\lambda}\,\sinh(2\lambda-\eta)\,{\mathcal{C}}_-(-\lambda|\beta)
    =-\sinh(2\lambda+\eta)\, {\mathcal{C}}_-(\lambda|\beta).
    \label{parity-C}
\end{align}

\section{Action of the SOS boundary operators on the SoV states}
\label{app-act-SOS}

Computing the action of $\mathcal{A}_-^\mathrm{SOS}(\la|\beta-1)$ on $\bra{\beta-1,\mathbf{h}}$ at the $2N$ points $\pm\xi_n^{(h_n)}$, $n\in\{1,\ldots,N\}$, using the fact that
\begin{align}
   &\mathcal{U}_-^\text{SOS}(\eta/2|\beta)=(-1)^N{\det}_qM(0),\\
   &\mathcal{U}_-^\text{SOS}(\eta/2+i\pi/2|\beta)=i\coth\varsigma_-\,{\det}_qM(i \pi/2),
\end{align}
and that $e^{(2N+1)\lambda}\mathcal{A}_-^\mathrm{SOS}(\lambda|\beta-1)$ is a polynomial in $e^{2\lambda}$ of degree $2N+2$, we obtain,
\begin{multline}\label{act-AL}
   \bra{\beta-1,\mathbf{h}}\, \mathcal{A}_-^\mathrm{SOS}(\la|\beta-1)
   =\sum_{n=1}^N\sum_{\epsilon=\pm}
   \frac{\sinh(2\lambda-\eta)\, \sinh(\lambda+\epsilon\xi_n^{(h_n)})}{\sinh(2\xi_n^{(h_n)}-\epsilon\eta)\, \sinh(2\xi_n^{(h_n)})}
   \\
   \times
   \prod_{\substack{j=1\\j\not=n}}^N
   \frac{\sinh^2\lambda-\sinh^2\xi_j^{(h_j)}}{\sinh^2\xi_n^{(h_n)}-\sinh^2\xi_j^{(h_j)}}\,
   \mathsf{A}_-(\epsilon\xi_n^{(h_n)})\, \bra{\beta-1,\mathsf{T}^{\epsilon}_n\mathbf{h}}
   \\
   +(-1)^N\left[ \mathrm{det}_q M(0)\,\cosh(\lambda-\eta/2)\prod_{j=1}^N\frac{\sinh^2\lambda-\sinh^2\xi_j^{(h_j)}}{\sinh^2\frac{\eta}{2}-\sinh^2\xi_j^{(h_j)}}\, \right. 
   \\
    +\left. \coth \varsigma_- \,\mathrm{det}_q M(i\pi/2)\,\sinh(\lambda-\eta/2)\prod_{j=1}^N\frac{\sinh^2\lambda-\sinh^2\xi_j^{(h_j)}}{\cosh^2\frac{\eta}{2}+\sinh^2\xi_j^{(h_j)}}\right] 
    \bra{\beta-1,\mathbf{h}}
    \\
    +2^{2N+1} e^{\lambda+\eta}\sinh(2\lambda-\eta)\,\prod_{j=1}^N\left[ \sinh^2\lambda-\sinh^2\xi_j^{(h_j)}\right]
     \bra{\beta-1,\mathbf{h}}\, \mathcal{A}_-^\infty(\beta-1),
\end{multline}
in which
\begin{multline}
    \bra{\beta-1,\mathbf{h}}\, \mathcal{A}_-^\infty(\beta-1)
    =-\frac{e^{-3\eta/2-\eta(\beta-1)}}{2^{2N+1}\sinh(\eta(\beta-1))}
    \Bigg\{\Bigg[ \frac{\kappa_- \, e^{\eta(\beta-1)} \sinh(\eta\alpha+\tau_-)}{\sinh\varsigma_-}
    \\
    +\frac{(-1)^N\det_qM(0)}{2\prod_{j=1}^N\big(\sinh^2\frac{\eta}{2}-\sinh^2\xi_j^{(h_j)}\big)}
    -\frac{(-1)^N\coth\varsigma_-\, \det_qM(i\frac{\pi}{2})}{2\prod_{j=1}^N\big(\cosh^2\frac{\eta}{2}+\sinh^2\xi_j^{(h_j)}\big)}\Bigg]\bra{\beta-1,\mathbf{h}}
    \\
    +\sum_{n=1}^N\sum_{\epsilon=\pm}
    \frac{e^{\eta/2-\epsilon\xi_n^{(h_n)}}}{\sinh(2\xi_n^{(h_n)}-\epsilon\eta)\, \sinh(2\xi_n^{(h_n)})}
    \frac{\mathsf{A}_-(\epsilon\xi_n^{(h_n)})}{\prod\limits_{j\neq n}\big[ \sinh^2\xi_n^{(h_n)}-\sinh^2\xi_j^{(h_j)}\big]}\, \bra{\beta-1,\mathsf{T}^{\epsilon}_n\mathbf{h}}
\Bigg\},
\end{multline}
and
\begin{equation}
   \mathsf{T}_n^\pm\mathbf{h}=(h_1,\ldots,h_n\pm 1,\ldots, h_N)
   \qquad \text{for}\quad n\in\{1,\ldots,N\}.
\end{equation}
The action of $\mathcal{D}_-^\mathrm{SOS}(\lambda|\beta+1)$ on $\bra{\beta-1,\mathbf{h}}$ can then be obtained by using the parity identity \eqref{parity-A}.

Similarly,
\begin{multline}\label{act-DR}
   \mathcal{D}_-^\mathrm{SOS}(\la|\beta+1)\, \ket{\mathbf{h},\beta+1}
   =\sum_{n=1}^N\sum_{\epsilon=\pm}
   \frac{\sinh(2\lambda-\eta)\, \sinh(\lambda+\epsilon\xi_n^{(h_n)})}{\sinh(2\xi_n^{(h_n)}-\epsilon\eta)\, \sinh(2\xi_n^{(h_n)})}
   \\
   \times
   \prod_{\substack{j=1\\j\not=n}}^N
   \frac{\sinh^2\lambda-\sinh^2\xi_j^{(h_j)}}{\sinh^2\xi_n^{(h_n)}-\sinh^2\xi_j^{(h_j)}}\ 
   k_n^\epsilon\, \mathsf{A}_-(-\epsilon\xi_n^{(1-h_n)})\, \ket{\mathsf{T}^{\epsilon}_n\mathbf{h},\beta+1}
   \\
   +(-1)^N\left[ \mathrm{det}_q M(0)\,\cosh(\lambda-\eta/2)\prod_{j=1}^N\frac{\sinh^2\lambda-\sinh^2\xi_j^{(h_j)}}{\sinh^2\frac{\eta}{2}-\sinh^2\xi_j^{(h_j)}}\, \right. 
   \\
    +\left. \coth \varsigma_- \,\mathrm{det}_q M(i\pi/2)\,\sinh(\lambda-\eta/2)\prod_{j=1}^N\frac{\sinh^2\lambda-\sinh^2\xi_j^{(h_j)}}{\cosh^2\frac{\eta}{2}+\sinh^2\xi_j^{(h_j)}}\right]
     \ket{\mathbf{h},\beta+1}
    \\
    +2^{2N+1} e^{\lambda+\eta}\sinh(2\lambda-\eta)\,\prod_{j=1}^N\left[ \sinh^2\lambda-\sinh^2\xi_j^{(h_j)}\right]
     \mathcal{D}_-^\infty(\beta+1) \, \ket{\mathbf{h},\beta+1},
\end{multline}
in which
\begin{multline}
    \mathcal{D}_-^\infty(\beta+1) \, \ket{\mathbf{h},\beta+1}
    =\frac{e^{-3\eta/2+\eta(\beta+1)}}{2^{2N+1}\sinh(\eta(\beta+1))}
    \Bigg\{\Bigg[ \frac{\kappa_- \, e^{-\eta(\beta+1)} \sinh(\eta\alpha+\tau_-)}{\sinh\varsigma_-}
    \\
    +\frac{(-1)^N\det_qM(0)}{2\prod_{j=1}^N\big(\sinh^2\frac{\eta}{2}-\sinh^2\xi_j^{(h_j)}\big)}
    -\frac{(-1)^N\coth\varsigma_-\, \det_qM(i\frac{\pi}{2})}{2\prod_{j=1}^N\big(\cosh^2\frac{\eta}{2}+\sinh^2\xi_j^{(h_j)}\big)}\Bigg]\ket{\mathbf{h},\beta+1}
    \\
    +\sum_{n=1}^N\sum_{\epsilon=\pm}
    \frac{e^{\eta/2-\epsilon\xi_n^{(h_n)}}}{\sinh(2\xi_n^{(h_n)}-\epsilon\eta)\, \sinh(2\xi_n^{(h_n)})}
    \frac{k_n^\epsilon\, \mathsf{A}_-(-\epsilon\xi_n^{(1-h_n)})}{\prod\limits_{j\neq n}\big[ \sinh^2\xi_n^{(h_n)}-\sinh^2\xi_j^{(h_j)}\big]}\, \ket{\mathsf{T}^{\epsilon}_n\mathbf{h},\beta+1}
\Bigg\},
\end{multline}
and the action of $\mathcal{A}_-^\mathrm{SOS}(\lambda|\beta-1)$ on $\ket{\mathbf{h},\beta+1}$ can be obtained by means of \eqref{parity-D}.

\section{Computation of the normalization coefficient $N(\{\xi\},\beta)$}\label{Normalization-Factor}
We want to compute the following matrix element:
\begin{equation}
F(\beta)=\bra{0}\pl_{k=1}^N\mathcal{A}^{\mathrm{SOS}}_-\left(\frac \eta 2-\xi_k|\beta\right)\ket{\underline{0}},
\end{equation}
from which the value of the normalization $N(\{\xi\},\beta)$ follows. From the boundary bulk decomposition (\ref{boundary-bulk}) it is easy to see that 
\begin{multline}
      \mathcal{A}_-^{\mathrm{SOS}}(\la|\beta)
      =A^{\mathrm{SOS}}(\la|\beta)\, \mathsf{a}_-(\la|\beta+\mathbf{S}_z)\,\widehat{A}^{\mathrm{SOS}}(\la|\beta)
      +B^{\mathrm{SOS}}(\la|\beta)\, \mathsf{c}_-(\la|\beta+\mathbf{S}_z)\, \widehat{A}^{\mathrm{SOS}}(\la|\beta)
      \\
     +A^{\mathrm{SOS}}(\la|\beta)\, \mathsf{b}_-(\la|\beta+\mathbf{S}_z)\, \widehat{C}^{\mathrm{SOS}}(\la|\beta)
     +B^{\mathrm{SOS}}(\la|\beta)\, \mathsf{d}_-(\la|\beta+\mathbf{S}_z)\, \widehat{C}^{\mathrm{SOS}}(\la|\beta).
\end{multline}
Evidently only the term with $\mathsf{b}_-$ will contribute leading to the following expression
\begin{multline}
     F(\beta)
     =\pl_{k=0}^{N-1}\mathsf{b}_-(\frac \eta 2-\xi_{k+1}|\beta+N-2k)
     \\
     \times
     \bra{0}\pl_{j=1}^N\left[A^{\mathrm{SOS}}\left(\frac \eta 2-\xi_j|\beta\right)\, \widehat{C}^{\mathrm{SOS}}\left(\frac \eta 2-\xi_j|\beta\right)\right]
     \ket{\underline{0}}.
\end{multline}
By using the explicit form of the $M^{\mathrm{SOS}}$ and $\widehat{M}^{\mathrm{SOS}}$ monodromy matrices, we can now compute the above matrix element. We can show that
\begin{equation}
\bra{0}\widehat{C}^{\mathrm{SOS}}\left(\frac \eta 2-\xi_1|\beta\right)\dots \widehat{C}^{\mathrm{SOS}}\left(\frac \eta 2-\xi_k|\beta\right)=(-1)^{Nk}\left(\pl_{j=1}^k\,d\left(\xi_j-\frac \eta 2\right)\right)\,\bra{\underline{0}_k}\otimes\bra{0_{N-k}},
\end{equation}
where in the right hand side we obtain a state with first $k$ spins down and all the remaining spins up, and
\begin{multline}
           \bra{\underline{0}_k}\otimes\bra{0_{N-k}}A^{\mathrm{SOS}}(\la|\beta)
           =a(\la)\frac{\sinh(\beta+N-2k)\eta}{\sinh(\beta+N-k)\eta}\\
           \times\pl_{j=1}^{k}\frac{\sinh(\la-\xi_k-\frac\eta 2)}{\sinh(\la-\xi_k+\frac\eta 2)}
           \bra{\underline{0}_k}\otimes\bra{0_{N-k}}.
\end{multline}
These two equations lead to the following result
\begin{multline}
       F(\beta)
       =(-1)^N \left(\pl_{r=0}^{N-1}\mathsf{b}_-\left(\frac \eta 2-\xi_{r+1}|\beta+N-2r\right)\,\frac{\sinh(\beta+N-2r)\eta}{\sinh(\beta+N-r)\eta}\right)
       \\
       \times
      \left(\pl_{j=1}^N\,a\left(\frac\eta 2-\xi_j\right)\,d\left(\xi_j-\frac \eta 2\right)\right) \pl_{j<k} \frac{\sinh(\xi_j+\xi_k)}{\sinh(\xi_j+\xi_k-\eta)}.
\end{multline}


\section{Determinant identities: exchanging the role of the two sets of variables}
\label{app-det-id}

In this appendix we give a detailed proof of several  identities that we use to establish Theorem~\ref{prop-hom}. 

For three sets of arbitrary variables $\{a\}\equiv\{a_1,\ldots,a_{n_a}\}$ with $n_a\in\{2,4\}$,  $\{x\}\equiv\{x_1,\ldots,x_N\}$ and $\{z\}\equiv\{z_1,\ldots,z_M\}$, we consider the quantity
\begin{equation}\label{Abis}
     \mathcal{A}_{\{x\}}[f_{\{a\},\{z\}}]
     \equiv \mathcal{A}_{\{x_1,\ldots,x_N\}}[f_{\{a_1,\ldots,a_{n_a}\},\{z_{1},\ldots, z_M\}}],
\end{equation}
defined as in \eqref{def-Afg} in terms of a function $f_{\{a\},\{z\}}$ of the form
\begin{equation}\label{special-f}
 f_{\{a\},\{z_{1},\ldots, z_M\}}(\lambda)
   = \frac{\prod_{\ell=1}^{n_a}\sinh (\lambda+ a_\ell )}{\sinh 2\lambda}
   \prod_{\ell =1}^M 
   \frac{\varsigma(\lambda)-\varsigma (z_\ell)}{\varsigma (\lambda+ \eta /2)-\varsigma (z_\ell )}.
\end{equation}
Here and in the following, we use for simplicity the shorthand notation:
\begin{equation}\label{short-notation}
   \varsigma(\lambda)=\frac{\cosh(2\lambda)}{2}.
\end{equation}
It is also convenient to introduce the function
\begin{equation}\label{f-j}
 \bar{f}^{(j)}_{\{a\},\{z\}}(\lambda) 
   =\sum_{\epsilon =\pm 1} f_{\{a\},\{z\}}(\epsilon \lambda)\, [\varsigma(\lambda+\epsilon \eta /2)]^{j-1},
\end{equation}
where $f_{\{a\},\{z\}}$ is given by \eqref{special-f}, so that \eqref{Abis} can be simply written as
\begin{equation}\label{Ater}
    \mathcal{A}_{\{x\}}[f_{\{a\},\{z\}}]= \frac{\det_{1\le i,j\le N}\big[ \bar{f}^{(j)}_{\{a\},\{z\}}(x_i) \big]}{\widehat{V}(x_1,\ldots,x_N)}.
\end{equation}

The aim of this appendix is to express the quantity \eqref{Abis} (or equivalently \eqref{Ater}), which is a ratio of two determinants, as a new ratio of determinants in which the role of the two sets of variables $\{x\}$ and $\{z\}$ has been exchanged. When applied to scalar products, and in particular to formula~\eqref{sc-det-2}, these identities lead to Theorem~\ref{prop-hom}, which makes possible the computation of the homogeneous limit of these scalar products.

\subsection{The case $N=M$}

\begin{identity}\label{id-4-N=L}
Let $N=M$. Then
\begin{equation}\label{id-A-B-1}
    \mathcal{A}_{\{x\}}\big[ f_{\{a\},\{z\}}\big]
    =(-1)^N\,
    \mathcal{A}_{\{z\}}\big[ f_{\{\eta/2-a\},\{x\}} ,   g_{\{a\},\{x\}}\big].
\end{equation}
Here $f_{\{\eta/2-a\},\{x\}}$ is the function defined as in \eqref{special-f} in terms of the sets $\{\eta/2-a\}\equiv\{\eta/2-a_1,\ldots,\eta/2-a_{n_a}\}$ and $\{x\}\equiv\{x_1,\ldots,x_N\}$, and
\begin{equation}\label{special-g}
    g_{\{a\},\{x\}}(\lambda)=\delta_{n_a,4}\, \sinh( a_1+a_2+a_3+a_4 -\eta)\, 
    \prod_{\ell=1}^N\left[\varsigma(\lambda)-\varsigma(x_\ell)\right].
\end{equation}
\end{identity}

\begin{proof}
The proof goes along the same lines as for Identity~1 of \cite{KitMNT17}. We consider the matrices $\mathcal{C}^X$ and $\mathcal{C}^Z$ whose elements are defined respectively from the sets of variables $\{x\}$ and $\{z\}$ as:
\begin{align*}
    \prod_{\substack{\ell=1 \\ \ell\neq k}}^N\!\big(\varsigma(\lambda)-\varsigma(x_\ell)\big)
    =\sum_{j=1}^N\mathcal{C}_{j,k}^X\, [\varsigma(\lambda)]^{j-1},
    \qquad
    \prod_{\substack{\ell=1 \\ \ell\neq k}}^M\!\big(\varsigma(\lambda)-\varsigma(z_\ell)\big)
    =\sum_{j=1}^M\mathcal{C}_{j,k}^Z\, [\varsigma(\lambda)]^{j-1},
\end{align*}
and with respective determinants
\begin{equation}
     \det_N\left[\mathcal{C}^X\right]=\widehat{V}(x_N,\ldots,x_1),
     \qquad
     \det_M\left[\mathcal{C}^Z\right]=\widehat{V}(z_M,\ldots,z_1).
\end{equation}
For $M=N$, we can compute the product of the determinant of  $\mathcal{C}^Z$ with the determinant in the numerator of \eqref{Ater}
by using that
\begin{align}
   \sum_{j=1}^N \bar{f}^{(j)}_{\{a\},\{z\}}(x_i)
\, \mathcal{C}_{j,k}^Z
   &=\!\sum_{\epsilon\in\{+,-\}} \!\!  f_{\{a\},\{z\}}(\epsilon x_i)   \  \sum_{j=1}^N  \mathcal{C}_{j,k}^Z\; \varsigma\! \left(x_i+\epsilon\frac{\eta}{2}\right)^{j-1}
   \nonumber\\
   &
   = \prod_{\ell=1}^N\left[\varsigma(x_i)-\varsigma(z_\ell)\right]
   \!\sum_{\epsilon\in\{+,-\}} \frac{\epsilon\,\prod\limits_{\ell=1}^{n_a}\sinh(x_i+\epsilon a_\ell)}{\sinh(2x_i)\,[\varsigma(x_i+\epsilon\eta/2)-\varsigma(z_k)]}.
   \nonumber
\end{align}
Noticing that
\begin{multline}
  \!\sum_{\epsilon\in\{+,-\}} \frac{\epsilon\,\prod\limits_{\ell=1}^{n_a}\sinh(x_i+\epsilon a_\ell)}{\sinh(2x_i)\,[\varsigma(x_i+\epsilon\eta/2)-\varsigma(z_k)]}
  \\
  = \!\sum_{\epsilon\in\{+,-\}} \frac{\epsilon\,\prod\limits_{\ell=1}^{n_a}\sinh(z_k+\epsilon \eta/2-\epsilon a_\ell)}{\sinh(2 z_k)\,[\varsigma(z_k+\epsilon\eta/2)-\varsigma(x_i)]}
 +\delta_{n_a,4}\,c_{\{a\}},
\end{multline}
with $c_{\{a\}}= \sinh(a_1+a_2+a_3+a_4-\eta)$, we can rewrite the initial determinant as
\begin{equation}
  \det_{1\le i,j\le N}\left[ \bar{f}^{(j)}_{\{a\},\{z\}}(x_i) \right]
   =\frac{(-1)^N}{\widehat{V}(z_N,\ldots,z_1)}\,
      \lim_{\Lambda\to +\infty}\det_N \left[ \mathcal{B}_\Lambda\right],
\end{equation}
in terms of a matrix 
$ \mathcal{B}_\Lambda$ defined as
\begin{multline}
  \left[  \mathcal{B}_\Lambda \right]_{i,k}
   = \!\sum_{\epsilon\in\{+,-\}} \!\!\epsilon\,\frac{\prod\limits_{\ell=1}^{n_a}\sinh(z_i+\epsilon \frac{ \eta}{2}-\epsilon a_\ell)}{\sinh(2z_i)}
   \\
   \times
   \prod_{\ell=1}^N\frac{\varsigma(z_i)-\varsigma(x_\ell)}{\varsigma(z_i+\epsilon\frac{\eta}{2})-\varsigma(x_\ell)}\  \prod_{\substack{\ell=1 \\ \ell\neq k}}^N\left[\varsigma(z_i+\epsilon\frac{\eta}{2})-\varsigma(x_\ell)\right]
   \\
   +\delta_{n_a,4}\,c_{\{a\}}\, \varsigma(z_i+\Lambda)
   \prod_{\ell=1}^N\frac{\varsigma(z_i)-\varsigma(x_\ell)}{\varsigma(z_i+\Lambda)-\varsigma(x_\ell)}\  \prod_{\substack{\ell=1 \\ \ell\neq k}}^N\left[\varsigma(z_i+\Lambda)-\varsigma(x_\ell)\right].
\end{multline}
We can now easily factor the matrix $\mathcal{C}^X$ out of 
$ \mathcal{B}_\Lambda$, and taking the limit $\Lambda\to +\infty$, we obtain \eqref{id-A-B-1}.
\end{proof}

\subsection{The case $N<M$}

\begin{identity}\label{id-4-L>N}
Let  $N<M$. Then
\begin{equation}\label{id-A-B-2}
    \mathcal{A}_{\{x\}}\big[ f_{\{a\},\{z\}}\big]
    =(-1)^M\,
    \frac{\mathcal{A}_{\{z\}}\big[ f_{\{\eta/2-a\},\{x\}} ,  \tilde{g}^{(M)}_{\{a\},\{x\}}\big]}
           {\prod_{j=1}^{M-N}\sinh\big(  \sum_{\ell=1}^{n_a} a_\ell-j\eta\big)},
\end{equation}
with
\begin{multline}\label{gtilde-M}
   \tilde{g}^{(M)}_{\{a\},\{x\}}(\lambda)= \delta_{n_a,4}\, \Bigg\{ (-1)^{M-N}\sinh( a_1+a_2+a_3+a_4-\eta)
      \prod_{\ell=1}^N\left[\varsigma(\lambda)-\varsigma(x_\ell)\right]  \\
      -\bar{f}^{(M)}_{\{\eta/2-a\},\{x\}}(\lambda)
      \Bigg\},
\end{multline}
where $f_{\{\eta/2-a\},\{x\}}$ and $\bar{f}^{(M)}_{\{\eta/2-a\},\{x\}}$ are respectively defined as in \eqref{special-f} and \eqref{f-j} in terms of the sets $\{\eta/2-a\}\equiv\{\eta/2-a_1,\ldots,\eta/2-a_{n_a}\}$ and $\{x\}\equiv\{x_1,\ldots,x_N\}$.
\end{identity}

\begin{proof}
 Let us rewrite $\mathcal{A}_{\{x\}}\big[ f_{\{a\},\{z\}}\big]$
 as the following limit:
\begin{multline}
   \mathcal{A}_{\{x_1,\ldots,x_N\}}\big[ f_{\{a\},\{z_1,\ldots,z_M\}}\big]
   \\
   =\lim_{x_{N+1}\to +\infty}\ldots \lim_{x_M\to +\infty}
   \frac{\mathcal{A}_{\{x_1,\ldots,x_M\}}\big[ f_{\{a\},\{z_1,\ldots,z_M\}}\big]}
   {\prod_{i=N+1}^M\big[ \varsigma(x_i)^{\frac{n_a}{2}-1}\,\sinh(\sum_\ell a_\ell +(i-1-M)\eta)\big]}.
\end{multline}
By applying Identity~\ref{id-4-N=L} to 
$\mathcal{A}_{\{x_1,\ldots,x_M\}}\big[ f_{\{a\},\{z_1,\ldots,z_M\}}\big]$
and by computing the successive limits, we obtain \eqref{id-A-B-2}.
\end{proof}

\subsection{The case $M<N$}

In the case $n_a=2$, we obtain an identity which is the analog of Identity~2  of \cite{KitMNT17}, and which can be shown similarly:

\begin{identity}\label{id-2-M<N}
Let $n_a=2$ and $M<N$. Then
\begin{multline}\label{2-M<N}
   \mathcal{A}_{\{x_1,\ldots,x_N\}}[f_{\{a\},\{z_1,\ldots,z_M\}}]=
   (-1)^M \prod_{j=0}^{N-M-1}\sinh(a_1+a_2+j\eta) \\
   \times \mathcal{A}_{\{z_1,\ldots,z_M\}}[f_{\{\eta/2-a\},\{x_1,\ldots,x_N\}}].
\end{multline}
\end{identity}

The case $n_a=4$ is unfortunately much more complicated. In that case, we obtain the following result:

\begin{identity}\label{id-4-M<N}
Let $n_a=4$ and $M<N$. Then
\begin{multline}\label{4-M<N}
   \mathcal{A}_{\{x_1,\ldots,x_N\}}[f_{\{a\},\{z_1,\ldots,z_M\}}]=
   (-1)^M \prod_{j=0}^{N-M-1}\sinh\Big(j\eta+\sum_\ell a_\ell\Big)\\
   \times
    \mathcal{A}_{\{z_1,\ldots,z_M\}}[f_{\{\eta/2-a\},\{x\}}, \hat{g}^{(M)}_{\{a\},\{x\}}],
\end{multline}
where $\hat{g}^{(L)}_{\{a\},\{x\}}$ is defined by induction for $L\le N$ as
\begin{align}
  &\hat{g}^{(N)}_{\{a\},\{x\}}(z)=\sinh(a_1+a_2+a_3+a_4-\eta)\, \prod_{\ell=1}^N[\varsigma(z)-\varsigma(x_\ell)],
  \label{g-N} \\
  &\hat{g}^{(L)}_{\{a\},\{x\}}(z)=\frac{\bar{f}^{(L)}_{\{\eta/2-a\},\{x\}}(z)}{\sinh((L+1-N)\eta-\sum_\ell a_\ell)}\cdot 
  \lim_{z'\to\infty}\frac{\bar{f}^{(L+1)}_{\{\eta/2-a\},\{x\}}(z')+\hat{g}^{(L+1)}_{\{a\},\{x\}}(z')}{\varsigma(z')^{L}}
  \nonumber\\
  &\hspace{4.5cm}
    - \bar{f}^{(L)}_{\{\eta/2-a\},\{x\}}(z)  
   - \bar{f}^{(L+1)}_{\{\eta/2-a\},\{x\}}(z)-\hat{g}^{(L+1)}_{\{a\},\{x\}}(z).
   \label{rec-g}
\end{align}
in terms of the function $\bar{f}^{(L)}_{\{\eta/2-a\},\{x\}}$ defined as in \eqref{f-j} in terms of the sets $\{\eta/2-a\}\equiv\{\eta/2-a_1,\ldots,\eta/2-a_{n_a}\}$ and $\{x\}\equiv\{x_1,\ldots,x_N\}$.
\end{identity}

\begin{proof}
Since $M<N$, we can write
\begin{equation}
   f_{\{a\},\{z_1,\ldots,z_M\}}(x)
   =  \lim_{z_{M+1}\to\infty}\ldots \lim_{z_N\to\infty}
   f_{\{a\},\{z_1,\ldots,z_N\}}(x),
\end{equation}
so that
\begin{align}
   &\mathcal{A}_{\{x_1,\ldots,x_N\}}[f_{\{a\},\{z_1,\ldots,z_M\}}]
   = \lim_{z_{M+1}\to\infty}\ldots \lim_{z_N\to\infty} \mathcal{A}_{\{x_1,\ldots,x_N\}}[f_{\{a\},\{z_1,\ldots,z_N\}}]
   \nonumber\\
   &\qquad
   =  (-1)^N \lim_{z_{M+1}\to\infty}\ldots \lim_{z_N\to\infty} \mathcal{A}_{\{z_1,\ldots,z_N\}}[f_{\{\eta/2-a\},\{x_1,\ldots,x_N\}}, \hat{g}^{(N)}_{\{a\},\{x_1,\ldots,x_N\}}],
   \label{ident-lim}
\end{align}
where we have used Identity~\ref{id-4-N=L}. We want now to show by induction that the successive limits in \eqref{ident-lim} gives the right hand side of \eqref{4-M<N}. 

Let us first note that
\begin{equation}\label{lim-fj}
    \bar{f}^{(j)}_{\{\eta/2-a\},\{x\}}(z) 
    \underset{z\to\infty}{\sim} 
    \sinh\bigg((j+1-N)\eta-\sum_\ell a_\ell \bigg)\, [\varsigma(z)]^j,
\end{equation}
so that, in particular,
\begin{equation}\label{f_N+g_N}
   \bar{f}^{(N)}_{\{\eta/2-a\},\{x\}}(z) +\hat{g}^{(N)}_{\{a\},\{x\}}(z)=O\left(\varsigma(z)^{N-1}\right).
\end{equation}

Let us now suppose that, for some $L\le N$, $\hat{g}^{(L)}$ is well defined by the above recursion \eqref{g-N}-\eqref{rec-g}, that
\begin{multline}
    \lim_{z_{L+1}\to\infty}\ldots \lim_{z_N\to\infty} \mathcal{A}_{\{z_1,\ldots,z_N\}}[f_{\{\eta/2-a\},\{x\}}, \hat{g}^{(N)}_{\{a\},\{x\}}]
    =(-1)^{N-L} \\
   \times
   \prod_{j=0}^{N-L-1} \sinh\Big(j\eta+\sum_\ell a_\ell\Big)\
   \mathcal{A}_{\{z_1,\ldots,z_{L}\}}[f_{\{\eta/2-a\},\{x\}}, \hat{g}^{({L})}_{\{a\},\{x\}}],
\end{multline}
and that $\bar{f}^{(L)}_{\{\eta/2-a\},\{x\}}(z)+\hat{g}^{(L)}_{\{a\},\{x\}}(z)=O\left(\varsigma(z)^{L-1}\right)$.
Writing explicitly
\begin{equation}
   \mathcal{A}_{\{z_1,\ldots,z_{L}\}}[f_{\{\eta/2-a\},\{x\}}, \hat{g}^{({L})}_{\{a\},\{x\}}]
   =\frac{\det_{1\le i,j\le L}\left[ \bar{f}^{(j)}_{\{\eta/2-a\},\{x\}}(z_i)+\delta_{j,L}\, \hat{g}^{({L})}_{\{a\},\{x\}}(z_i)\right]}{\widehat{V}(z_1,\ldots,z_L)},
\end{equation}
and decomposing the determinant in the numerator with respect to the last column, we get
\begin{multline}
   \mathcal{A}_{\{z_1,\ldots,z_{L}\}}[f_{\{\eta/2-a\},\{x\}}, \hat{g}^{({L})}_{\{a\},\{x\}}]
   =\sum_{l=1}^L\frac{(-1)^{L+l}}{\widehat{V}(z_1,\ldots,z_L)}\,
    \det_{\substack{i\not= l\\ j\not= L}}\left[  \bar{f}^{(j)}_{\{\eta/2-a\},\{x\}}(z_i)\right]
   \\
   \times
   \left[ \bar{f}^{(L)}_{\{\eta/2-a\},\{x\}}(z_l) + \hat{g}^{({L})}_{\{a\},\{x\}}(z_l) \right].
\end{multline}
Let us remark that, in each of the first $L-1$ terms, the only dependence in $z_L$ is contained in the determinant, whereas in the last term $l=L$ it is contained in the last factor $ \left[ \bar{f}^{(L)}_{\{\eta/2-a\},\{x\}}(z_L) + \hat{g}^{({L})}_{\{a\},\{x\}}(z_L) \right]$. Taking the respective limits $z_L\to \infty$ in all these terms, we obtain, 
\begin{multline}\label{det-dec}
   \mathcal{A}_{\{z_1,\ldots,z_{L}\}}[f_{\{\eta/2-a\},\{x\}}, \hat{g}^{({L})}_{\{a\},\{x\}}]
      \underset{z_L\to\infty}{\longrightarrow} 
     \sum_{l=1}^{L-1} (-1)^{L+l}
     \sinh\bigg((L-N)\eta-\sum_\ell a_\ell \bigg)
    \\
   \times
    \frac{\det_{\substack{i\not= l, i<L\\ j< L-1}}\left[  \bar{f}^{(j)}_{\{\eta/2-a\},\{x\}}(z_i)\right]}{\widehat{V}(z_1,\ldots,z_{L-1})} \,
    \left[ \bar{f}^{(L)}_{\{\eta/2-a\},\{x\}}(z_l) + \hat{g}^{({L})}_{\{a\},\{x\}}(z_l) \right] 
   \\
   +\frac{\det_{\substack{i<L\\ j< L}}\left[  \bar{f}^{(j)}_{\{\eta/2-a\},\{x\}}(z_i)\right]}{\widehat{V}(z_1,\ldots,z_{L-1})} \,
    \lim_{z'\to\infty}\frac{\bar{f}^{(L)}_{\{\eta/2-a\},\{x\}}(z')+\hat{g}^{(L)}_{\{a\},\{x\}}(z')}{\varsigma(z')^{L-1}},
\end{multline}
in which we have used  \eqref{lim-fj} for taking the limit in the first $L-1$ terms, and the fact that the remaining limit is well defined in the last term.
Hence, recomposing the determinant in \eqref{det-dec}, we obtain that
\begin{multline}
    \lim_{z_{L}\to\infty} \mathcal{A}_{\{z_1,\ldots,z_{L}\}}[f_{\{\eta/2-a\},\{x\}}, \hat{g}^{({L})}_{\{a\},\{x\}}]
    = -\sinh\bigg((N-L)\eta+\sum_\ell a_\ell \bigg)\\
    \times
    \mathcal{A}_{\{z_1,\ldots,z_{L}\}}[f_{\{\eta/2-a\},\{x\}}, \hat{g}^{({L-1})}_{\{a\},\{x\}}]
\end{multline}
with
\begin{multline}
   \bar{f}^{(L-1)}_{\{\eta/2-a\},\{x\}}(z)+\hat{g}^{(L-1)}_{\{a\},\{x\}}(z)
   =-\bar{f}^{(L)}_{\{\eta/2-a\},\{x\}}(z)-\hat{g}^{(L)}_{\{a\},\{x\}}(z)
   \\
   +\frac{\bar{f}^{(L-1)}_{\{\eta/2-a\},\{x\}}(z)}{\sinh((L-N)\eta-\sum_\ell a_\ell)}\
  \lim_{z'\to\infty}\frac{\bar{f}^{(L)}_{\{\eta/2-a\},\{x\}}(z')+\hat{g}^{(L)}_{\{a\},\{x\}}(z')}{\zeta(z')^{L-1}}.
\end{multline}
It remains to notice that, by construction
\begin{equation}
  \bar{f}^{(L-1)}_{\{\eta/2-a\},\{x\}}(z)+\hat{g}^{(L-1)}_{\{a\},\{x\}}(z)=O\left(\varsigma(z)^{L-2}\right),
\end{equation}
which ends the proof of the recursion.
\end{proof}

\section{Determinant identities: transformation into generalized Slavnov  determinants}
\label{app-Slavnov}

In this appendix, we explain how to transform quantities of the form \eqref{def-Afg}, for two arbitrary functions $f$ and $g$ and a set of arbitrary parameters $\{z_1,\ldots,z_N\}\equiv\{x_1,\ldots,x_{L_1}\}\cup\{y_1,\ldots,y_{L_2}\}$, into some generalization of the Slavnov determinant \cite{Sla89}.

Throughout this appendix we will use  the following shortcut notations:
\begin{align}
      &X(\lambda)=\prod_{\ell=1}^{L_1}\big[\varsigma(\la)-\varsigma(x_\ell)\big],\\
      &X_k(\lambda)=\prod_{\ell\neq k}\big[\varsigma(\la)-\varsigma(x_\ell)\big],
\end{align}
\begin{equation}
    \varphi_{\{x\}}(\lambda)=\frac{\sinh(2\lambda-\eta)}{\sinh(2\lambda+\eta)}
    \frac{X(\lambda+\eta)}{X(\lambda-\eta)},
\end{equation}
and
\begin{align}
   X^{g}_{f,k} =& - \frac{g(x_k)}{f(-x_k)\,\sinh(2x_k)\,\sinh\eta\,X_k(x_k)\, X_k(x_k-\eta)},\\
   =& \frac{g(x_k)\,\sinh(2x_k-\eta)}{f(-x_k)\,X'(x_k)\, X(x_k-\eta)},
\end{align}
with $\varsigma(\lambda)$ given as in \eqref{short-notation}.

\subsection{A simple case: $L_1=L_2$ with one on-shell set of parameters}

\begin{identity}
\label{slavnov_square}
We suppose that $L_1=L_2\equiv L$, and that the parameters  $x_1,\ldots,x_L$  are on-shell, i.e. that they satisfy the equations:
\begin{equation}
\label{Bethe_withf}
  f(-x_k)-f(x_k)\, \varphi_{\{x\}}(x_k)=0,\quad k=1,\dots,L.
\end{equation}
Then
\begin{multline}
    \mathcal{A}_{\{x\}\cup\{y\}}[f, g]
  = \pl_{j=1}^L\Big(\sinh\eta\,f(-x_j)\sinh2x_j\Big)\,\,\frac{\widehat{V}(x_1-\frac\eta2,\dots, x_L-\frac\eta2)}{\widehat{V}(x_1+\frac\eta2,\dots, x_L+\frac\eta2)}
  \\
  \times
  \left( 1+\sum_{k=1}^LX^{g}_{f,k}\right)\ 
  \frac{\det_{L}\left[\sum_{\epsilon\in\{+,-\}} 
      f(\epsilon y_i)\, 
      \frac{X_k(y_i+\epsilon\eta)}{\varsigma(y_i)-\varsigma(x_k)} \right]}
     {\widehat{V}(x_L,\ldots,x_1)\, \widehat{V}(y_1,\ldots,y_L)} .
     \label{id_slavnov_square}
 \end{multline}

\end{identity}

\begin{proof}
 We introduce an auxiliary $2L\times 2L$ matrix $\widetilde{\mathcal{D}}$ with coefficients $\widetilde{\mathcal{D}}_{j,k}$ given by the following relations:
\begin{alignat}{2}
   &X_k^{(+)}(\lambda)\ X_k^{(-)}(\lambda)=\sum_{j=1}^{2L} \widetilde{\mathcal{D}}_{j,k}\ \varsigma(\lambda)^{j-1},
    & \qquad &1\le k\le \! L,  \nonumber\\
   &X^{(+)}(\lambda)\ X_k^{(-)}(\lambda)=\sum_{j=1}^{2L} \widetilde{\mathcal{D}}_{j, L+k}\ \varsigma(\lambda)^{j-1},    
     & \qquad &1\le k\le \! L.
\end{alignat}
Here we have defined $X^{(\pm)}(\lambda)$,  $X_k^{(\pm)}(\lambda)$ as the following polynomials  in $\varsigma(\lambda)$:
\begin{align}
     &X^{(\pm)}(\lambda)=\prod_{\ell=1}^L\big[\varsigma(\la)-\varsigma(x_\ell\pm \eta/2)\big],
   \label{X_pm}  \\  \label{X_k_pm}
     &X^{(\pm)}_k(\lambda)=\prod_{\substack{\ell=1 \\ \ell\not= k}}^L\big[\varsigma(\la)-\varsigma(x_\ell\pm \eta/2)\big],
     \quad 1\le k\le L.
     \end{align}
%

The determinant of this matrix can be computed as in \cite{KitMNT16,KitMNT17}. We obtain
\begin{equation}
\det_{2L}  \widetilde{ \mathcal{D}}
= \widehat{V}\Big(x_L+\frac{\eta}{2},\ldots,x_1+\frac{\eta}{2}\Big)\, \widehat{V}\Big(x_L-\frac{\eta}{2},\ldots,x_1-\frac{\eta}{2}\Big)\,\pl_{k=1}^L  X_k^{(-)}\Big(x_k+\frac{\eta}{2}\Big). 
\end{equation}

Computing the product of $\mathcal{A}_{\{x\}\cup\{y\}}[f, g]$ with the determinant of the matrix $\widetilde{ \mathcal{D}}$, we obtain
 \begin{equation}
    \mathcal{A}_{\{x\}\cup\{y\}}[f, g] \cdot  \det_{2L} \widetilde{\mathcal{D}}
  = \frac{\det_{2L}\mathcal{G}}{\widehat{V}(x_1,\ldots,x_L,y_1,\ldots,y_L)},
\end{equation}
where $\mathcal{G}$ is a block matrix:
\begin{equation}\label{mat-G-blocs}
   \mathcal{G}=   \det_{2L}\begin{pmatrix}
 \mathcal{G}^{(1,1)}&\mathcal{G}^{(1,2)}\\
 \mathcal{G}^{(2,1)}&\mathcal{G}^{(2,2)}
       \end{pmatrix}.
\end{equation}

The blocks  $ \mathcal{G}^{(a,b)}$ in \eqref{mat-G-blocs} are $L\times L$ matrices.
More precisely, the first block $ \mathcal{G}^{(1,1)}$ is the following diagonal matrix: 
\begin{align}
  \mathcal{G}^{(1,1)}_{i,k}
  = \delta_{i,k} \Bigg(  f(- x_i)\    & X_i^{(+)}\Big(x_i-\frac{\eta}{2}\Big)\ X_i^{(-)}\Big(x_i-\frac{\eta}{2}\Big)  \nonumber\\
 & + f( x_i)\
     X_i^{(+)}\Big(x_i+\frac{\eta}{2}\Big)\ X_i^{(-)}\Big(x_i+\frac{\eta}{2}\Big)\Bigg).
\end{align}
Using simple trigonometric relation,
\begin{equation*}
\big[\varsigma(\la\pm \eta/2)-\varsigma(x_\ell+ \eta/2)\big]\big[\varsigma(\la\pm \eta/2)-\varsigma(x_\ell- \eta/2)\big]
=\big[\varsigma(\la)-\varsigma(x_\ell)\big]\big[\varsigma(\la\pm \eta)-\varsigma(x_\ell)\big],
\end{equation*}
we can rewrite this matrix as
\begin{equation*}
  \mathcal{G}^{(1,1)}_{i,k}
  = -\delta_{i,k} X_i(x_i)\left(  f(- x_i)\     \frac{X(x_i-\eta)}{\sinh (2x_i-\eta)\sinh\eta} 
  - f( x_i)\    \frac{X(x_i+\eta)}{\sinh (2x_i+\eta)\sinh\eta} 
   \right),
\end{equation*}
which vanishes due to the equations \eqref{Bethe_withf}. 
It means in particular that the explicit form of the matrix  $ \mathcal{G}^{(2,2)}$ is irrelevant and that
\begin{equation}
\det_{2L}\mathcal{G}=(-1)^L\,\det_L \mathcal{G}^{(1,2)}\, \det_L  \mathcal{G}^{(2,1)}.
\end{equation}
The matrix $ \mathcal{G}^{(1,2)}$ is  a diagonal matrix with a rank 1 addition:
\begin{equation}
 \mathcal{G}^{(1,2)}_{i,k}
  = \delta_{i,k} \,  f(- x_i)\,     X^{(+)}\Big(x_i-\frac{\eta}{2}\Big)\, X_i^{(-)}\Big(x_i-\frac{\eta}{2}\Big) +g(x_i). 
\end{equation}
Its determinant can easily be computed:
\begin{align}
\label{det_with_g}
\det_L \mathcal{G}^{(1,2)}_{i,k}
  = &\pl_{k=1}^L   \Big(-\sinh\eta\ \sinh2x_k\ f(- x_k)\     X_k(x_k)\ X_k(x_k-\eta)\Big)\nonumber\\
  &\times\left( 1-\sum_{j=1}^L\frac{g(x_j)}{f(-x_j)\sinh 2x_j\sinh\eta\, X_j(x_j)X_j(x_j-\eta)}\right). 
\end{align}
Finally, the matrix  $ \mathcal{G}^{(2,1)}$ is a Slavnov-type matrix:
\begin{align}
  \mathcal{G}^{(2,1)}_{i,k}
  =& \sum_{\epsilon\in\{+,-\}} \!\!
      f(\epsilon y_i)\  X_k^{(+)} \Big(y_i+\epsilon\frac\eta 2\Big)X_k^{(-)} \Big(y_i+\epsilon\frac\eta 2\Big)\nonumber\\
      =& \sum_{\epsilon\in\{+,-\}} \!\!
      f(\epsilon y_i)\  X_k (y_i)X_k (y_i+\epsilon\eta ).
\end{align}
Now combining all the terms together we obtain the identity \eqref{id_slavnov_square}.
\end{proof}

%
%
 
Let us mention that, if $f$ coincides with the function $f_{\boldsymbol{\varepsilon},\boldsymbol{\varepsilon}}$ \eqref{feps}, then the system of equations \eqref{Bethe_withf} coincides with the Bethe equations resulting from the homogeneous T-Q functional equation \eqref{hom-TQ}.

We would also like to stress that in \eqref{id_slavnov_square} the function $g(\la)$ appears only in the irrelevant normalization coefficients. 

\subsection{The case of two arbitrary sets of parameters}\label{SP-general case}

Let us now turn to the most general situation in which the two sets of variables, as well as the functions $f$ and $g$, are arbitrary. We will distinguish two different cases according to whether the cardinality of the two sets of variables are equal ($L_1=L_2\equiv L$) or different ($L_1<L_2$).

\begin{identity}\label{id-Slav_gen=}
Let $L_1=L_2\equiv L$. Then
\begin{multline}
    \mathcal{A}_{\{x\}\cup\{y\}}[f, g]
  =  \frac{\widehat{V}(x_1-\frac\eta2,\dots, x_L-\frac\eta2)}{\widehat{V}(x_1+\frac\eta2,\dots, x_L+\frac\eta2)}\ 
  \left( 1+\sum_{k=1}^LX^{g}_{f,k}\right)
  \\
  \times
  \frac{\det_{L}  \bar{\mathcal{S}}_{\mathbf{x},\mathbf{y}}[f,g]}
  {\widehat{V}(x_L,\ldots,x_1)\, \widehat{V}(y_1,\ldots,y_L)},
  \label{id_slavnov_gen_square}
\end{multline}
where the $L\times L$ matrix  $\bar{\mathcal{S}}_{\mathbf{x},\mathbf{y}}$ is given by
\begin{multline}\label{mat-Slav-N=L}
   \big[\, \bar{\mathcal{S}}_{\mathbf{x},\mathbf{y}}[f,g]\, \big]_{i,k}
   =\!\!\sum_{\epsilon\in\{+,-\}} \!\!
     f(\epsilon y_i)
     \  X(y_i+\epsilon\eta)\
     \Bigg[ \frac{f(-x_k)}{\varsigma(y_i+\epsilon\frac{\eta}{2})-\varsigma(x_k+\frac{\eta}{2})}
     \\
     -
     \frac{f(x_k)\, \varphi_{\{x\}}(x_k)}
     {\varsigma(y_i+\epsilon\frac{\eta}{2})-\varsigma(x_k-\frac{\eta}{2})}
     +\frac{f(-x_k)-f(x_k)\, \varphi_{\{x\}}(x_k)}{1+\sum_{\ell=1}^L X^{g}_{f,\ell}}
     \sum_{j=1}^L\frac{X^{g}_{f,j}}{\varsigma(y_i+\epsilon\frac{\eta}{2})-\varsigma(x_j-\frac{\eta}{2})}
     \Bigg]\\
     +
     \frac{g(y_i)}{X(y_i)}
     \frac{f(-x_k)-f(x_k)\, \varphi_{\{x\}}(x_k)}{1+\sum_{\ell=1}^L X^{g}_{f,\ell}}.
\end{multline}
%
%
%
\end{identity}

\begin{rem}
All the extra terms with respect to Identity \ref{slavnov_square} are proportional to the quantity $f(-x_k)-f(x_k)\, \varphi_{\{x\}}(x_k)$ which vanishes if the equations \eqref{Bethe_withf} are satisfied. 
\end{rem}

\begin{proof}
The proof follows the same lines as the proof of Identity \ref{slavnov_square} with a slightly different auxiliary $(2L)\times (2L)$ matrix $\widetilde{\mathcal{C}}$ with coefficients $\widetilde{\mathcal{C}}_{j,k}$ given by the following relations:
\begin{alignat}{2}
   &X^{(+)}(\lambda)\ X_k^{(-)}(\lambda)=\sum_{j=1}^{2L} \widetilde{\mathcal{C}}_{j,k}\ \varsigma(\lambda)^{j-1},
    & \qquad &1\le k\le \! L,  \nonumber\\
   &X_k^{(+)}(\lambda)\ X^{(-)}(\lambda)=\sum_{j=1}^{2L} \widetilde{\mathcal{C}}_{j, L+k}\ \varsigma(\lambda)^{j-1},    
     & \qquad &1\le k\le \! L.
\end{alignat}
Here the polynomials $X^{(\pm)}(\lambda)$ and  $X_k^{(\pm)}(\lambda)$ are defined by (\ref{X_pm}) and (\ref{X_k_pm}). %
The determinant of the matrix $\widetilde{\mathcal{C}}$ can easily be computed:
\begin{equation}
\det_{2L}  \widetilde{ \mathcal{C}}
= \widehat{V}\Big(x_L+\frac{\eta}{2},\ldots,x_1+\frac{\eta}{2}\Big)\, \widehat{V}\Big(x_L-\frac{\eta}{2},\ldots,x_1-\frac{\eta}{2}\Big)\,\pl_{k=1}^L  X^{(+)}\Big(x_k-\frac{\eta}{2}\Big). 
\end{equation}
Similarly as in  the proof of Identity~\ref{slavnov_square}, we can write
 \begin{equation}
    \mathcal{A}_{\{x\}\cup\{y\}}[f, g] \cdot  \det_{2L} \widetilde{\mathcal{C}}
  = \frac{\det_{2L}\mathcal{G}}{\widehat{V}(x_1,\ldots,x_L,y_1,\ldots,y_L)},
\end{equation}
with
\begin{equation}
   \mathcal{G}=\begin{pmatrix}
 \widetilde{ \mathcal{G}}^{(1,1)}& \widetilde{\mathcal{G}}^{(1,2)}\\
 \widetilde{ \mathcal{G}}^{(2,1)}& \widetilde{\mathcal{G}}^{(2,2)}
 \end{pmatrix}.
\end{equation}
The matrices  $\widetilde{\mathcal{G}}^{(a,b)}$ can here be written in the following form:
\begin{equation*}
    \widetilde{\mathcal{G}}_{i,k}^{(1,b)}=\,\mathcal{G}^{(1,b)}_{i,k} + g(x_i),
        \qquad\ \widetilde{\mathcal{G}}_{i,k}^{(2,b)}=\,\mathcal{G}^{(2,b)}_{i,k} + g(y_i), \quad b=1,2,
\end{equation*}
where  $ \mathcal{G}^{(1,1)}$ and $\mathcal{G}^{(1,2)}$ are  $L\times L$ diagonal matrices with elements
\begin{align}
  &\mathcal{G}^{(1,1)}_{i,k}
  = \delta_{i,k}\  f(- x_i)\     X^{(+)}\Big(x_i-\frac{\eta}{2}\Big)\ X_i^{(-)}\Big(x_i-\frac{\eta}{2}\Big),\label{G11}
  \\
  &\mathcal{G}^{(1,2)}_{i,k}
  =\delta_{i,k}\  f( x_i)\
     X_i^{(+)}\Big(x_i+\frac{\eta}{2}\Big)\ X^{(-)}\Big(x_i+\frac{\eta}{2}\Big),
     \label{G12}
\end{align}
and $\mathcal{G}^{(2,1)}$ and $\mathcal{G}^{(2,2)}$ are $L\times L$ matrices with elements
\begin{align}
   &\mathcal{G}^{(2,1)}_{i,k}
   =   \!\sum_{\epsilon\in\{+,-\}}\!\!
      f(\epsilon y_i)\
     \frac{X^{(+)}(y_i+\epsilon\frac{\eta}{2})\ X^{(-)}(y_i+\epsilon\frac{\eta}{2})}
             {\varsigma(y_i+\epsilon\frac{\eta}{2})-\varsigma(x_k-\frac{\eta}{2})},
\label{G21}    
    \\
   &\mathcal{G}^{(2,2)}_{i,k}
   =
     \!\sum_{\epsilon\in\{+,-\}}\!\!
      f(\epsilon y_i)\
     \frac{X^{(+)}(y_i+\epsilon\frac{\eta}{2})\ X^{(-)}(y_i+\epsilon\frac{\eta}{2})}
             {\varsigma(y_i+\epsilon\frac{\eta}{2})-\varsigma(x_k+\frac{\eta}{2})}.
   \label{G22}
\end{align}
%
%
We now use the following formula for the determinant of block matrix:
\begin{equation}
  \det_{2L}\begin{pmatrix}
 \widetilde{ \mathcal{G}}^{(1,1)}& \widetilde{\mathcal{G}}^{(1,2)}\\
      \widetilde{\mathcal{G}}^{(2,1)}& \widetilde{\mathcal{G}}^{(2,2)}
 \end{pmatrix}= \det_L  \widetilde{\mathcal{G}}^{(1,1)} \det_{L}\left(  \widetilde{\mathcal{G}}^{(2,2)} - \widetilde{\mathcal{G}}^{(2,1)}\left.{\widetilde{\mathcal{G}}^{(1,1)}}\right.^{-1} \widetilde{\mathcal{G}}^{(1,2)}\right).
\end{equation}
Since   $ \widetilde{\mathcal{G}}^{(1,1)}$ is the sum of a diagonal invertible  matrix with a rank 1 matrix it is possible to compute its determinant
\begin{equation}
\det  \widetilde{\mathcal{G}}^{(1,1)}=\left(1+\sum_{j=1}^L\left(\mathcal{G}^{(1,1)}_{j,j}\right)^{-1}\, g(x_j)\right)\,\det \mathcal{G}^{(1,1)},
\end{equation}
and its inverse   by the Sherman-Morrison formula:
\begin{equation}
   \left( \widetilde{\mathcal{G}}^{(1,1)}\right)^{-1}_{i,k}= \delta_{i,k}\left( \mathcal{G}^{(1,1)}_{i,i}\right)^{-1}
   -\frac{\left(\mathcal{G}^{(1,1)}_{i,i}\right)^{-1}\, g(x_i)\, \left(\mathcal{G}^{(1,1)}_{k,k}\right)^{-1}}{1+\sum_{j=1}^L\left(\mathcal{G}^{(1,1)}_{j,j}\right)^{-1}\, g(x_j)}.
\end{equation}
Then after straightforward but cumbersome computations we obtain the expression \eqref{id_slavnov_gen_square}.
\end{proof}

\begin{identity}\label{id-Slav_gen>}
Let $L_1<L_2$. Then
\begin{equation}
    \mathcal{A}_{\{x\}\cup\{y\}}[f, g]
  =  \frac{\widehat{V}(x_1-\frac\eta2,\dots, x_{L_1}-\frac\eta2)}{\widehat{V}(x_1+\frac\eta2,\dots, x_{L_1}+\frac\eta2)}\
    \frac{\det_{L_2}  \widetilde{\mathcal{S}}_{\mathbf{x},\mathbf{y}}[f,g]}
  {\widehat{V}(x_{L_1},\ldots,x_1)\, \widehat{V}(y_1,\ldots,y_{L_2})},
\label{id_slavnov_gen_rect}
\end{equation}
where
\begin{multline}
   \big[\, \widetilde{\mathcal{S}}_{\mathbf{x},\mathbf{y}}[f,g]\, \big]_{i,k}
   =\!\!\sum_{\epsilon\in\{+,-\}} \!\!
      f(\epsilon y_i)\ 
     X(y_i+\epsilon\eta)
     \\
     \times
     \left[ \frac{f(-x_k)}{\varsigma(y_i+\epsilon\frac{\eta}{2})-\varsigma(x_k+\frac{\eta}{2})}
     -
     \frac{f(x_k)\, \varphi_{\{x\}}(x_k)}
     {\varsigma(y_i+\epsilon\frac{\eta}{2})-\varsigma(x_k-\frac{\eta}{2})}\right]
     \quad \text{if } k\le L_1,
\end{multline}
and
\begin{multline}
   \big[\, \widetilde{\mathcal{S}}_{\mathbf{x},\mathbf{y}}[f,g]\, \big]_{i,k}
   =\!\!\sum_{\epsilon\in\{+,-\}} \!\!
      f(\epsilon y_i)\ 
     X(y_i+\epsilon\eta)\  
     \Bigg\{ \varsigma(y_i+\epsilon{\eta}/{2})^{k-L_1-1}
     \\
     -\delta_{k,L_2}\, \sum_{j=1}^{L_1}\frac{X^{g}_{f,j}}{\varsigma(y_i+\epsilon\frac{\eta}{2})-\varsigma(x_j-\frac{\eta}{2})} \Bigg\}
     +\delta_{k,L_2}\,  \frac{g(y_i)}{X(y_i)}
     \qquad \text{if } k >L_1.
\end{multline}

\end{identity}

\begin{proof}
Once again, the proof follows the lines of \cite{KitMNT17}. As before we introduce an auxiliary $(L_1+L_2)\times (L_1+L_2)$ matrix $\widetilde{\mathcal{C}}$ with coefficients $\widetilde{\mathcal{C}}_{j,k}$ given by the following relations:
\begin{alignat}{2}
   &X^{(+)}(\lambda)\ X_k^{(-)}(\lambda)=\sum_{j=1}^{L_1+L_2} \widetilde{\mathcal{C}}_{j,k}\ \varsigma(\lambda)^{j-1},
    & \qquad &1\le k\le \! L_1,  \nonumber\\
   &X_k^{(+)}(\lambda)\ X^{(-)}(\lambda)=\sum_{j=1}^{L_1+L_2} \widetilde{\mathcal{C}}_{j, L_1+k}\ \varsigma(\lambda)^{j-1},    
     & \qquad &1\le k\le \! L_1, \nonumber\\
   &X^{(+)}(\lambda)\ X^{(-)}(\lambda)\ W_k(\lambda)=\sum_{j=1}^{L_1+L_2} \widetilde{\mathcal{C}}_{j,2L_1+k}\ \varsigma(\lambda)^{j-1},
   & \qquad &1\le k\le L_2-L_1.
\end{alignat}
Here the polynomials $X^{(\pm)}(\lambda)$   are defined by (\ref{X_pm}) while $X_k^{(\pm)}(\lambda)$ by (\ref{X_k_pm}) and $W_k(\lambda)$ is the following polynomials  in $\varsigma(\lambda)$:
\begin{align}
     &
     W_k(\lambda)=\prod_{\substack{\ell=1 \\ \ell\not= k}}^S\big[\varsigma(\lambda)-\varsigma(w_\ell)\big],
     \quad 1\le k\le S=L_2-L_1,
\end{align}
where $w_1,\ldots,w_S$  are arbitrary pairwise distinct auxiliary variables.
Note in particular that, for $k\le 2L_1$,  $\widetilde{\mathcal{C}}_{j,k}=0$ if $j>2L_1$ and $\widetilde{\mathcal{C}}_{2L_1,k}=1$. Note also that $\widetilde{\mathcal{C}}_{L_1+L_2,k}=1$ for $k>2L_1$.
The determinant of this matrix can be computed as in \cite{KitMNT16,KitMNT17}. We obtain
\begin{equation}
\det_{L_1+L_2}  \widetilde{ \mathcal{C}}
= \widehat{V}(w_S,\ldots,w_1)\, 
   \widehat{V}\Big(x_{L_1}+\frac{\eta}{2},\ldots,x_1+\frac{\eta}{2},x_{L_1}-\frac{\eta}{2},\ldots,x_1-\frac{\eta}{2}\Big). 
\end{equation}

Computing the product of $  \mathcal{A}_{\{x\}\cup\{y\}}[f, g] $
with the determinant of the matrix $\widetilde{ \mathcal{C}}$, we obtain:
 \begin{equation}
      \mathcal{A}_{\{x\}\cup\{y\}}[f, g] 
 \cdot  \det_{L_1+L_2} \widetilde{\mathcal{C}}
  = \frac{\det_{L_1+L_2}\mathcal{G}}{\widehat{V}(x_1,\ldots,x_{L_1},y_1,\ldots,y_{L_2})} ,
 \end{equation}
 where  $\mathcal{G}$ is given as the following block matrix:
 \begin{equation}
 \mathcal{G}
 =\begin{pmatrix}
 \mathcal{G}^{(1,1)}&\mathcal{G}^{(1,2)}&\mathcal{G}^{(1,3)}\\
 \mathcal{G}^{(2,1)}&\mathcal{G}^{(2,2)}&\mathcal{G}^{(2,3)}
 \end{pmatrix}.
 \end{equation}
In this expression, $ \mathcal{G}^{(1,1)}$ and $\mathcal{G}^{(1,2)}$ are  $L_1\times L_1$ matrices with elements given by \eqref{G11} and \eqref{G12}
for $1\le i,k\le L_1$, whereas $\mathcal{G}^{(2,1)}$ and $\mathcal{G}^{(2,2)}$ are $L_2\times L_1$ matrices with elements given by \eqref{G21} and \eqref{G22}
%
for $1\le i\le L_2$ and $1\le k\le L_1$.
Finally,  we have extra blocks $\mathcal{G}^{(1,3)}$ and $\mathcal{G}^{(2,3)}$.
$\mathcal{G}^{(1,3)}$ is a $L_1\times S$ rank 1 matrix with elements:
\begin{align}
  \mathcal{G}^{(1,3)}_{i,k}= g(x_i) = \sum_{j=1}^S \mathcal{C}_{j,k}^W\, \big\{ \delta_{j,S}\, g(x_i) \big\},
\end{align}
for $1\le i\le L_1$ and $1\le k \le S$, whereas $\mathcal{G}^{(2,3)}$ is a $L_2\times S$ matrix with elements
\begin{align}
     \mathcal{G}^{(2,3)}_{i,k}
    &=
     \!\sum_{\epsilon\in\{+,-\}}\!\!
      f(\epsilon y_i)\      X^{(+)}\Big(y_i+\epsilon\frac{\eta}{2}\Big)\ X^{(-)}\Big(y_i+\epsilon\frac{\eta}{2}\Big)\
     W_k\Big(y_i+\epsilon\frac{\eta}{2}\Big)
     + g(y_i)
     \nonumber\\
    &=\sum_{j=1}^S \mathcal{C}_{j,k}^W \  \Bigg\{
     \sum_{\epsilon\in\{+,-\}} \!\!
     f(\epsilon y_i)\,     X^{(+)}\Big(y_i+\epsilon\frac{\eta}{2}\Big)\, X^{(-)}\Big(y_i+\epsilon\frac{\eta}{2}\Big)\
      \varsigma(y_i+\epsilon\eta/2)^{ j-1}
      \nonumber\\
      &\hspace{5cm}
     + \delta_{j,S}\, g(y_i)\Bigg\}.
\end{align}
In these expressions, $\mathcal{C}^W$ is the $S\times S$ matrix with determinant $\widehat{V}(w_S,\ldots,w_1)$ defined from the set of variables $\{w_1,\ldots,w_S\}$ by the relations
\begin{equation}
W_k(\lambda)=\sul_{j=1}^S \mathcal{C}^W_{j,k}\ \varsigma(\lambda)^{j-1}.
\end{equation}
Hence we obtain
\begin{equation}\label{det-G-1}
    \det_{L_1+L_2}\mathcal{G}
    =\det_{L_1+L_2}\begin{pmatrix}
 \mathcal{G}^{(1,1)}&\mathcal{G}^{(1,2)}& \widetilde{\mathcal{G}}^{(1,3)}\\
 \mathcal{G}^{(2,1)}&\mathcal{G}^{(2,2)}&\widetilde{\mathcal{G}}^{(2,3)}
       \end{pmatrix}
       \cdot \widehat{V}(w_S,\ldots,w_1),
\end{equation}
where the elements of the $L_1\times S$ block  $\widetilde{\mathcal{G}}^{(1,3)}$ and of the $L_2\times S$ block $\widetilde{\mathcal{G}}^{(2,3)}$ are respectively given by
\begin{align*}
 &\widetilde{ \mathcal{G}}^{(1,3)}_{i,k}= \delta_{k,S}\, g(x_i),
  \\
 &\widetilde{ \mathcal{G}}^{(2,3)}_{i,k}=\!\!\sum_{\epsilon\in\{+,-\}} \!\!
      f(\epsilon y_i)\,
        X^{(+)}\Big(y_i+\epsilon\frac{\eta}{2}\Big)\, X^{(-)}\Big(y_i+\epsilon\frac{\eta}{2}\Big)\
     \varsigma(y_i+\epsilon {\eta}/{2})^{k-1}
     + \delta_{k,S}\, g(y_i).
\end{align*}
%

 It is easy to compute by blocks the determinant in \eqref{det-G-1} using the fact that $\mathcal{G}^{(1,1)}$ is a diagonal invertible matrix. It leads to the final expression \eqref{id_slavnov_gen_rect}.
%
%
%
%
%
\end{proof}


\end{document}